\theoremstyle{definition}
\newtheorem{example}{Example}[section]
\newtheorem{lemma}{Lemma}
\newtheorem{theorem}{Theorem}
\newcommand{\set}[1]{\left\lbrace #1 \right\rbrace}       % sets
\newcommand{\false}{\textit{false}}
\newcommand{\defeq}{\triangleq}
\definecolor{black}{RGB}{0,0,0}
\definecolor{green}{RGB}{25,160,80}
\definecolor{red}{RGB}{192,57,43}
\definecolor{blue}{RGB}{41,128,185}
\definecolor{orange}{RGB}{255,99,0}
\definecolor{gray}{RGB}{75,75,75}
\definecolor{lightgrey}{RGB}{240,240,240}
\definecolor{purple}{RGB}{155, 89, 182}
\definecolor{darkgrey}{RGB}{52, 73, 94}
\lstdefinestyle{base}{
  language=C,
  emptylines=1,
  breaklines=true,
  basicstyle=\color{black}\fontfamily{pzc}\selectfont\tt,
  commentstyle=\color{gray}\rm\itshape,
  keywordstyle=\rm\bfseries,
  identifierstyle=\rm\itshape,
  escapechar=|,
  morekeywords=[1]{then,do},
  morekeywords=[2]{assert},
  morekeywords=[3]{requires,ensures},
  keywordstyle	= [2]\color{red}\bf,
  keywordstyle	= [3]\bf\color{gray},
  numberstyle=\footnotesize\color{gray},
  moredelim=**[is][\bf\color{green}]{@!}{!@},
  moredelim=**[is][\bf\color{red}]{@?}{?@},
  literate=
    {<=}{{$\leq$}}1
    {>=}{{$\geq$}}1
    {!}{{$\neg$}}1
    {!=}{{$\neq$}}1
    {||}{{$\lor$}}1
    {&&}{{$\land$}}1
    {->}{{$\rightarrow$}}1
    {_1}{$_{1}$}2
    {_2}{$_{2}$}2
    {_3}{$_{3}$}2
}
\newcommand{\cinline}[1]{\mbox{\lstinline[style=base,mathescape]{#1}}}
\DeclareFontFamily{U}  {MnSymbolC}{}
\DeclareFontShape{U}{MnSymbolC}{m}{n}{
    <-6>  MnSymbolC5
   <6-7>  MnSymbolC6
   <7-8>  MnSymbolC7
   <8-9>  MnSymbolC8
   <9-10> MnSymbolC9
  <10-12> MnSymbolC10
  <12->   MnSymbolC12}{}
\DeclareSymbolFont{MnSyC}{U}{MnSymbolC}{m}{n}
\DeclareMathSymbol{\righthalfcup}{\mathrel}{MnSyC}{184}
\newenvironment{mexample}[1][]{%
   \begin{example}%
   \pushQED{\qed}%
 }{%
   \popQED%
 \end{example}%
}
\newcommand{\zak}[1]{{\color{blue}Zak says: #1}}
\newcommand{\nic}[1]{{\color{red}Nic says: #1}}
\newcommand{\shaowei}[1]{{\color{green}Shaowei: #1}}
\renewcommand{\zak}[1]{}
\renewcommand{\nic}[1]{}
\renewcommand{\shaowei}[1]{}
\pgfplotsset{compat=1.17}
\newcommand{\conv}[1]{\textit{conv}\left(#1\right)}
\newcommand{\transpose}[1]{#1^T}
\newcommand{\reduce}{\mathbf{red}}
\newcommand{\Qplus}{\mathbb{Q}^{\geq 0}}
\newcommand{\Zplus}{\mathbb{Z}^{\geq 0}}
\newcommand{\tuple}[1]{\left( #1 \right)}     % n-tuples
\newcommand{\product}[2]{#1\!\left\langle #2 \right\rangle}
\newcommand{\qideal}[1]{\left\langle #1 \right\rangle}
\newcommand{\cone}[1]{\product{\Qplus}{#1}}
\newcommand{\inthull}[1]{#1_I}
\newcommand{\floor}[1]{\left\lfloor#1\right\rfloor}
\newcommand{\cp}{\textit{cp}}
\newcommand{\regcone}{\textit{alg.cone}}
\newcommand{\cut}{\ensuremath{\textit{cut}}\,}
\newcommand{\polyhedralcut}{\ensuremath{\overline{\textit{cut}}}\,}
\newcommand{\project}{\textit{project}}
\newcommand{\leadmon}{\textsc{Lm}}
\newcommand{\red}{\mathbf{red}}
\newcommand{\ThCR}{\mathbf{CR}}
\newcommand{\ThQ}{\mathbf{LRR}}
\newcommand{\ThZ}{\mathbf{LIRR}}
\newcommand{\Cn}[1]{\mathbf{C}\left(#1\right)}
\newcommand{\CnX}[2]{\mathbf{C}_{#2}\!\left(#1\right)}
\newcommand{\ConeModel}[1]{\mathfrak{M}(#1)}
\newcommand{\CnM}[1]{\mathcal{C}(#1)}
\newcommand{\algconeX}[2]{\textit{alg.cone}_{#2}(#1)}
\newcommand{\units}[1]{\mathcal{U}\!(#1)}
\newcommand{\linpoly}[1]{\mathbb{Q}[#1]^1}
\newcommand{\gb}[2]{\textit{Gr\"{o}bnerBasis}_{#2}(#1)}
\newcommand{\reducecone}[2]{\textit{reduce}_{#2}(#1)}
\newcommand{\polyhedralproject}[2]{\overline{\textit{project}}_{#2}(#1)}
\newcommand{\LRA}{\textbf{LRA}}
\newcommand{\NRA}{\textbf{NRA}}
\newcommand{\NIA}{\textbf{NIA}}
\newcommand{\Th}{\textit{Th}}
\begin{document}
\title{When Less Is More: Consequence-Finding in a Weak Theory of Arithmetic}

\author{Zachary Kincaid}
\affiliation{%
  \institution{Princeton University}
  \city{Princeton}
  \state{NJ}
  \country{United States}}
\email{zkincaid@cs.princeton.edu}

\author{Nicolas Koh} \affiliation{%
  \institution{Princeton University}
  \city{Princeton}
  \state{NJ}
  \country{United States}}
\email{ckoh@cs.princeton.edu}

\author{Shaowei Zhu} \affiliation{%
  \institution{Princeton University}
  \city{Princeton}
  \state{NJ}
  \country{United States}}
\email{shaoweiz@cs.princeton.edu}

\begin{abstract}
  This paper presents a theory of non-linear integer/real arithmetic and
  algorithms for reasoning about this theory. The theory can be conceived as an
  extension of linear integer/real arithmetic with a weakly-axiomatized
  multiplication symbol, which retains many of the desirable algorithmic
  properties of linear arithmetic. In particular, we show that the
  \textit{conjunctive} fragment of the theory can be effectively manipulated
  (analogously to the usual operations on convex polyhedra, the conjunctive
  fragment of linear arithmetic). As a result, we can solve the following
  consequence-finding problem: \textit{given a formula $F$, find the strongest
    conjunctive formula that is entailed by $F$}. As an application of
  consequence-finding, we give a loop invariant generation algorithm that is
  monotone with respect to the theory and (in a sense) complete. Experiments show that the invariants
  generated from the consequences are effective for proving safety properties of
  programs that require non-linear reasoning.
\end{abstract}

%%% Local Variables:
%%% TeX-master: "main"
%%% End:

\begin{CCSXML}
<ccs2012>
   <concept>
       <concept_id>10003752.10010124.10010138.10010143</concept_id>
       <concept_desc>Theory of computation~Program analysis</concept_desc>
       <concept_significance>500</concept_significance>
       </concept>
   <concept>
       <concept_id>10003752.10010124.10010138.10010139</concept_id>
       <concept_desc>Theory of computation~Invariants</concept_desc>
       <concept_significance>500</concept_significance>
       </concept>
   <concept>
       <concept_id>10011007.10010940.10010992.10010998.10011000</concept_id>
       <concept_desc>Software and its engineering~Automated static analysis</concept_desc>
       <concept_significance>500</concept_significance>
       </concept>
   <concept>
       <concept_id>10002950.10003714.10003715.10003720.10003747</concept_id>
       <concept_desc>Mathematics of computing~Gr�bner bases and other special bases</concept_desc>
       <concept_significance>300</concept_significance>
       </concept>
 </ccs2012>
\end{CCSXML}

\ccsdesc[500]{Theory of computation~Program analysis}
\ccsdesc[500]{Theory of computation~Invariants}
\ccsdesc[500]{Software and its engineering~Automated static analysis}
\ccsdesc[300]{Mathematics of computing~Gr�bner bases and other special bases}

\keywords{program analysis, invariant generation, nonlinear invariants, polynomial invariants, monotone, theory of arithmetic}

\maketitle

\section{Introduction} \label{sec:introduction}

Linear arithmetic is well-understood and widely used across a broad range of applications.
The conjunctive fragment of linear arithmetic,
convex polyhedra (and their intersection with lattices),
enjoys a number of good properties which enables applications beyond satisfiability-testing.
For instance, within the programming language community,
convex polyhedra
have been used in invariant generation \cite{POPL:CH1978,CAV:CSS2003},
termination analysis \cite{VMCAI:PR2004},
optimization \cite{IJCAR:ST2012, POPL:LAKGC2014, bjorner-nuZ3-2015},
and program transformation \cite{Feautrier-1996,lengauer-1993}.
Non-linear arithmetic, on the other hand, is not recursively axiomatizable.
If we restrict solutions to integers, even testing satisfiability of a single equation
is undecidable \cite{matiyasevich-1970, davis-putnam-robinson-1961}.

As a result, solvers for non-linear integer arithmetic rely on heuristic reasoning techniques \cite{SAT:FGMSTZ2007,CADE:BLNRR2009, CASC:KCA2016,TCL:BLROR2019,VMCAI:Jovanovic2017}.  The use of such heuristics poses a barrier to proving that clients of the solver satisfy desirable properties, such as completeness of ranking function synthesis, or monotonicity of an invariant generation scheme.   And while heuristics are often effective in practice, they can also be unpredictable.  For instance, \citet{HHLNPZZ:2014} reports ``we found Z3’s theory of nonlinear arithmetic to be slow and unstable; small code changes often
caused unpredictable verification failures,'' which prompted the authors to develop information-hiding techniques to avoid triggering non-linear heuristics.

This paper develops the theory of \textit{linear integer/real rings} ($\ThZ$)
-- commutative rings extended with an order relation and an ``integer''
predicate that obey certain axioms from the theory of \textit{linear}
integer/real arithmetic.  $\ThZ$ is designed to serve as a generalization of
linear integer/real arithmetic, in the sense that:
\begin{itemize}
\item The conjunctive fragment of $\ThZ$ can be manipulated effectively,
  analogously to (integral) convex polyhedra.
\item $\ThZ$ does not lose any of the reasoning power of linear integer/real
  arithmetic (in a sense made precise in Theorems~\ref{thm:lrr-lra-complete}
  and~\ref{thm:lirr-lira-complete}).
\end{itemize}
Moreover, $\ThZ$ is axiomatized by Horn clauses, which implies existence of
\textit{minimal models} \cite{JACM:VK1976} (there is no need to perform case
splits to find a model of a conjunctive formula), and ensures that the theory
is convex \cite{Tinelli2003} (and stably infinite), so $\ThZ$ can be combined
with other theories via the Nelson-Oppen protocol \cite{TOPLAS:NO1979}.

%% Its use is particularly appropriate for clients that incorporate automated
%% reasoning into an abstraction process, such as loop invariant generation via
%% abstract interpretation.  In this setting, we reason about program behavior
%% ``up to'' some approximation of its dynamics, and so reasoning is inherently
%% incomplete---a weak theory of arithmetic simply amounts to another potential
%% source of incompleteness.

% the fact that decision procedures for $\ThZ$ may compute \textit{non-standard} models can be seen as another layer of abstraction (that is, another source of false positives) \shaowei{the saying ``layer of abstraction'' feels a bit uncommon especially when it means ``a source of false positive'', do we want to say that the user need not understand the details of the nonstandard models but can just use the invariants? }. 
% \zak{Non-standard models only arise when refuting invariants / failing to prove a desired property is a consequence of an invariant.  What I mean by ``false positive'' is that a property holds, but we can't prove it.  that could be because we didn't find a strong enough invariant, or it could be that the invariant was strong enough modulo NIA but nor LIRR.  Feel free to re-work.}

The key technical contribution of this paper is a suite of algorithms for manipulating \textit{algebraic cones}---a set of polynomials that can be represented as the sum of a polynomial ideal and a polyhedral cone.  The role of algebraic cones in $\ThZ$ is analogous to that of convex polyhedra in the theory of linear real arithmetic, in that they correspond to the (positive) conjunctive fragment of the language (i.e., a system of polynomial equalities and inequalities).  They
serve as the basis of our decision procedure for the problem of testing satisfiability of a ground formula modulo $\ThZ$, wherein they can be seen as a representation of a Herbrand model.  Furthermore, we show that algebraic cones can be effectively manipulated like convex polyhedra: membership and emptiness are decidable, and algebraic cones are closed under intersection, sum, and projection.

Algorithms for algebraic cones arise as a 
marriage of techniques between polynomial ideals (based on Gr\"{o}bner bases) 
and convex polyhedra. Such combinations have been investigated in prior work (e.g., \cite{SAS:BRZ2005, CSL:Tiwari2005, PACMPL:KCBR18}), in the context of
incomplete heuristics for reasoning about real or integer arithmetic.
This paper investigates the strength of this combination through the lens of
a first-order theory of arithmetic.
The critical finding is that these methods enable complete consequence-finding (modulo $\ThZ{}$).  We present an algorithm that, given a formula $F$, computes the set of all polynomials $p$ such that $F$ entails that $p$ is non-negative, modulo $\ThZ$ (analogous to computing the convex hull of $F$, modulo linear arithmetic).
Such consequence-finding has a wide range of applications in program analysis \cite{VMCAI:RSY2004}.  As a case study, we give one application to non-linear invariant generation,
and show that the technique has good practical performance on top of theoretical guarantees.

%% However, our paper explores the benefits of being able to find
%% \emph{all} consequences in a first-order theory for nonlinear arithmetic.
%% We have shown that an algebraic cone can be used to represent a set of
%% \textit{consequence polynomials}, i.e., all polynomials of the form 
%% $\set{ p \in \mathbb{Q}[X] : \bigwedge_{i=1}^n p_i \geq 0 \models p \geq 0}$
%% (here entailment is modulo $\ThZ$). 
%% The ability to find all consequences w.r.t. $\ThZ$ demonstrates
%% a unique advantage of $\ThZ$ compared to the standard theory of
%% arithmetic, since what can be deduced 
%% modulo $\ThZ$ can also be represented and computed by our algorithms.

%We also demonstrate that being able to extract all consequences
%of a formula actually improves the performance of
%non-linear invariant generation.

% This approach of co-developing automated reasoning techniques with a first-order theory is especially informative, since what can be deduced 
% modulo the theory can also be represented and computed by our algorithms.

The paper is organized as follows.  Section~\ref{sec:background} presents
background on commutative algebra and polyhedral theory.  The theory of
linear/integer rings is presented in two steps.  Section~\ref{sec:rational}
presents the theory of \textit{linear real rings}, $\ThQ$, which is
essentially the theory of linear integer/real rings excluding the integer
predicate and its associated axioms.  The full theory $\ThZ$ is given in
Section~\ref{sec:integer}.  An invariant generation algorithm that
demonstrates the use of consequence-finding is given in
Section~\ref{sec:invgen}.  Section~\ref{sec:evaluation} evaluates the decision
procedure for $\ThZ$ and the invariant generation algorithm experimentally.
The $\ThZ$ decision procedure is not empirically competitive with
state-of-the-art heuristic solvers.  On the other hand, the experimental
results for the invariant generation procedure are positive, establishing the
value of consequence-finding modulo $\ThZ$.  Related work is discussed in
Section~\ref{sec:relwork}.  Statements that are made without proof are proved
in the Appendix.

%%% Local Variables:
%%% TeX-master: "main"
%%% End:

\section{Background} \label{sec:background}

\subsection{Commutative algebra}

This section recalls some basic facts about commutative algebra (see \cite{Book:CLO2015}).
Let $\sigma_{or}$ be the signature of ordered rings, consisting of a
binary addition (+) and multiplication $(\cdot)$ operators, the constants 0 and 1, equality, and a binary relation $\leq$.
For any set of symbols $X$, we use $\sigma_{or}(X)$ to denote the extension of $\sigma_{or}$ with the constant symbols $X$.

% \begin{definition}[Commutative rings] \label{def:commutative-ring}
The \textbf{commutative ring axioms}, $\ThCR$, are as follows:
\[ \begin{array}{cr}
    \forall x,y,z\ldotp x+(y+z) = (x+y)+z \text{ and } \forall x,y,z\ldotp x \cdot (y \cdot z) = (x \cdot y) \cdot z & \text{(Associativity)}\\
    \forall x,y\ldotp x+y = y+x\text{ and  }\forall x,y\ldotp x\cdot y = y \cdot x & \text{(Commutativity)}\\
    \forall x\ldotp x+0 = x \text{ and } \forall x\ldotp x\cdot 1 = x & \text{(Identity)}\\
    \forall x,y,z\ldotp x\cdot(y + z) = (x \cdot y) + (x \cdot z) & \text{(Distributivity)}\\
    \forall x, \exists y\ldotp x + y = 0 & \text{(Additive inverse)}
\end{array}\]
A model of these axioms is called a \textbf{commutative ring}.
% \end{definition}
Examples of commutative rings include the integers $\mathbb{Z}$, the rationals $\mathbb{Q}$, and the reals $\mathbb{R}$.  For any commutative ring $R$ and finite set of variables $X$, let $R[X]$ denote the set of polynomials over $X$ with coefficients in $R$; this too forms a commutative ring.

\textit{Modules} are a generalization of linear spaces in which the scalars form a ring rather than a field.  If $R$ is a commutative ring, an \textbf{$R$-module} is a commutative group $\tuple{M, 0, +}$ equipped with a \textit{scalar multiplication} operation $\cdot : R \times M \rightarrow M$ satisfying the usual axioms of linear spaces ($a\cdot(m + n) = a\cdot m + a \cdot n$, $(a+b)\cdot m = a \cdot m + b \cdot m$, $(a\cdot b)\cdot m = a \cdot (b \cdot m)$, and $1 \cdot m = m$).  For instance, $R$ is itself an $R$-module where scalar multiplication is the usual ring multiplication; $R[X]$ is both an $R$-module and an $R[X]$-module.

Let $R$ be a commutative ring and $N$ be an $R$-module. For $L,M \subseteq N$ and $S \subseteq R$ we use $L+M$ to denote the set of sums of elements in $L$ and $M$, and $\product{S}{M}$ 
 to denote the set of weighted sums of elements of $M$ with coefficients in $S$:
\begin{align*}
  L + M &\defeq \set { \ell + m : \ell \in L, m \in M }\\
  \product{S}{M} &\defeq \set{ s_1m_1 + \dots + s_nm_n : n \in \mathbb{N}, s_1,\dots,s_n \in S, m_1,\dots,m_n \in M }
\end{align*}
For instance, if $V$ is a linear space over $\mathbb{Q}$ and $G \subseteq V$ then $\product{\mathbb{Q}}{G}$ is the \textit{span} of $G$---the smallest linear subspace of $V$ containing $G$.

Let $R$ be a commutative ring.
An \textbf{ideal} $I \subseteq R$ is a sub-module of $R$ (considered as an $R$-module);
i.e., a set that (1) contains zero, (2) is closed under addition, and
(3) is closed under multiplication by arbitrary elements of $R$.
An ideal $I$ defines a congruence relation $\equiv_I$, where $p \equiv_I q$
if and only if $p-q \in I$.
(The notation $p - q$ abbreviates
$p + (-q)$, where $-q$ is the unique additive inverse of $q$.)
One may think of an ideal as a set of elements that are congruent to zero with
respect to \textit{some} congruence relation, with the closure conditions of
ideals corresponding to the idea that the sum of two zero-elements is zero,
and the product of a zero-element with anything is again zero.
We use $R/I$ to denote the \textbf{quotient ring} in which the elements are
sets of the form $r + I$ for some $r \in R$ (that is, equivalence classes of $\equiv_I$),
and sum and product are defined as $(r+I)+(s+I) = (r+s) + I$ and
$(r + I)\cdot(s+I) = (r\cdot s) + I$.
Note that any subset $P \subseteq R$ generates an ideal
(the \textit{smallest} with respect to inclusion order containing $P$),
which is exactly $\product{R}{P}$.
When $R$ is clear from context,
we will write $\qideal{P}$ for the ideal $\product{R}{P}$ generated by
the set $P$, and $\qideal{p}$ for $\qideal{\set{p}}$.

Fix a set of variables $X$. We use $[X]$ to denote the set of monomials over $X$.
A \textbf{monomial ordering} $\preceq$ is a total order on
$[X]$ such that
(1) $1 \preceq m$ for all $m$,
and (2) for any $m \preceq n$ and any monomial $v$, we have $mv \preceq nv$.
(Assuming a fixed monomial ordering) the \textbf{leading monomial} $\leadmon(p)$ of a polynomial
$p = a_1 m_1 + \dotsi + a_n m_k \in \mathbb{Q}[X]$, $a_1, \ldots, a_n \neq 0$,
is the greatest monomial among $m_1, \dots, m_k$.
The leading monomial of the zero polynomial is undefined.

% For sets of polynomials $P, Q \subseteq \mathbb{Q}[X]$, we use $P+Q$ and $PQ$ to denote their sum and product:
% \begin{align*}
%    P + Q &\defeq \{ p + q : p \in P, q \in Q \}\\
%    P Q &\defeq \{ pq : p \in P, q \in Q \}
% \end{align*}

Fix a monomial ordering $\preceq$.  Let $p$ be a non-zero polynomial in $\mathbb{Q}[X]$.  Then $p$ can be written as $p = am + q$ where $m = \leadmon(p)$, $a$ is the coefficient of $m$ in $p$, and $q = p - am$; $p$ can be interpreted as a rewrite rule $m \rightarrow -\frac{1}{a}q$.  For instance, the polynomial $\frac{1}{2}x^2 -y$ can be interpreted as a rewrite rule $x^2 \rightarrow 2y$, and using this rule we may rewrite $x^3 + x^2 \rightarrow 2xy + x^2 \rightarrow 2xy + 2y$.
Applying a rewrite rule to a non-zero polynomial reduces it to $0$ or 
decreases the leading monomial.
A set of polynomials $G$ is a \textbf{Gr\"{o}bner basis} (with respect to $\preceq$) if its associated rewrite system is confluent.  
Assuming that $G$ is a Gr\"{o}bner basis, we use $\red_G : \mathbb{Q}[X] \rightarrow \mathbb{Q}[X]$ to denote the function that maps any polynomial to its normal form under the rewrite system $G$.  Equivalently, $\red_G(p)$ is the unique polynomial $q$ such that  (1) $p - q \in \qideal{G}$, and (2) no monomial in $q$ is divisible by $\leadmon(g)$ for any $g \in G$.
We note some important properties of $\red_G$:
\begin{itemize}
\item (Membership) For all $p \in \mathbb{Q}[X]$, $\red_G(p) = 0$ if and only if $p \in \qideal{G}$
\item (Linearity) If $a_1,\dots,a_n \in \mathbb{Q}$ and $p_1,\dots,p_n \in \mathbb{Q}[X]$, then $\reduce_G(\sum_{i=1}^n a_i p_i) = \sum_{i=1}^n a_i \reduce_G(p_i)$ 
\item (Ordering)
  For all $p \in \mathbb{Q}[X]$,
  $p = q_1g_1 + \dotsi + q_ng_n  + \reduce_G(p)$ for some
  $q_1,\dots,q_n \in \mathbb{Q}[X]$,
  $g_1, \ldots, g_n \in G$,
  and $\leadmon(q_ig_i) \preceq \leadmon(p)$ for all $i$. 
\end{itemize}

For any finite set of polynomials $P$ and monomial ordering $\preceq$,
we may compute a Gr\"{o}bner basis for the ideal generated by $P$
(e.g., using Buchberger's
algorithm~\cite{SIGSAM:Buchberger1976}).
We use $\gb{P}{\preceq}$ to denote this basis.

We make use of two monomial orderings in this paper. The degree reverse
lexicographic order compares monomials by their total degree first, and breaks
ties by taking the smallest monomial w.r.t. a reversed lexicographic order. This
ordering is used when we do not care about the exact order, since this often
leads to efficient Gr\"{o}bner basis computations. The elimination ordering is
used to project a set of variables out of an ideal (see
Section~\ref{sec:subsub-proj-alg-cones}).
% An elimination ordering $\preceq_{X}$ on a polynomial ring $R[X, Y]$ over two sets of
% variables $X, Y$ is a block order that compares monomials based on the $Y$
% part first, and breaks ties using the $X$ part. If $G$ is a Gr\"{o}bner basis for ideal
% $I \subseteq R[X, Y]$, then $G \cap R[X]$ is a Gr\"{o}bner basis for $I \cap R[X]$, i.e., this
% ordering can be used to eliminate $Y$ from an ideal.

\subsection{Polyhedral theory}

This section recalls some basic facts about polyhedral theory (see for example \cite{Book:Schrijver1999}).
In the following, we use \textbf{linear space} to refer to a linear space over the field $\mathbb{Q}$.  We use $\mathbb{Q}^n$ to denote the linear space of rational vectors of length $n$.
Note that for any set of variables $X$, $\mathbb{Q}[X]$ is an (infinite-dimensional) linear space.  We use $\linpoly{X}$ to denote the ($|X|+1$-dimensional) linear space of polynomials of degree at most one (i.e., polynomials of the form $a_1x_1 + \dots a_nx_n + b$, with $a_1,\dots,a_n,b \in \mathbb{Q}$ and $x_1,\dots,x_n \in X$).

%A set $A \subseteq V$ is called \textbf{affine} if is closed under addition and scalar multiplication.

Let $V$ be a linear space.  A set $C \subseteq V$ is called \textbf{convex} if for every $u,v \in C$, the line segment between $u$ and $v$ is contained in $C$ (i.e.,
for every $\lambda$ in the interval $[0,1]$, we have $\lambda u + (1-\lambda)v \in C$).  For a set $G \subseteq V$, we use $\conv{G}$ to denote the \textbf{convex hull} of $G$---the smallest convex set that contains $G$:
\[
 \conv{G} \defeq \set{  \lambda_1g_1 + \dots + \lambda_ng_n : \lambda_1,\dots,\lambda_n \in \Qplus, g_1, \dots, g_n \in G, \sum_{i=1}^n \lambda_i = 1}
\]
where $\Qplus$ denotes the set of non-negative rational numbers.
A set $C \subseteq V$ is called a (convex) \textbf{cone} if it contains zero and
is closed under addition and multiplication by non-negative scalars.
For a set $G \subseteq V$, the \textbf{conical hull} of $G$ is the smallest
convex cone containing $G$; it is precisely $\product{\Qplus}{G}$. 

%For a set $G \subseteq V$, define $\conv{G}$ and $\cone{G}$ the be the smallest affine space, convex set, and cone that contains $G$, respectively:
%\begin{align*}
%\aff{G} &\defeq \set{  \lambda_1g_1 + \dots + \lambda_ng_n : \lambda_1,\dots,\lambda_n \in \mathbb{Q}, g_1, \dots, g_n \in G, \sum_{i=1}^n \lambda_i = 1} \\
%    \conv{G} &\defeq \set{  \lambda_1g_1 + \dots + \lambda_ng_n : \lambda_1,\dots,\lambda_n \in \Qplus, g_1, \dots, g_n \in G, \sum_{i=1}^n \lambda_i = 1}\\
%    \cone{G} &\defeq \set{  \lambda_1g_1 + \dots + \lambda_ng_n : \lambda_1,\dots,\lambda_n \in \Qplus, g_1, \dots, g_n \in G}
%\end{align*}

Let $V$ be a linear space and $C \subseteq V$ be a cone.  We say that $v \in C$ is an \textbf{additive unit} if both $v$ and $-v$ belong to $C$, and denote the set of additive units as $\units{C}$ ($\units{C}$ is also known as the \textit{lineality space} of $C$---the largest linear space contained in $C$).  We say that a $C$ is \textbf{salient} if $\units{C} = 0$.  We say that $C$ is \textbf{finitely generated} (or polyhedral) if $C = \cone{G}$ for some finite set $G$.

Let $X$ be a finite set of variables.
The set of functions $\mathbb{Q}^X$ mapping variables to rationals is a linear space.
A \textbf{polyhedron} in $\mathbb{Q}^X$
is a subset  of the form  $P = \set{ x \in \mathbb{Q}^X: p_1(x) \geq 0, \dots, p_n(x) \geq 0 }$ for some linear polynomials $p_1,\dots,p_n$.  We say that inequality $p(x) \geq 0$ is \textbf{valid} for a polyhedron $P$ if it is satisfied by every point in $P$.  The set of valid inequalities of a polyhedron $P = \set{ x \in \mathbb{Q}^X : p_1(x) \geq 0, \dots, p_n(x) \geq 0 }$ forms a cone; by Farkas' lemma \cite[Cor 7.ld]{Book:Schrijver1999} this cone is precisely $\cone{p_1,\dots,p_n,1}$.  The \textbf{integer hull} $\inthull{P}$ of a polyhedron $P$ is defined to be
$\inthull{P} \defeq \conv{P \cap \mathbb{Z}^X}$, the convex hull of the integral points of $P$.  The integer hull of a polyhedron $P$ is itself a polyhedron, and its constraints can be computed from the constraints of $P$ by the \textit{iterated Gomory-Chv\'{a}tal closure} process \cite[Ch. 23]{Book:Schrijver1999}.
A \textit{dual view} of the integer hull of a polyhedron is the 
\textit{cutting plane closure} of its valid inequalities \cite{Chvatal1973}.
For an example illustrating the intuition behind cutting planes, suppose that we know that $2x -1 \geq 0$, and that $x$ takes integer values---then we must have $x - \frac{1}{2} \geq 0$, and thus $\floor{x - \frac{1}{2}} = x + \floor{-\frac{1}{2}} = x - 1 \geq 0$.  That is, we may ``shift'' the halfspace  to eliminate some real solutions while keeping all integer points.  A set $C \subseteq \linpoly{X}$ is \textit{closed under cutting planes} iff for any $a, b \in \mathbb{Z}$ with $a > 0$ and any $p \in \mathbb{Z}[X]$ such that $ap + b \in C$, we have
$p + \floor{b/a} \in C$.  We use $\cp(C)$ to denote the cutting plane closure of a cone $C$---the smallest cone that contains $C$ and which is closed under cutting planes. Assuming that $C$ is finitely generated, $\cp(C)$ can be computed by any algorithm for computing the integer hull of convex polyhedra.

Let $V$ be a linear space. We say that a subset $L \subseteq V$ is a
\textbf{point lattice}
if there exists some $v_1,\dots,v_n \in V$ such that
$L = \product{\mathbb{Z}}{v_1,\dots,v_n}$.
We call a set of generators $\set{v_1,\dots,v_n}$ for a point lattice $L$ a
\textbf{basis} for $L$ if it is linearly independent.
A basis for a point lattice can be computed from its generators in polynomial time
\cite[Ch. 4]{Book:Schrijver1999}.

%%% Local Variables:
%%% TeX-master: "main"
%%% End:

\section{Linear Real Rings} \label{sec:rational}
In this section, we develop the theory of  \textit{linear real rings}, $\ThQ$.  Linear real rings are a common extension of the theory of commutative rings and the positive fragment of the theory of linear real arithmetic. 
Section \ref{sec:weak-theory-of-rational-arith} presents the axioms of the theory and introduces \textit{regular} and \textit{algebraic} cones.  Regular cones correspond to models of the theory, and algebraic cones correspond to effective structures; cones that are both regular and algebraic correspond to a class of effective models of $\ThQ$.  Section~\ref{sec:sat-mod-Q} presents a decision procedure for satisfiability of ground formulas modulo $\ThQ$.
Section~\ref{sec:conseq-finding-mod-Q} gives a procedure that discovers all implied inequalities of a ground formula modulo $\ThQ$, represented as a (regular) algebraic cone.

\subsection{Linear real rings} \label{sec:weak-theory-of-rational-arith}

Define $\ThQ$ to be the $\sigma_{or}$-theory axiomatized by the theory of commutative rings
$\ThCR$ as well as the following axioms derived from \LRA{}:
\[\begin{array}{cr}
\forall x\ldotp x \leq x & \text{(Reflexivity)}\\
\forall x,y,z\ldotp x \leq y \land y \leq z \Rightarrow x \leq z & \text{(Transitivity)}\\
\forall x,y,z\ldotp x \leq y \land y \leq x \Rightarrow x = y & \text{(Antisymmetry)}\\
\forall x,y,z\ldotp (x \leq y \Rightarrow x + z \leq y + z) & \text{(Compatibility)}\\
0 \leq 1 \land 0 \neq 1 & \text{(Non-triviality)}\\
\text{for all } n \in \mathbb{N}^{\geq 1}, \exists x\ldotp \underbrace{x + \dotsi + x}_{n \text{ times}} = 1 & \text{(Divisibility)}\\
\text{for all } n \in \mathbb{N}, \forall x\ldotp 0 \leq \underbrace{x + \dotsi + x}_{n \text{ times}} \Rightarrow 0 \leq x & \text{(Perforation-freeness)}
\end{array}\]

Note that divisibility and perforation-freeness are axiom schemata, with one axiom for each natural number ($\exists x\ldotp x = 1$, $\exists x \ldotp x + x = 1$, $\exists x \ldotp x + x + x = 1$, and so on).
Equivalently, $\ThQ$ is the theory consisting of all $\sigma_{or}$-sentences that hold in
all structures $\mathfrak{A}$ where $(U^{\mathfrak{A}},0^{\mathfrak{A}},1^{\mathfrak{A}},+^{\mathfrak{A}},\cdot^{\mathfrak{A}})$ forms a commutative ring and where
($U^{\mathfrak{A}},0^{\mathfrak{A}},+^{\mathfrak{A}}, \leq^{\mathfrak{A}}$) forms an unperforated partially ordered divisible group.  We regard the real numbers $\mathbb{R}$ as the ``standard'' model of $\ThQ$; other models include the rationals, the complex numbers (where complex numbers with the same imaginary part are ordered by their real part) and, as we shall see shortly, more exotic interpretations.  The signature of ordered rings should be regarded as a ``minimal'' signature for $\ThQ$, but models of $\ThQ$ can be lifted to larger languages.  In the following, we will make use of an extended language that includes
atomic formulas $p \leq q$ and $p = q$ where $p$ and $q$ have rational coefficients; any such atom can be translated into $\sigma_{or}(X)$ by scaling and re-arranging terms (for example, the formula $0 \leq \frac{1}{2}x - y$ can be translated to the $\sigma_{or}(\set{x,y})$ formula $(1+1) \cdot y \leq x$).

The axioms that $\ThQ$ adds onto commutative rings are a subset of those for linear real arithmetic.  Naturally we might ask if the subset is ``enough.''  Notably, totality of the order (an axiom of linear real arithmetic) is independent of $\ThQ$ (e.g., $\leq$ is non-total for the complex numbers).
%NK: This looks obsolete after pushing divisibility to the axioms \footnote{Besides totality, all the other axioms are Horn clauses.}
Nevertheless, $\ThQ$ is at least as strong as \LRA{}, in the sense that satisfiability modulo \LRA{} can be reduced to satisfiability modulo $\ThQ$ (noting that every formula in the language of \LRA{} is equivalent to a negation-free formula modulo \LRA{}).
\begin{theorem} \label{thm:lrr-lra-complete}
  Let $F$ be a ground negation-free formula in the language of \LRA{}.
  Then $F$ is satisfiable modulo \LRA{} iff it is satisfiable modulo $\ThQ$.
\end{theorem}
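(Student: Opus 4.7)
The forward direction is immediate: the reals $\mathbb{R}$ are the standard model of both $\LRA$ and $\ThQ$, so any $\LRA$-model of $F$ also satisfies $F$ modulo $\ThQ$.

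For the converse I argue by contraposition. Assume $F$ is unsatisfiable modulo $\LRA$. Since $F$ is negation-free, put it in disjunctive normal form $\bigvee_k F_k$ with each $F_k$ a conjunction of positive atoms of the form $p \leq q$ or $p = q$ (with $p,q$ linear over $\mathbb{Q}$). Unsatisfiability of $F$ is equivalent to unsatisfiability of each $F_k$, so it suffices to treat one conjunction. Replacing $p = q$ by $p \leq q \land q \leq p$ and each $p \leq q$ by $0 \leq q - p$, the formula becomes $\bigwedge_{i=1}^m (0 \leq r_i)$ for some linear polynomials $r_i \in \mathbb{Q}[x_1,\dots,x_n]$.

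The bridge to $\ThQ$ is Farkas' lemma: unsatisfiability of $\{r_i \geq 0\}_{i=1}^m$ over $\mathbb{R}$ yields non-negative rationals $\lambda_1,\dots,\lambda_m$ and a positive rational $c$ witnessing the polynomial identity $\sum_{i=1}^m \lambda_i r_i = -c$. After clearing denominators (by a common positive integer $M$ that makes all of $M\lambda_i$, $Mc$, and the coefficients of $Mr_i$ integral) I obtain non-negative integers $\mu_i$, integer-coefficient linear polynomials $s_i$ proportional to $r_i$, and a positive integer $e$ with $\sum_i \mu_i s_i = -e$. In any model of $\ThQ$ validating the hypotheses, compatibility plus iterated addition lifts $0 \leq r_i$ first to $0 \leq s_i$ and then to $0 \leq \mu_i s_i$; summing gives $0 \leq -e$, so $e \leq 0$.

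To finish, I need $0 < e$ in $\ThQ$ for positive integer $e$. A short argument shows $\ThQ$ forces characteristic zero: if $n = 1 + \dots + 1$ were zero in the ring for some $n \geq 1$, then $nx = 0$ for all $x$ by the ring axioms, contradicting the divisibility schema $\exists x\ldotp nx = 1$ together with non-triviality $0 \neq 1$. Combined with $0 \leq 1 \leq 2 \leq \dots \leq e$ (by repeated compatibility from $0 \leq 1$), this yields $0 < e$, contradicting $e \leq 0$. Hence the conjunction---and therefore $F$---is unsatisfiable modulo $\ThQ$. The main obstacle I anticipate is simply being careful with the characteristic-zero sub-argument and the denominator clean-up; the overall strategy (internalizing a Farkas certificate via compatibility and non-negative integer scaling) is essentially forced by the structure of the theory.
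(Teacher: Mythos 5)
Your proof is correct and follows essentially the same route as the paper's: contrapose, reduce to a single conjunction of inequalities, extract an integral Farkas certificate, and internalize it in $\ThQ$ via compatibility, commutativity, and transitivity to derive $0 \leq -e$ for a positive integer $e$. The only (inessential) difference is the endgame: the paper applies perforation-freeness to pass from $0 \leq -e$ to $0 \leq -1$ and concludes by antisymmetry and non-triviality, whereas you invoke divisibility to establish characteristic zero and combine $0 \leq e$, $e \leq 0$, and $e \neq 0$ with antisymmetry --- both are valid one-step finishes.
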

\begin{proof}
  The $\Rightarrow$ direction is trivial, since the reals are a model of
  $\ThQ$. For the $\Leftarrow$ direction, we show the contrapositive: suppose
  that $F$ is unsatisfiable modulo \LRA{}, and show that it is unsatisfiable modulo
  $\ThQ$. Without loss of generality, we may suppose that $F$ is conjunction of
  inequalities, which can be written in the form $A\vec{x} \leq \vec{b}$. Since
  this system is unsatisfiable modulo \LRA{}, then by Farkas' lemma there is some
  $\vec{y} \geq 0$ such that $\transpose{\vec{y}}A = 0$ and
  $\transpose{\vec{y}}\vec{b} < 0$.
  Without loss of generality, we may suppose
  that $\vec{y}$ is integral.
  By compatibility, commutativity, associativity, and transitivity, we have
  $F \models_{\ThQ} \transpose{\vec{y}}A\vec{x} \leq \transpose{\vec{y}}\vec{b}$.
  (Suppose $F \models c_1 \leq d_1$ and $F \models c_2 \leq d_2$.
  From the first, $F \models c_1 + c_2 \leq d_1 + c_2$ by compatibility;
  from the second, $F \models d_1 + c_2 \leq d_1 + d_2$
  by compatibility and commutativity.
  By transitivity, $F \models c_1 + c_2 \leq d_1 + d_2$.
  So we can take non-negative integer multiples of inequalities and add them up.)

  Since $\leq$ is unperforated, we have $F \models_{\ThQ} 0 \leq -1$, and by
  non-triviality and antisymmetry we have $F \models_{\ThQ} \false$; i.e., $F$
  is unsatisfiable modulo $\ThQ$.

  Note that the proposition holds even if we consider formulas with disequalities
  (and thus the case of strict inequalities, treating $p < q$ as an abbreviation for
  $p \leq q \land p \neq q$).
  In this case, we may suppose without loss of generality that $F$ is conjunction of
  inequalities and disequalities, which can be written in the form
  $A\vec{x} \leq \vec{b} \land \bigwedge_{i=1}^n \transpose{\vec{c}}_i\vec{x} \neq d_i$
  (with at least one disequality, or else we fall into the case above).
  Since $F$ is unsatisfiable, we must have
  $A\vec{x} \leq \vec{b} \models_{\LRA{}} \bigvee_{i=1}^n \transpose{\vec{c}}_i\vec{x} = d_i$.
  Since $\LRA{}$ is a convex theory, we must have
  $A\vec{x} \leq \vec{b} \models_{\LRA{}} \transpose{\vec{c}}_i\vec{x} = d_i$ for some $i$.  We may then argue as above that
  $A\vec{x} \leq \vec{b} \models_{\ThQ} \transpose{\vec{c}}_i\vec{x} \leq d_i$
  and $A\vec{x} \leq \vec{b} \models_{\ThQ} d_i \leq \transpose{\vec{c}}_i\vec{x}$, and so by antisymmetry $A\vec{x} \leq \vec{b} \models_{\ThQ} \transpose{\vec{c}}_i\vec{x} = d_i$, and thus $F$ is unsatisfiable modulo $\ThQ$.
\end{proof}

In the remainder of this section, we develop a model theory of $\ThQ$, based on \textit{regular cones}.
For any set of variables $X$, we say that a set
$C \subseteq \mathbb{Q}[X]$ is a \textbf{regular cone} if it is a cone (closed
under addition and multiplication by nonnegative rationals), $1 \in C$, and
$\units{C}$ forms an ideal in $\mathbb{Q}[X]$. We say that $C$ is
\textbf{consistent} if $C \neq \mathbb{Q}[X]$; in the case that $C$ is regular, $C$ is consistent iff $-1 \notin C$.

Let $X$ be a set of symbols, and let $\mathfrak{A}$ be a $\sigma_{or}(X)$-structure
satisfying the axioms of $\ThQ$.
  Define $\CnM{\mathfrak{A}} \defeq \set{ p \in \mathbb{Q}[X]: \mathfrak{A} \models 0 \leq p}$ to be
its nonnegative consequences. Naturally, $\mathcal{C}(\mathfrak{A})$ forms a consistent regular cone.
We now show that, conversely, any consistent regular cone can be associated with a model of $\ThQ$.

%Actually, it can be shown that the mappings between
%the consistent regular cones and models of $\ThQ$ are adjoint.

Let $X$ be a set of variables, let $C \subseteq \mathbb{Q}[X]$ be a regular cone.  Define a $\sigma_{or}(X)$-structure $\ConeModel{C}$ where
\begin{itemize}
\item The universe and function symbols $0,1+,\cdot$ are interpreted as in the quotient ring $R \defeq \mathbb{Q}[X]/I$, where $I$ is the ideal of additive units $\units{C}$.  Thus, elements of the universe are sets of polynomials with rational coefficients of the form $p + I$, where $p \in \mathbb{Q}[X]$.
\item Each constant symbol $x \in X$ is interpreted as $x + I$.
\item $\leq$ is interpreted as the relation $\{ (p + I, q + I) : q - p \in C \}$.
\end{itemize}
Observe that $\ConeModel{C} \models 0 \leq q$
  iff $q \in C$, and $\ConeModel{C} \models 0 = q$ iff $q \in \units{C}$.

\begin{lemma} \label{lem:model-construction}
  Let $X$ be a set of variables, and let $C \subseteq \mathbb{Q}[X]$
  be a consistent regular cone.  Then $\ConeModel{C}$ is a model of
  $\ThQ$.
\end{lemma}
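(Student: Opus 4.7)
The plan is to check each axiom of $\ThQ$ in turn, in the order they are listed. The work splits into a preliminary well-definedness check, then the commutative ring axioms, then the order axioms, then the two schemata.

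First I would verify that the interpretation of $\leq$ is well-defined on the quotient $R = \mathbb{Q}[X]/I$: if $p + I = p' + I$ and $q + I = q' + I$, then $(q' - p') - (q - p) \in I = \units{C} \subseteq C$, so writing $q' - p' = (q - p) + u$ with $u \in \units{C}$ and using that $C$ is closed under addition gives $q - p \in C \Leftrightarrow q' - p' \in C$. The ring axioms of $\ThCR$ then hold because quotienting a commutative ring by an ideal yields a commutative ring.

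Next I would do the order axioms. Reflexivity is $0 \in C$, which holds because $C$ is a cone. Transitivity follows from $C$ being closed under addition: $(r-q)+(q-p)=r-p$. Antisymmetry uses the definition of $I$: if $q-p$ and $p-q$ are both in $C$, then $q-p \in \units{C} = I$, so $p + I = q + I$. Compatibility is immediate since $(q + r) - (p + r) = q - p$. For non-triviality, $1 \in C$ by the definition of a regular cone, giving $0 \leq 1$; and if $1 \in I = \units{C}$, then since $\units{C}$ is an ideal we would have $\units{C} = \mathbb{Q}[X]$, hence $-1 \in C$, contradicting consistency---so $0 \neq 1$ in $R$.

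For the two schemata, divisibility is witnessed by the constant polynomial $1/n \in \mathbb{Q}[X]$: its image in $R$ satisfies $n \cdot (1/n + I) = 1 + I$. Perforation-freeness follows from the fact that $C$ is a $\Qplus$-cone: if $np \in C$ for $n \geq 1$, then multiplying by the non-negative rational $1/n$ gives $p \in C$.

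I expect no serious obstacle; the only subtle point is confirming that the ideal structure of $\units{C}$ (rather than merely its being a linear subspace) is exactly what is needed for $\leq$ to descend to the quotient and for $0 \neq 1$ to hold, and that closure under scaling by $\Qplus$ (rather than just $\mathbb{Z}^{\geq 0}$) is exactly what delivers perforation-freeness and, via $1/n$, divisibility.
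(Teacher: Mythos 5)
Your proof is correct and takes essentially the same route as the paper, whose proof is a one-sentence summary of exactly this axiom-by-axiom verification (cone axioms give reflexivity, transitivity, compatibility, and perforation-freeness; regularity gives antisymmetry; consistency gives non-triviality). You additionally spell out the well-definedness of $\leq$ on the quotient and the role of $1/n$ for divisibility, details the paper leaves implicit.
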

\begin{proof}
  Clearly $R = \mathbb{Q}[X]/\units{C}$ is a commutative ring that satisfies divisibility. Since $C$ is a cone, it follows that $\leq$ is reflexive, transitive, compatible with addition, and perforation-free; furthermore since $C$ is regular we have that $\leq$ is antisymmetric.  Non-triviality follows from the fact that $C$ is consistent.
\end{proof}

%\begin{lemma} \label{lem:model-geq-completeness}
%  Let $q \in \mathbb{Q}[X]$ and let $C$ be a consistent regular cone. Then $\ConeModel{C} \models 0 \leq q$
%  if and only if $q \in C$, and $\ConeModel{C} \models 0 = q$ if and only if $q \in \units{C}$.
%\end{lemma}
%\begin{proof}
%  According to the definition of $\leq_{\ConeModel{C}}$,
%  $\ConeModel{C} \models 0 \leq q$ if and only if $zq - 0 %= zq \in C$ for some $z \in \mathbb{Q}^{+}$. Also
%  $\ConeModel{C} \models 0 = q$ if and only if $0 = zq$ in the ring $  \mathbb{Z}[X]/ (  \units{C} \cap \mathbb{Z}[X]  )$ for some $z \in \mathbb{Z}$, which is
%  equivalent to $zq \in \units{C} \cap \mathbb{Z}[X]$ or $q \in \units{C}$.
%\end{proof}

While regular cones give us a ``standard form'' in which to represent models of $\ThQ$, they cannot be manipulated effectively.
For this purpose, we introduce \textit{algebraic cones},
which are (not necessarily regular) cones that admit a finite representation.

%For the rest of this subsection, we focus on developing a finite representation
%for regular cones. Intuitively, we use a tuple of sets $\tuple{Z, P}$ to
%represent a regular cone $C$, where $Z$ is a basis for the ideal of polynomials
%in $C$ that equal to $0$, and $P$ is a set of generators of the polyhedral cone
%of polynomials that are nonnegative. The formal development is as follows.

\zak{Move to SAT section?} \shaowei{I think the current flow is good. I don't see why we would want
to move this and the utility functions to SAT.}
We say that a cone $C \subseteq \mathbb{Q}[X]$ is \textbf{algebraic} if there is an ideal $I$ and a finitely-generated cone $D$ such that $C = I + D$.  An algebraic cone can be represented as a
pair $\tuple{Z,P}$ ($Z, P \subseteq \mathbb{Q}[X]$) where  $Z = \set{ z_1, \dots, z_m }$ (``zeros'') is a basis for an ideal and $P = \set{ p_1, \dots, p_n}$ (``positives'') is a basis for a cone; the algebraic cone represented by $\tuple{Z,P}$ is denoted by
\[
\algconeX{Z,P}{X} \defeq \qideal{Z} + \cone{P}
=\set{  \sum_{j=1}^m q_iz_i + \sum_{i=1}^n \lambda_ip_i : q_1,\dots,q_m \in \mathbb{Q}[X], \lambda_1,\dots,\lambda_n \in \Qplus }
\]
We will omit the $X$ subscript when it is clear from context.  Say that the pair $\tuple{Z,P}$ is \textbf{reduced} (with respect to a monomial ordering $\preceq$) if
\begin{enumerate}
\item $Z$ is a Gr\"{o}bner basis for $\qideal{Z}$ (with respect to $\preceq$)
\item Each $p_i \in P$ is reduced with respect to $Z$ (i.e., $\red_Z(p_i) = p_i$ for all $i$).
\end{enumerate}

The following shows that the problem of checking membership in an algebraic cone can be reduced to checking membership in a finitely-generated cone (which can be checked in polytime using linear programming).  It comes in two parts: (1) checking membership assuming a \textit{reduced} representation (Lemma~\ref{lem:membership}) and (2) computing a reduced representation (Lemma~\ref{lem:ordering}).
\begin{lemma}[Membership]
If $\tuple{Z,P}$ is reduced, then for any polynomial $p \in \mathbb{Q}[X]$, we have $p \in \regcone(Z,P)$ iff $\reduce_Z(p) \in \cone{P}$.
\label{lem:membership}
\end{lemma}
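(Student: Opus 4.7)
My plan is to prove the two directions separately, with the $\Leftarrow$ direction being essentially immediate from the definitions and the $\Rightarrow$ direction exploiting the two properties guaranteed by reducedness together with the linearity property of $\reduce_Z$ recalled in the background section.

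For the $(\Leftarrow)$ direction, suppose $\reduce_Z(p) \in \cone{P}$. Since $Z$ is a Gr\"obner basis for $\qideal{Z}$ (property 1 of reducedness), the defining property of the normal form gives $p - \reduce_Z(p) \in \qideal{Z}$. Thus
\[
  p = \bigl(p - \reduce_Z(p)\bigr) + \reduce_Z(p) \in \qideal{Z} + \cone{P} = \regcone(Z,P),
\]
as desired.

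For the $(\Rightarrow)$ direction, which is the more interesting one, suppose $p \in \regcone(Z,P) = \qideal{Z} + \cone{P}$. Then by definition we can write
\[
  p = q + \sum_{i=1}^{n} \lambda_i p_i,
\]
where $q \in \qideal{Z}$, each $\lambda_i \in \Qplus$, and $p_i \in P$. Apply $\reduce_Z$ to both sides. By the linearity property of $\reduce_Z$ (recalled in the background), I get
\[
  \reduce_Z(p) = \reduce_Z(q) + \sum_{i=1}^{n} \lambda_i \reduce_Z(p_i).
\]
Because $Z$ is a Gr\"obner basis for $\qideal{Z}$, the membership property of $\reduce_Z$ forces $\reduce_Z(q) = 0$. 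Because $\tuple{Z,P}$ is reduced (property 2), each $p_i$ is already in normal form, so $\reduce_Z(p_i) = p_i$. Substituting,
\[
  \reduce_Z(p) = \sum_{i=1}^{n} \lambda_i p_i,
\]
which is a non-negative rational combination of elements of $P$ and therefore lies in $\cone{P}$.

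The argument is short because reducedness was set up precisely so these two simplifications fire simultaneously; the only non-trivial ingredient is the $\mathbb{Q}$-linearity of the normal form operator, which holds because $Z$ is a Gr\"obner basis. I do not foresee any real obstacle here beyond correctly invoking linearity with non-negative rational (rather than just rational) coefficients, which is unproblematic since linearity holds for all rational scalars.
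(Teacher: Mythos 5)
Your proof is correct and follows essentially the same route as the paper's (the paper states the equivalence very tersely via the decomposition $p = \reduce_Z(p) + q$ with $q \in \qideal{Z}$, implicitly relying on the same linearity, membership, and reducedness facts you invoke). Your version is simply a more explicit write-up of the forward direction, and the appeal to $\mathbb{Q}$-linearity with non-negative coefficients is unproblematic as you note.
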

\begin{proof}
  Let $q \in \qideal{Z}$ be such that $p = \reduce_Z(p) + q$.
  Then we have  $\reduce_Z(p) \in \cone{P}$ iff $p = \reduce_Z(p) + q \in \cone{P} + \qideal{Z} = \regcone(Z,P)$.
\end{proof}

For any set of variables $X$,
finite sets of polynomials $Z,P \subseteq \mathbb{Q}[X]$,
and monomial ordering $\preceq$, define
$\reducecone{Z,P}{\preceq} \defeq \tuple{G, \set{ \reduce_G(p) : p \in P, \reduce_G(p) \neq 0 }} $
where $G = \gb{Z}{\preceq}$ is a Gr\"{o}bner basis for $\qideal{Z}$ with respect to the order $\preceq$.

\begin{lemma}[Ordering] \label{lem:ordering}
  Let $X$ be a set of variables, $Z,P \subseteq \mathbb{Q}[X]$ be finite sets of polynomials, and $\preceq$ be a monomial ordering.
  Then $\reducecone{Z,P}{\preceq}$ is reduced with respect to $\preceq$ and
  $ \regcone(Z,P) = \regcone(\reducecone{Z,P}{\preceq})$.
 % Suppose $B$ is a Gr\"{o}bner basis for an ideal $\qideal{Z}$ generated by a
 % finite set of polynomials $Z$,
 % and let $P$ is a finite set of polynomials.
 % Define $P'=\{ \reduce_{B}(p) : p \in P, \reduce_{B}(p) \neq 0 \}$.
 % Then $\tuple{B,P'}$ is reduced and $\regcone(Z, P) = \regcone(B, P')$.
\end{lemma}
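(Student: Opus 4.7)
The plan is to decompose the claim into two parts---reducedness and cone-equality---and discharge each by direct appeal to the standard properties of Gr\"obner bases and $\red_G$ recorded in the background. Throughout, let $G = \gb{Z}{\preceq}$ and $P' = \set{\red_G(p) : p \in P, \red_G(p) \neq 0}$, so that $\reducecone{Z,P}{\preceq} = \tuple{G, P'}$.

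To establish reducedness, condition (1) of the definition is immediate from the definition of $\gb{\cdot}{\cdot}$. For condition (2), I would invoke the characterization of $\red_G(p)$ as the unique polynomial $q$ with $p - q \in \qideal{G}$ and no monomial of $q$ divisible by the leading monomial of any $g \in G$. Since each element of $P'$ already satisfies this normal-form property, applying $\red_G$ again leaves it fixed, so each $p' \in P'$ is reduced with respect to $G$.

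For the cone equality $\regcone(Z,P) = \regcone(G, P')$, I would first observe that $\qideal{G} = \qideal{Z}$ since $G$ is a basis for $\qideal{Z}$, which reduces the claim to showing $\qideal{Z} + \cone{P} = \qideal{Z} + \cone{P'}$. For the $\subseteq$ direction, each $p \in P$ satisfies $p - \red_G(p) \in \qideal{G} = \qideal{Z}$ by the membership property; hence either $p \in \qideal{Z}$ (when $\red_G(p) = 0$) or $p = \red_G(p) + z$ with $z \in \qideal{Z}$ and $\red_G(p) \in P'$. In either case $p \in \cone{P'} + \qideal{Z}$, and closing under non-negative rational combinations and addition of ideal elements gives the inclusion $\cone{P} + \qideal{Z} \subseteq \cone{P'} + \qideal{Z}$. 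The $\supseteq$ direction is symmetric: each $\red_G(p) \in P'$ equals $p - z$ for some $z \in \qideal{Z}$, placing it in $\cone{P} + \qideal{Z}$.

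I do not anticipate a genuine obstacle; the argument is essentially bookkeeping against the membership, linearity, and ordering properties of $\red_G$ recalled earlier. The only point requiring care is the exclusion of $p \in P$ with $\red_G(p) = 0$ from $P'$, but this is harmless precisely because such $p$ already lie in $\qideal{Z}$ and so contribute nothing beyond what the ideal component already supplies.
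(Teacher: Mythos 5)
Your proposal is correct and follows essentially the same route as the paper's proof: reducedness via idempotence of $\red_G$ (which you justify by the normal-form characterization), then cone equality by showing $P \subseteq \qideal{G} + \cone{P'}$ and $P' \subseteq \qideal{Z} + \cone{P}$ and closing under the cone operations, with the same observation that generators reducing to zero are absorbed by the ideal component. No gaps.
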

\begin{proof}
  Let $G = \gb{Z}{\preceq}$ and $P' = \set{ \reduce_G(p) : p \in P, \reduce_G(p) \neq 0 }$.
  Since $\red_G$ is idempotent, $\reducecone{G, P'}{\preceq}$ is reduced w.r.t. $\preceq$.
  Clearly, $\qideal{Z} = \qideal{G}$.
  Since algebraic cones are closed under addition and multiplication by non-negative rationals,
  it is sufficient to prove that
  $P \subseteq \qideal{G} + \cone{P'}$ and
  $P' \subseteq \qideal{Z} + \cone{P}$.
  \begin{itemize}
  \item $P \subseteq \qideal{G} + \cone{P'}$:
    Suppose $p \in P$.
    Then $p = z + \reduce_{G}(p)$ for some $z \in \qideal{G}$
    (since $p - \red_{G}(p) \in \qideal{G}$ by the definition of reduction).
    We have $\red_{G}(p) \in \cone{P'}$ (since it either belongs to $P'$ or it is zero),
    and so $p = z + \reduce_{G}(p) \in \qideal{G} + \cone{P'}$.
  \item $P' \subseteq \qideal{Z} + \cone{P}$:
    Suppose $p' \in P$.
    Then $p' = \reduce_{G}(p)$ for some $p \in P$,
    and so $p' = z + p$ for some $z \in \qideal{G} = \qideal{Z}$;
    thus $p' \in \qideal{Z} + \cone{P}$. \qedhere
  \end{itemize}
\end{proof}

The above procedure also allows us to do model checking w.r.t. models
associated to algebraic cones. Given a ground formula $F$, we have a decision
procedure to check if $\ConeModel{\regcone(Z,P)} \models F$.

% \zak{Connect with model-checking?  I.e., the problem of checking $\ConeModel{\regcone(Z,P)} \models F$ is decidable (for $F$ a ground formula)}

% \begin{lemma} \label{lem:well-formed}
% Let $Z$ and $P$ be finite sets of rational polynomials.  If $\tuple{Z,P}$ is well-formed, then $\regcone(Z,P)$ is regular.
% \end{lemma}
% \begin{proof}
%   Suppose $p \in \units{\regcone(Z,P)}$.  Since $p$ and $-p$ both belong to $\regcone(Z,P)$,
%   we must have (by Lemma~\ref{lem:membership}) $\red_Z(p) \in \cone{P}$
%   and $\red_Z(-p) = -\red_Z(p) \in \cone{P}$, so $p \in \units{cone(P)}$.  Since $\cone{P}$ is salient, we must have
%   $\red_Z(p) = 0$, and thus $p \in \qideal{Z}$.
% \end{proof}

% \zak{I don't think this corollary is needed -- omit?}
% \begin{corollary}
%   If a regular cone $C$ has a finite representation as a well-formed pair $\tuple{Z, P}$, then
%   $\units{C} = \qideal{Z}$.
% \end{corollary}

\subsection{Satisfiability modulo $\ThQ$} \label{sec:sat-mod-Q}

This section presents a decision procedure for testing satisfiability of
ground $\sigma_{or}(X)$ formulas modulo the theory $\ThQ$. As usual, it is sufficient to develop a
\textit{theory solver}, which can test satisfiability of the conjunctive
fragment; formulas with disjunctions can be accommodated using DPLL($\mathcal{T}$)
\cite{CAV:GHNOT2004}.
% Therefore, we have a procedure $\texttt{get-model}(F)$ that returns a model that
% satisfies a cube in the disjunctive normal form of $F$, if $F$ is satisfiable
% modulo $\ThQ$. \zak{Delete this last sentence?  What does it add?}

Without loss of generality, a conjunctive formula $F$ can be written in the form
\[ F = \left(\bigwedge_{p \in P} 0 \leq p \right) \land \left( \bigwedge_{q \in Q} \lnot (0 \leq q) \right) \wedge \left( \bigwedge_{r \in R} \lnot(0 = r) \right) \]
where $P$, $Q$, and $R$ are finite sets of polynomials (noting the equivalences $x \leq y \equiv 0 \leq y - x$ and
$x = y \equiv 0 \leq x - y \land 0 \leq y - x$).  In the following, we will first show that it is possible to compute a finite representation of the \textit{least regular cone} $C$ that contains all of the non-negative polynomials $P$ (Theorem~\ref{thm:saturation}), and then show that $F$ is satisfiable if and only if $C$ is consistent and $\ConeModel{C} \models F$ (Theorem~\ref{thm:sat-correctness}).  Since $C$ is algebraic and computable from $F$, and checking that $C$ is consistent and that $\ConeModel{C} \models F$ is decidable, this yields a sound and complete procedure for checking satisfiability of ground $\sigma_{or}(X)$-formulas modulo $\ThQ$.

\begin{algorithm}[t]
\SetKwFunction{FSaturate}{saturate}

\Fn{\FSaturate{$Q$}}{
\Input{Set of polynomials $Q$}
\Output{Reduced pair $\tuple{Z,P}$ such that $\regcone(Z,P)$ is the least regular cone that contains $Q$}
$\tuple{Z,P} \gets \tuple{\emptyset, Q \cup \set{1}}$\;
\While{there is some non-zero $t \in \units{\cone{P}}$}{
  \tcc{Sampling from $\units{\cone{P}}$ can be implemented by (e.g.) linear programming}
  $\tuple{Z,P} \gets \reducecone{Z \cup \set{t}, P}{\preceq}$\;
}
\Return{$\tuple{Z,P}$}
}

\caption{Saturation \label{alg:saturation}}
\end{algorithm}

\begin{theorem} \label{thm:saturation}
  For any finite set of polynomials $Q \subseteq \mathbb{Q}[X]$, \textsf{saturate}($Q$) (\autoref{alg:saturation}) returns the least regular cone that contains $Q$.
  % \zak{The reader doesn't know what $Q$ is -- theorem statements should be self-contained.  E.g.: For any finite set of polynomials $Q$, \textsf{saturate}($Q$) (\autoref{alg:saturation}) returns the least regular cone that conatins $Q$.}
  \label{thm:model-construction-alg-correct}
\end{theorem}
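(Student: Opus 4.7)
The plan is to prove (a) the while loop terminates, and (b) upon exit, the returned $\regcone(Z,P)$ equals the least regular cone $C^*$ containing $Q$. I would organize the argument around three loop invariants: (I1) $\tuple{Z,P}$ is reduced with respect to $\preceq$; (I2) $Q \cup \set{1} \subseteq \regcone(Z,P)$; and (I3) $\regcone(Z,P) \subseteq C$ for every regular cone $C$ containing $Q$. Invariant (I1) is trivial at initialization (because $Z = \emptyset$ makes reduction the identity) and is preserved in the body by the definition of $\reducecone{\cdot}{\preceq}$ together with Lemma~\ref{lem:ordering}. Invariant (I2) holds initially because $Q \cup \set{1} \subseteq \cone{Q \cup \set{1}} = \regcone(\emptyset, Q \cup \set{1})$, is preserved by reducecone (Lemma~\ref{lem:ordering} preserves the generated algebraic cone), and is preserved by augmenting $Z$ (which only enlarges the algebraic cone).

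For termination, I would show that $\qideal{Z}$ strictly grows in each iteration. Suppose $t \in \units{\cone{P}}$ is non-zero. Writing $t = \sum_i \lambda_i p_i$ with $p_i \in P$ and using linearity of $\reduce_Z$ together with $\reduce_Z(p_i) = p_i$ (from (I1)), we obtain $\reduce_Z(t) = t \neq 0$, so $t \notin \qideal{Z}$. Hence $\qideal{Z \cup \set{t}} \supsetneq \qideal{Z}$, and since $\mathbb{Q}[X]$ is Noetherian no infinite strictly ascending chain of ideals exists, giving termination. For invariant (I3), if $C$ is a regular cone containing $Q$ and $\regcone(Z,P) \subseteq C$, then $t \in \cone{P} \subseteq C$ and $-t \in \cone{P} \subseteq C$, so $t \in \units{C}$; since $C$ is regular, $\units{C}$ is an ideal, so $\qideal{t} \subseteq C$, hence $\regcone(Z \cup \set{t}, P) = \qideal{Z \cup \set{t}} + \cone{P} \subseteq C$, and Lemma~\ref{lem:ordering} preserves this after reducecone.

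The crux is to verify that at loop exit $\regcone(Z,P)$ is itself regular: the output is then a regular cone containing $Q$ (by (I2)), and by (I3) applied to $C^*$ it coincides with $C^*$. It is a cone and contains $1$ by (I2); the non-trivial obligation is that $\units{\regcone(Z,P)}$ is an ideal. I would prove the stronger statement $\units{\regcone(Z,P)} = \qideal{Z}$. The $\supseteq$ direction is immediate. For $\subseteq$, apply Lemma~\ref{lem:membership} (which requires (I1)) twice: $u \in \units{\regcone(Z,P)}$ iff $\reduce_Z(u) \in \cone{P}$ and $-\reduce_Z(u) = \reduce_Z(-u) \in \cone{P}$, i.e., $\reduce_Z(u) \in \units{\cone{P}}$. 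The exit condition says $\units{\cone{P}} = \set{0}$, so $\reduce_Z(u) = 0$ and $u \in \qideal{Z}$. This step is the main obstacle, since it is where the exit guard, the reducedness invariant (I1), and Lemma~\ref{lem:membership} must be combined correctly; the rest of the argument is a bookkeeping exercise in loop invariants.
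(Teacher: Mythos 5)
Your proof is correct and follows the same architecture as the paper's: a Noetherian termination argument via the strictly ascending chain $\qideal{Z_0} \subsetneqq \qideal{Z_1} \subsetneqq \cdots$, the invariant that $\regcone(Z,P)$ always contains $Q \cup \set{1}$, and the inductive containment $\regcone(Z_i,P_i) \subseteq C$ for any regular cone $C \supseteq Q$, which exploits that $\units{C}$ is an ideal. In fact, you are \emph{more} complete than the paper: the paper's proof establishes these three facts but never explicitly checks that the cone returned at exit is itself \emph{regular}, which is needed to conclude that it equals the least regular cone rather than merely being sandwiched inside it. Your argument at loop exit --- that $\units{\regcone(Z,P)} = \qideal{Z}$, obtained by applying Lemma~\ref{lem:membership} to $u$ and $-u$ and using linearity of reduction together with the exit guard $\units{\cone{P}} = \set{0}$ --- is precisely the step the paper glosses over, and you are right to identify it as the crux.
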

\begin{proof}
  % \zak{I rewrote this to illustrate what I mean by making the high-level structure of the proof clear: we want to guide the reader through what we're proving at each point and how it fits into the general picture.  Formatting (like the indenting in the itemize environment for the proof of the intermediate lemma) can be helpful.}

  Let $Z_{i}, P_{i}$, and $t_{i}$ denote the values of $Z$, $P$, and $t$ after $i$ iterations of the loop in \autoref{alg:saturation}.
  We first observe an invariant of the
  algorithm: by the ordering lemma (Lemma~\ref{lem:ordering}), we have that for all $i$, we have
  $\regcone(Z_{i+1}, P_{i+1}) = \regcone(\reducecone{Z_{i} \cup \set{t}, P_i}{\preceq}) = \regcone(Z_{i}, P_{i}) + \qideal{t_{i}}$.  This follows because at each iteration $i$, $\tuple{Z_{i+1},P_{i+1}}$ is computed by reducing the pair
  $\tuple{Z_{i} \cup \set{t}, P_i}$.

  We first prove that the algorithm terminates. For iteration $i+1$, we have
  $\qideal{Z_{i+1}} = \qideal{Z_{i}} + \qideal{t_{i}}$. Since $t_{i}$ is a conic combination of polynomials in $P_{i}$ which are 
  reduced with respect to $Z_{i}$, we have $\reduce_{Z_{i}}(t_{i}) = t_{i} \neq 0$ and so $t_{i} \notin \qideal{Z_{i}}$. Hence we have
  $\qideal{Z_{i+1}} \supsetneqq \qideal{Z_{i}}$.
  For a contradiction, suppose that the algorithm does not terminate.  Then we have
  an infinite strictly ascending chain of ideals $\qideal{Z_0} \subsetneqq \qideal{Z_1} \subsetneqq \dots$ in $\mathbb{Q}[X]$,
  which contradicts the fact that $\mathbb{Q}[X]$ is a Noetherian ring \cite[Ch. 2 \S 5]{Book:CLO2015}.

  We now show that \textsf{saturate}($Q$) computes the least regular cone that
  contains $Q$.
  Since $Q \subseteq P_0$, and we have
  $\regcone(Z_{i}, P_{i}) \subseteq \regcone(Z_{i+1}, P_{i+1})$ for all $i$, we have that $\regcone(\textsf{saturate}(Q))$ must contain $Q$; it remains to show that it is the \textit{least} such regular cone.  Suppose that there is another
 regular cone $C$ such that $C \supseteq Q$.  We show that for all iterations $i$,
  $\regcone(Z_{i}, P_{i}) \subseteq C$ by induction on $i$. Initially this is true since $P_0 = Q \cup \set{1}$,
  and $C$ contains both $Q$ (by assumption) and $1$ (since $C$ is regular).  For the inductive step, we suppose that $\regcone(Z_{i}, P_{i}) \subseteq C$ and prove that
  $\regcone(Z_{i+1}, P_{i+1}) =  \regcone(Z_{i}, P_{i}) + \qideal{t_i} \subseteq C$.
  Since $C$ is closed under addition and $\regcone(Z_{i}, P_{i}) \subseteq C$ by the inductive hypothesis, it is sufficient to show that 
  $\qideal{t_i} \subseteq C$.
  Since $t_i \in \units{\cone{P_{i}}}$ and $\cone{P_{i}} \subseteq C$, we must have
  $t_i \in \units{C}$.
  Since $C$ is regular, $\units{C}$ is an ideal, and so
 $\qideal{t_i} \subseteq \units{C} \subseteq C$.
\end{proof}

\begin{mexample}
  Table~\ref{tbl:saturation} illustrates Algorithm~\ref{alg:saturation} on the set of polynomials \[Q = \set{ x^2-xy, xy - x^2, x^2y - z, w - xy^2, z-w, w^3}\ .\]
  Each row $i$ gives the set of zero polynomials $Z$ and set of positive polynomials $P$   at the beginning of iteration $i$, along with the selected additive unit $t$.  Intuitively, each round selects an additive unit from the positives, removes it from the positives and adds it to the zeros.  The algorithm terminates at iteration 3: the positive cone is salient, and so there is no additive unit to select.
  \end{mexample}

  \begin{table}
\caption{Execution of Algorithm~\ref{alg:saturation} on input $Q = \set{x^2-xy,xy-x^2,x^2y-z,w-xy^2,z-w,w^3}$.} \label{tbl:saturation}
\begin{tabular}{cccc}
\toprule
Iteration & $Z$ & $P$ & Additive unit\\
\midrule
0 & $\emptyset$ & $\set{1} \cup Q $ & $x^2-xy$\\
1 & $\set{xy - x^2}$ & $\set{1, x^3 - z, w - x^3, z-w, w^3}$ & $x^3-z$\\
2 & $\set{xy-x^2,x^3-z,yz-xz}$  & $\set{1,w-z,z-w,w^3}$ & $w-z$\\
3 & $\set{xy-x^2,x^3-z,yz-xz,w-z}$ & $\set{1,z^3}$ & --\\
\bottomrule
\end{tabular}
  \end{table}

The following theorem is the basis of our decision procedure for $\ThQ$: it shows that to test satisfiability of $F$, we
only need to check whether the least cone $C$ that agrees with all 
positive atoms of $F$ is consistent and other atoms do not contradict the
consequences of $C$. 
It does so by observing that $C$ is a ``minimal'' model of $F$
in that any model of $F$ must entail all polynomials in $C$ to be nonnegative.

\begin{theorem} \label{thm:sat-correctness}
  Let $F$ be the ground conjunctive formula
  \[
  F = \left(\bigwedge_{p \in P} 0 \leq p \right)  \land \left( \bigwedge_{q \in Q} \lnot (0 \leq q) \right) \wedge \left( \bigwedge_{r \in R} \lnot(0 = r) \right).
  \]
  Let $C$ be the least regular cone that contains $P$.  Then $F$ is satisfiable iff $C$ is consistent and $\ConeModel{C} \models F$.
  % Let $C = \mathbf{C}(\bigwedge_{i=1}^{n} 0 \leq p_{i})$.
  % \begin{enumerate}[label={(\roman*)}]
  %   \item Assume that a model for $F$ exists, i.e., $\mathfrak{M} \models F$.
  %         Then $C \subseteq \set{p: \mathfrak{M} \models 0 \leq p}$.
  %         \label{thm:min-model-for-conjunc-is-min-model}
  %   \item Suppose $C$ is consistent, and $q_{j} \notin C$ for all $j$, and
  %         $r_{l} \notin \units{C}$ for all $l$. Then $F$ is satisfiable
  %         with a model $\ConeModel{C} \models F$.
  %         \label{thm:alg-constructs-a-model-for-conjunc-formula}
  %   \item Conversely, if $F$ is satisfiable in the weak theory, then $C$ is consistent and
  %         $\ConeModel{C} \models F$.
  %         \label{thm:formula-sat-then-cone-consistent}
  % \end{enumerate}
  \label{thm:min-model-for-conjunction}
\end{theorem}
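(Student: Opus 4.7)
The plan is to prove the two directions separately, with the $(\Leftarrow)$ direction being essentially immediate from Lemma~\ref{lem:model-construction} and the $(\Rightarrow)$ direction relying on a ``minimality'' observation: $C$ is contained in $\CnM{\mathfrak{A}}$ for every model $\mathfrak{A}$ of $F$. This lets us transfer the truth of negative literals from $\mathfrak{A}$ back to $\ConeModel{C}$.

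For $(\Leftarrow)$, assume $C$ is consistent and $\ConeModel{C} \models F$. Since $C$ is a consistent regular cone, Lemma~\ref{lem:model-construction} tells us $\ConeModel{C}$ is a model of $\ThQ$, so $\ConeModel{C}$ witnesses satisfiability of $F$ modulo $\ThQ$.

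For $(\Rightarrow)$, assume $F$ is satisfied by some $\sigma_{or}(X)$-structure $\mathfrak{A}$ that is a model of $\ThQ$. I would first observe that $\CnM{\mathfrak{A}} = \{p \in \mathbb{Q}[X] : \mathfrak{A} \models 0 \leq p\}$ is a consistent regular cone (as already remarked after its definition in the paper) and that it contains $P$ because $\mathfrak{A} \models F$ implies $\mathfrak{A} \models 0 \leq p$ for each $p \in P$. By the minimality of $C$ as established in Theorem~\ref{thm:saturation}, we obtain $C \subseteq \CnM{\mathfrak{A}}$. Since $\CnM{\mathfrak{A}}$ is consistent (as $-1 \notin \CnM{\mathfrak{A}}$ by non-triviality and antisymmetry), $C$ is consistent too.

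It remains to check that $\ConeModel{C} \models F$ conjunct by conjunct, using the equivalences $\ConeModel{C} \models 0 \leq p$ iff $p \in C$ and $\ConeModel{C} \models 0 = r$ iff $r \in \units{C}$ noted just before Lemma~\ref{lem:model-construction}. The positive atoms $0 \leq p$ for $p \in P$ hold because $P \subseteq C$ by construction. For each negative atom $\lnot(0 \leq q)$ with $q \in Q$, if we had $q \in C$, then $q \in \CnM{\mathfrak{A}}$ by containment, giving $\mathfrak{A} \models 0 \leq q$ and contradicting $\mathfrak{A} \models F$. For each disequality $\lnot(0 = r)$ with $r \in R$, if $r \in \units{C}$, then both $r$ and $-r$ lie in $C \subseteq \CnM{\mathfrak{A}}$, so by antisymmetry in $\mathfrak{A}$ we would get $\mathfrak{A} \models 0 = r$, again a contradiction.

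The main obstacle (such as it is) is really the setup: one must clearly identify the minimality property of $C$ from Theorem~\ref{thm:saturation} and the two truth-in-$\ConeModel{C}$ equivalences for atoms, after which each conjunct of $F$ is dispatched by a one-line contrapositive argument against an arbitrary model $\mathfrak{A}$ of $F$. No new computations are needed beyond invoking containment $C \subseteq \CnM{\mathfrak{A}}$ and the axioms (non-triviality, antisymmetry) that $\mathfrak{A}$ already satisfies.
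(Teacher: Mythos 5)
Your proof is correct and follows essentially the same route as the paper's: invoke the minimality of $C$ (from Theorem~\ref{thm:saturation}) to obtain $C \subseteq \CnM{\mathfrak{A}}$ for any model $\mathfrak{A}$ of $F$, deduce consistency of $C$, and then verify each conjunct of $F$ in $\ConeModel{C}$ by contrapositive transfer to $\mathfrak{A}$. You flesh out the $(\Leftarrow)$ direction slightly more than the paper (explicitly citing Lemma~\ref{lem:model-construction}) and spell out the antisymmetry step for disequalities, but the argument is the same.
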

\begin{proof}
  If $C$ is consistent and $\ConeModel{C} \models F$, then $F$ is certainly
  satisfiable modulo $\ThQ$. We thus only need to prove the other
  direction.

  Suppose that $F$ is satisfiable modulo $\ThQ$.  Then there is some model $\mathfrak{A}$ of $\ThQ$ with $\mathfrak{A} \models F$.  We have that
  $C \subseteq \CnM{\mathfrak{A}}$, since $\CnM{\mathfrak{A}}$ is a regular cone that contains $P$ (since $\mathfrak{A} \models F$), and $C$ is the least such cone. It follows that $C$ is consistent, since if $-1 \in C$ then $-1 \in \CnM{\mathfrak{A}}$, which is not possible because $\CnM{\mathfrak{A}}$ is consistent.
  It remains to show that
  $\ConeModel{C} \models F$. Clearly $\ConeModel{C} \models 0 \leq p$ for
  all $p \in P$, since $P \subseteq C$.  For any $q \in Q$ we have
  $\mathfrak{A} \models \neg( 0 \leq q )$, since $\mathfrak{A} \models F$.
  Thus $q \notin \CnM{\mathfrak{A}}$ and furthermore $q \notin C$
  using the claim. Hence $\ConeModel{C}\models \neg (0 \leq q)$. Similarly,
  we have that $\ConeModel{C} \models \neg (0 = r)$ for all $r \in R$. Combining the
  above, we have $\ConeModel{C} \models F$.
\end{proof}

\paragraph{Decision procedure for $\ThQ$}
Summarizing, we have the following decision procedure for satisfiability of
conjunctive $\sigma_{or}(X)$-formulas modulo $\ThQ$: given a formula
\[F = 
  \left(\bigwedge_{p \in P} 0 \leq p \right) \land \left( \bigwedge_{q \in Q}
  \lnot (0 \leq q) \right) \wedge \left( \bigwedge_{r \in R} \lnot(0 = r)
  \right)\] First compute a representation $\tuple{Z',P'}$
  of the least regular cone containing $P$ using
  Algorithm~\ref{alg:saturation}.  If $\red_{Z'}(1) = 0$, then $F$
  is unsatisfiable ($\regcone(Z',P')$ is inconsistent).
  Otherwise, check whether $\ConeModel{\regcone(Z',P')}
  \models F$ by testing whether there is some $q \in Q$ with
  $\red_{Z'}(q) \in \product{\Qplus}{P'}$
  (Lemma~\ref{lem:membership}), or some $r \in R$ with $\red_{Z'}(r)
  = 0$; if such a $q$ or $r$ exists, then $F$ is unsatisfiable
  (Theorem~\ref{thm:sat-correctness}), otherwise,
  $\ConeModel{\regcone(Z',P')}$ satisfies $F$.

%% Satisfiability of $F$ modulo $\ThQ$ can thus be decided by using $\textit{saturate}$ to
%% compute the least regular cone $\tuple{Z', P'}$ that contains $P$, and
%% checking if it is consistent, i.e., $-1 \notin \tuple{Z', P'}$,
%% and all negative literals are satisfied,
%% i.e., $q \notin \regcone(Z', P')$ for all $q \in Q$
%% and $r \notin \product{\mathbb{Q}[X]}{Z}$ for all $r \in R$.

\subsection{Consequence-finding modulo $\ThQ$} \label{sec:conseq-finding-mod-Q}
In this subsection, we give an algorithm that computes all polynomial
consequences of a formula modulo $\ThQ$.
Let $X$ be a set of symbols and let $F$ be a $\sigma_{or}(X)$ formula.
Define the \textbf{consequence cone} $\CnX{F}{X}$ of $F$ as follows:
\[
  \CnX{F}{X} \defeq \set{ p \in \mathbb{Q}[X] : F \models_{\ThQ} 0 \leq p }
\]
It is easy to verify that for any
formula $F$, $\CnX{F}{X}$ is a regular cone.  We will omit the $X$ subscript when it can be understood from the context.
% Notice that $F$ is satisfiable modulo $\ThQ$ iff $\mathbf{C}(F)$ does not
% contain $-1$.

A simple ``eager'' strategy for
computing consequence cones operates as follows.
Suppose that $F$ is a formula with free variables $Y$, and that $X \subseteq Y$.
$\CnX{F}{X}$ operates as follows. First, we place $F$ in  disjunctive normal form; i.e., we compute a
formula that is equivalent to $F$ and takes the form $\bigvee_{i=1}^n G_i$ where
each $G_i$ is a conjunctive formula. Observe that
\[ \CnX{F}{X}
= \CnX{\bigvee_{i=1}^n G_i}{X}
= \bigcap_{i = 1}^n \CnX{G_i}{X}
=\bigcap_{i = 1}^n (\CnX{G_i}{Y} \cap \mathbb{Q}[X]) \]
since a disjunctive formula entails that a polynomial is non-negative exactly when each disjunct does.  Thus, the problem of computing $\CnX{F}{X}$ can be reduced to three sub-problems: (1) computing the consequence cone of a \textit{conjunctive} formula, (2) projection of a cone onto a subset of variables (i.e., the intersection of a regular cone with $\mathbb{Q}[X]$), and the
intersection of regular cones.
In the following, we show how to solve each sub-problem.
Before we do, we remark that the above strategy
has a ``lazy'' variant that avoids (explicit) DNF computation,
Algorithm~\ref{alg:consequence}. The lazy variant operates by iteratively
selecting a cube from the DNF of $F$, computing its consequence cone
(Lemma~\ref{lem:conseq-for-conjunction}), and adding blocking clauses so that
the same cube would not be selected again in future iterations.

% \zak{I don't think we need to distinguish this case. Lemma 8 does not use it in a significant way -- $C \subseteq \mathbf{C}(F)$ follows immediately from the fact that $\mathbf{C}(F)$ is a regular cone that contains each $p_i$.}
% To begin with, we claim that the regular cone of $F = \bigwedge_{i} 0 \leq p_{i}$ is simply
% the least regular cone that contains all $p_{i}$.
% \begin{lemma}
%   Let $Q = \set{q_{1}, \dots, q_{n}}$ be a set of polynomials. Then $\mathbf{C}(\bigwedge_{i} 0 \leq q_{i})$ is the least regular cone that contains $Q$.
%   \label{lem:saturation-gives-model-of-conj-pos-formula}
% \end{lemma}
% \begin{proof}
%   $\mathbf{C}(\bigwedge_{i} 0 \leq q_{i})$ is a regular cone that contains all
%   $q_{i}$, and $C$ is the smallest among regular cones containing all $q_{i}$,
%   thus $C \subseteq \mathbf{C}(\bigwedge_{i} 0 \leq q_{i})$.

%   We now prove that $\mathbf{C}(\bigwedge_{i} 0 \leq q_{i}) \subseteq C$.
%   Consider any $p \in \mathbf{C}(\bigwedge_{i} 0 \leq q_{i})$. According to the definition of $\mathbf{C}$, any model
%   $\mathfrak{M}$ such that $\mathfrak{M} \models \bigwedge_{i} 0 \leq q_{i}$
%   also satisfies  $\mathfrak{M} \models 0 \leq p$. Notice that
%   $\ConeModel{C}$ is such a model since $q_{i} \in C$ for all $i$. Thus
%   $\ConeModel{C} \models 0 \leq p$, or equivalently $p \in C$ according to
%   Lemma~\ref{lem:model-construction}.
% \end{proof}

\begin{algorithm}

  \SetKwFunction{FConseq}{consequence}
  \SetKwFunction{FModel}{get-model}
  \SetKwFunction{FIntersect}{intersect}
  \SetKwFunction{FProj}{project}

  \Fn{\FConseq{$F,X$}}{
\Input{$\sigma_{or}(Y)$ formula $F$ and set of variables $X \subseteq Y$ }
\Output{Reduced pair $\tuple{Z,P}$ with $\algconeX{Z, P}{X} = \CnX{F}{X}$.}
$G \gets F$\;%, with existentials replaced by variables in $Y$\;
$\tuple{Z,P} \gets \tuple{\set{1},\set{0}}$
\tcc*{Invariant: $\CnX{F}{X} \subseteq \regcone(Z,P)$}
\While{$G$ is satisfiable}{
\tcc{$\regcone(Z',P') = \CnX{G_i}{Y}$ for some cube $G_i$ of the DNF of $F$ (Lemma~\ref{lem:conseq-for-conjunction})}
$\tuple{Z', P'} \gets \FModel(G)$\; %\nic{get-model isn't currently defined, although it can be inferred.}
  $\tuple{Z',P'} \gets \FProj(Z',P', X)$ \;
  $\tuple{Z,P} \gets \FIntersect(Z,P,Z',P')$\;
  \tcc{Block any model $\mathfrak{A}$ with $\CnX{\mathfrak{A}}{X} \subseteq \regcone(Z,P)$}
  $G \gets G \land \lnot \left(\left(\bigwedge_{z \in Z} z = 0\right) \land \left(\bigwedge_{p \in P} 0 \leq p\right)\right)$
}
\Return{$\tuple{Z,P}$}
}
\caption{Lazy consequence finding \label{alg:consequence}}
\end{algorithm}

In the following, we first show how to solve the three sub-problems
that are used in consequence finding, before proving the correctness
of the procedure.

\subsubsection{Non-negative cones of conjunctive formulas}

First, we show how to compute the cone of non-negative polynomials of a \textit{conjunctive} formula.  In fact, we show that we have already shown how to compute this set of consequences: it is the least regular cone that contains the non-negative atoms of the formula, as computed by the \textit{saturate} procedure:
\begin{lemma} \label{lem:consequence-cone}
  Let $Y$ be a set of symbols and let
  \[
  F = \left(\bigwedge_{p \in P} 0 \leq p \right)  \land \left( \bigwedge_{q \in Q} \lnot (0 \leq q) \right) \wedge \left( \bigwedge_{r \in R} \lnot(0 = r) \right)
 \]
  be a ground formula over $\sigma_{or}(Y)$.
  Let $C$ be the least regular cone that contains $P$. If $F$ is
  satisfiable modulo $\ThQ$, then $C = \CnX{F}{Y}$.
  \label{lem:conseq-for-conjunction}
\end{lemma}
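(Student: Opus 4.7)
The plan is to prove the two inclusions $C \subseteq \CnX{F}{Y}$ and $\CnX{F}{Y} \subseteq C$ separately, leveraging the work already done in Theorem~\ref{thm:sat-correctness}. The ``model-theoretic'' heavy lifting has essentially been completed there; all that remains is to repackage its conclusion as a statement about consequence cones.

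For the inclusion $C \subseteq \CnX{F}{Y}$, I would first observe that $\CnX{F}{Y}$ is itself a regular cone. Closure under addition and non-negative scalar multiplication follows from compatibility of $\leq$ with addition and from the axioms of $\ThQ$; that $1 \in \CnX{F}{Y}$ follows from $0 \leq 1$ (non-triviality); and that $\units{\CnX{F}{Y}}$ is an ideal in $\mathbb{Q}[Y]$ follows because if $F \models_{\ThQ} 0 \leq p$ and $F \models_{\ThQ} 0 \leq -p$ then antisymmetry gives $F \models_{\ThQ} p = 0$, so the units are closed under multiplication by arbitrary polynomials. Since each $p \in P$ satisfies $F \models_{\ThQ} 0 \leq p$ by the definition of $F$, we have $P \subseteq \CnX{F}{Y}$, and by minimality of $C$ we get $C \subseteq \CnX{F}{Y}$.

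For the inclusion $\CnX{F}{Y} \subseteq C$, the key idea is to exhibit $\ConeModel{C}$ as a concrete witness. Since $F$ is satisfiable modulo $\ThQ$ by hypothesis, Theorem~\ref{thm:sat-correctness} gives us that $C$ is consistent and $\ConeModel{C} \models F$. By Lemma~\ref{lem:model-construction}, $\ConeModel{C}$ is a model of $\ThQ$. Hence, for any polynomial $p \in \CnX{F}{Y}$ (i.e., any $p$ with $F \models_{\ThQ} 0 \leq p$), we must have $\ConeModel{C} \models 0 \leq p$. By the defining property of $\ConeModel{C}$ noted just before Lemma~\ref{lem:model-construction}, this is equivalent to $p \in C$, which gives the desired inclusion.

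There is no real technical obstacle here; the lemma is essentially a restatement of Theorem~\ref{thm:sat-correctness} from the perspective of consequence-finding rather than satisfiability-testing. The only subtle point worth making explicit in the write-up is why the hypothesis that $F$ is satisfiable is needed at all: without it, $C$ could be inconsistent (equal to $\mathbb{Q}[Y]$), in which case $\ConeModel{C}$ need not be a model of $\ThQ$ and the equality could fail --- indeed, if $F$ is unsatisfiable then $\CnX{F}{Y} = \mathbb{Q}[Y]$ vacuously, but the least \emph{regular} cone containing $P$ could still be a proper subset (consider, e.g., the case where the contradiction arises from a negated atom in $Q$ or $R$).
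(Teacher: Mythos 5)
Your proof is correct and follows essentially the same route as the paper's: both directions rest on (a) observing that $\CnX{F}{Y}$ is a regular cone containing $P$ (so minimality gives $C \subseteq \CnX{F}{Y}$), and (b) invoking Theorem~\ref{thm:sat-correctness} to get $\ConeModel{C} \models F$, from which $\CnX{F}{Y} \subseteq C$ follows because $\ConeModel{C}$ is a $\ThQ$-model. One small quibble with your closing remark: if the least regular cone $C$ containing $P$ were itself inconsistent (i.e.\ $C = \mathbb{Q}[Y]$), then $F$ would automatically be unsatisfiable and $\CnX{F}{Y}$ would also equal $\mathbb{Q}[Y]$, so the equality would \emph{not} fail in that scenario; the genuine failure mode when $F$ is unsatisfiable is the second one you describe, where the contradiction comes from a negated atom in $Q$ or $R$ while $C$ remains a proper consistent cone.
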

\begin{proof}
  Assume $F$ is satisfiable. We have $C \subseteq \CnX{F}{Y}$ since $\CnX{F}{Y}$
  is a regular cone that contains $P$. We thus only need to show
  that $\CnX{F}{Y} \subseteq C$.  Since $F$ is satisfiable, we have $\ConeModel{C} \models F$ by Theorem~\ref{thm:min-model-for-conjunction}.  For any $q \in \CnX{F}{Y}$, we have
  $F \models_{\ThQ} 0 \leq q$ and so $\ConeModel{C} \models 0 \leq q$ and
  therefore $q \in C$.
\end{proof}

This lemma justifies the assertion in Algorithm~\ref{alg:consequence} that we can implement a procedure $\texttt{get-model}(G)$ that returns a pair $\tuple{Z,P}$ where $\regcone(Z,P) = \CnX{G_i}{Y}$ for some cube
$G_i$ of the DNF of $G$.  The DPLL($\mathcal{T}$) algorithm samples some (propositionally) satisfiable cube $G_i$ of $G$ and decides if $G_i$ is $\ThQ$-satisfiable using
the procedure outlined in Section~\ref{sec:sat-mod-Q}; if not, then the cube $G_i$ is blocked, and we continue; if yes, then the model that is produced is the least cone containing all the non-negative atoms of $G_i$.  Since $G$ is $F$ conjoined with negated atoms,  this is also the least cone containing all of the non-negative atoms of some cube of $F$.

\subsubsection{Projection of algebraic cones} \label{sec:subsub-proj-alg-cones}

This section addresses the following problem: given finite sets $Z,P \subseteq \mathbb{Q}[Y]$ and a subset $X \subseteq Y$, compute $Z',P' \subseteq \mathbb{Q}[X]$ such that $\regcone(Z',P') = \regcone(Z,P) \cap \mathbb{Q}[X]$.  First, we review standard methods for solving this problem for ideals and finitely-generated cones (that is, the case when $P$ is empty, and the case where $Z$ is empty).  The algorithm for algebraic cones is a combination of the two.

Any monomial $m$ over variables $Y$ can be regarded as the product of two monomials $m_Xm_{\overline{X}}$ where $m_X$ is a monomial over $X$ and $m_{\overline{X}}$ is a monomial over $Y \setminus X$.
For any monomial ordering $\preceq$, we may define an \textit{elimination ordering} $\preceq_{X}$ where $m_Xm_{\overline{X}} \preceq_X n_Xn_{\overline{X}}$ iff $m_{\overline{X}} \prec n_{\overline{X}}$ or $m_{\overline{X}} = n_{\overline{X}}$ and $m_{X} \preceq n_{X}$.  The classical algorithm for ideal projection computes a basis for $\product{\mathbb{Q}[Y]}{Z} \cap \mathbb{Q}[X]$ by computing a Gr\"{o}bner basis $G$ for $Z$ with respect to the order $\preceq_X$, and then taking 
$G \cap \mathbb{Q}[X]$ \cite[Ch. 3]{Book:CLO2015}.
By the \textit{ordering} and \textit{membership} properties of Gr\"{o}bner bases, if $p \in \product{\mathbb{Q}[Y]}{Z} = \product{\mathbb{Q}[Y]}{G}$, then $p = q_1g_1 + \dots + q_ng_n$ for some $q_1,\dots,q_n \in \mathbb{Q}[Y]$ and $g_1,\dots,g_n \in G$ with $\leadmon(q_ig_i) \preceq_X \leadmon(p)$ for all $i$.  Supposing that $p$ is also in $\mathbb{Q}[X]$, each $q_i$ and $g_i$ must also be in $\mathbb{Q}[X]$.

We now turn to the case of finitely-generated cones.    Since $P$ is a finite collection of polynomials, there is a finite set of monomials that appear in any polynomial in $P$, which we call $M$.  Then we can see $\cone{P} \cap \mathbb{Q}[X] = \cone{P} \cap \product{\mathbb{Q}}{M \cap [X]}$; and so our problem is to compute the intersection of a cone and a linear space---this is the \textit{dual view} of the problem solved by polyhedral projection, for which there are several known algorithms.  For the sake of completeness, we will describe how to apply Fourier-Motzkin elimination, which is one such algorithm.  Suppose that we wish to compute $\cone{P} \cap \product{\mathbb{Q}}{N}$ for some set of monomials $N$.  Fourier-Motzkin elimination proceeds by eliminating one dimension, a monomial $m \in M \setminus N$, at a time.  First, we may normalize $P$ so that every
polynomial in $P$ takes the form $p + am$, where $m$ does not appear in $p$, and $a$ is either 0, 1, or -1 (by multiplying each polynomial with an appropriate non-negative scalar).  Then, take $Q = \{ p : p + 0m \in P \} \cup \{ p + q : p + m \in P, q - m \in P \}$. $\cone{Q}$ is precisely $\cone{P} \cap \product{\mathbb{Q}}{M \setminus m}$, since any non-negative combination of elements of $P$ that results in a coefficient of 0 for $m$ is also a non-negative combination of elements of $Q$.
Repeating this process for each monomial in $M \setminus N$, we get a finite set of polynomials that we denote $\polyhedralproject{P}{N}$, with $\cone{\polyhedralproject{P}{N}} = \cone{P} \cap \product{\mathbb{Q}}{N}$.

Finally, we put the two pieces together, by observing that we can project an algebraic cone $\regcone(Z,P)$ by separately projecting the ideal $\qideal{Z}$ and cone $\cone{P}$, \textit{provided that $\tuple{Z,P}$ is reduced with respect to the elimination ordering}.  That is, we define
\[ \project_X(Z,P) \defeq \tuple{ G \cap \mathbb{Q}[X], \polyhedralproject{\{ \red_G(p) : p \in P \}}{[X]} } \]
where $G$ is a Gr\"{o}bner basis for $\qideal{Z}$ w.r.t. the elimination order $\preceq_X$.

\begin{lemma}[Decomposition] \label{lem:decomp-to-smaller-monos}
  Let $Z, P \subseteq \mathbb{Q}[X]$ be finite sets of polynomials 
  such that $\tuple{Z,P}$ is
  reduced w.r.t. a monomial ordering
  $\preceq$. Suppose that
  $q \in \regcone(Z,P)$ is written as $q = z + p$ for some $z \in \qideal{Z}$
  and $p \in \cone{P}$. Then $\leadmon(z) \preceq \leadmon(q)$ and $\leadmon(p) \preceq \leadmon(q)$.
\end{lemma}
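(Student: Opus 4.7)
The plan is to reduce the problem to a direct application of the Ordering property of Gröbner bases. The key observation is that, because $\langle Z, P\rangle$ is reduced, the cone part $p$ is already in normal form with respect to $Z$, so $p$ coincides with $\reduce_Z(q)$. From there, the Ordering property gives both inequalities at once.

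First, I would show $\reduce_Z(p) = p$. Write $p = \lambda_1 p_1 + \cdots + \lambda_k p_k$ with $\lambda_i \in \Qplus$ and $p_i \in P$. By the Linearity property of $\reduce_Z$, $\reduce_Z(p) = \sum_i \lambda_i \reduce_Z(p_i) = \sum_i \lambda_i p_i = p$, where the middle equality uses the second reducedness condition, namely $\reduce_Z(p_i) = p_i$ for each generator of the cone.

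Next, I would apply $\reduce_Z$ to the decomposition $q = z + p$. Since $z \in \qideal{Z}$, the Membership property gives $\reduce_Z(z) = 0$, and combined with $\reduce_Z(p) = p$ and linearity, this yields $\reduce_Z(q) = p$. Now the Ordering property (which requires $Z$ to be a Gröbner basis, the first reducedness condition) applied to $q$ produces a representation
\[
  q = q_1 g_1 + \cdots + q_n g_n + \reduce_Z(q)
\]
with $g_i \in Z$, $q_i \in \mathbb{Q}[X]$, and $\leadmon(q_i g_i) \preceq \leadmon(q)$ for every $i$. Substituting $\reduce_Z(q) = p$ and subtracting, we get $z = q - p = \sum_i q_i g_i$; hence every monomial appearing in $z$ is bounded by $\leadmon(q)$, so $\leadmon(z) \preceq \leadmon(q)$. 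Likewise, $p = \reduce_Z(q)$ appears as a summand in the above representation, and since the total sum is $q$ and all $q_i g_i$ terms have leading monomial $\preceq \leadmon(q)$, no monomial in $p$ can exceed $\leadmon(q)$, yielding $\leadmon(p) \preceq \leadmon(q)$. The edge case $q = 0$ (for which $\leadmon$ is undefined) forces $z = p = 0$ and is vacuous.

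I do not foresee a genuine obstacle here; the only subtlety is recognizing that reducedness is exactly what lets us identify the cone component with the Gröbner-basis normal form, after which the Ordering property does all the work. The bookkeeping step of pushing $\leadmon(q_i g_i) \preceq \leadmon(q)$ through a sum to conclude $\leadmon(\sum q_i g_i) \preceq \leadmon(q)$ is routine, since the leading monomial of a sum is at most the maximum of the leading monomials of the summands.
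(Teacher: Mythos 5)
Your proof is correct, but it takes a genuinely different route from the paper's. The paper argues by contradiction: if, say, $\leadmon(z) \succ \leadmon(q)$, then the offending high-degree terms of $z$ and $p$ must cancel, forcing $\leadmon(z) = \leadmon(p)$; this is impossible because $z \in \qideal{Z}$ means $\leadmon(z)$ is divisible by $\leadmon(g)$ for some $g \in Z$, whereas $p$ is a conic combination of $Z$-reduced polynomials, so $\leadmon(p)$ cannot be divisible by any such $\leadmon(g)$. Your argument is direct: you observe that reducedness together with linearity of $\reduce_Z$ pins down $p = \reduce_Z(q)$ (the decomposition is, in fact, unique), and then the Ordering property produces both bounds at once. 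The two proofs hinge on complementary features of Gr\"{o}bner bases: the paper leans on leading-monomial divisibility (what distinguishes ideal elements from reduced ones), while you lean on the triple of Membership/Linearity/Ordering used as a black box. Your version is arguably tidier and also extracts a stronger structural fact (uniqueness of the decomposition and $z = q - \reduce_Z(q)$) along the way, whereas the paper's version keeps the argument entirely local to leading monomials. Both correctly rely on the same two reducedness hypotheses --- $Z$ a Gr\"{o}bner basis, each $p_i$ already in normal form --- so neither gets something for free.
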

\begin{proof}
  Write $q \in \regcone(Z, P)$ as $z+p$ for $z \in \qideal{Z}$
  and $p \in \cone{P}$.

  For a contradiction, suppose that the lemma does not hold.
  Since $q = z + p$, the terms in $z$ and $p$ whose monomials are
  larger than $\leadmon(q)$ must cancel out.
  Thus we have $z = z' + t$ and $p = p' - t$ for some non-zero $t$ such that
  $\leadmon(z') \preceq \leadmon(q)$,
  $\leadmon(p') \preceq \leadmon(q)$,
  and $\leadmon(q) \prec \leadmon(t)$.
  Since
  $\leadmon(z') \preceq \leadmon(q) \prec \leadmon(t)$,
  we have
  $\leadmon(z) = \leadmon(z' + t) = \leadmon(t)$.
  Similarly, $\leadmon(p) = \leadmon(p' - t) = \leadmon(t)$.
  Thus $\leadmon(z) = \leadmon(t) = \leadmon(p)$.
  Since $z \in \qideal{Z}$, we must have $\red_Z(z) = 0$, and so
  $\leadmon(z)$ is divisible by $\leadmon(g)$ for some $g \in Z$.
  On the other hand, each $p_i \in P$ is reduced with respect to
  $Z$; since $p \in \cone{P}$, we have that
  $\leadmon(p) = \leadmon(z)$ is \textit{not} divisible by $\leadmon(g)$ for all $g \in Z$.
  This is a contradiction.
\end{proof}

\begin{theorem}[Projection] \label{thm:projection}
  Let $Z, P \subseteq \mathbb{Q}[Y]$ be finite sets of polynomials, and let $X \subseteq Y$.  Then
  \[ \algconeX{\project_X(Z,P)}{X} = \algconeX{Z,P}{Y} \cap \mathbb{Q}[X]\ . \]
\end{theorem}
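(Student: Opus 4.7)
The plan is to prove the two inclusions separately. The $\subseteq$ direction is a direct unfolding of definitions. Letting $G = \gb{Z}{\preceq_X}$, any $q \in \algconeX{\project_X(Z,P)}{X}$ decomposes as $q = i + c$ with $i \in \qideal{G \cap \mathbb{Q}[X]}$ and $c \in \cone{\polyhedralproject{\set{\red_G(p):p\in P}}{[X]}}$. Since $G \cap \mathbb{Q}[X] \subseteq \qideal{G} = \qideal{Z}$, we get $i \in \qideal{Z}$. Each $\red_G(p)$ differs from $p$ by an element of $\qideal{Z}$, so $\cone{\set{\red_G(p):p\in P}} \subseteq \cone{P} + \qideal{Z}$, and Fourier--Motzkin only shrinks the cone, giving $c \in \cone{P} + \qideal{Z}$. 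Hence $q \in \qideal{Z} + \cone{P} = \algconeX{Z,P}{Y}$, and $q \in \mathbb{Q}[X]$ by construction.

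For the reverse inclusion, I would first apply the Ordering Lemma with the elimination ordering $\preceq_X$ to replace $\tuple{Z,P}$ with its reduction $\tuple{G, P'}$, where $P' = \set{\red_G(p) : p \in P, \red_G(p) \neq 0}$; this preserves the algebraic cone. Given $q \in \regcone(G, P') \cap \mathbb{Q}[X]$, write $q = z + p$ with $z \in \qideal{G}$ and $p \in \cone{P'}$. The Decomposition Lemma, which is the technical workhorse enabled by reducedness, yields $\leadmon(z) \preceq_X \leadmon(q)$ and $\leadmon(p) \preceq_X \leadmon(q)$, and $\leadmon(q) \in [X]$ because $q \in \mathbb{Q}[X]$.

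The key property of the elimination ordering is that any $n$ with $n \preceq_X m$ and $m \in [X]$ must itself lie in $[X]$: writing $n = n_X n_{\overline{X}}$, and observing that $m$'s $Y\setminus X$ part is $1$, the ordering forces $n_{\overline{X}} \preceq 1$, so $n_{\overline{X}} = 1$. Applying this to every monomial of $z$ and of $p$ (each $\preceq_X$ its leading monomial, which is in $[X]$) shows $z, p \in \mathbb{Q}[X]$. For the cone part, $p \in \cone{P'} \cap \product{\mathbb{Q}}{[X]} = \cone{\polyhedralproject{P'}{[X]}}$ by the defining property of Fourier--Motzkin. For the ideal part, the Ordering property of Gr\"{o}bner bases gives $z = \sum_i q_i g_i$ with $\leadmon(q_ig_i) \preceq_X \leadmon(z) \in [X]$; hence $\leadmon(q_i)\leadmon(g_i) \in [X]$, which forces $\leadmon(q_i), \leadmon(g_i) \in [X]$, and then a second application of the elimination-ordering observation forces all monomials of $q_i$ and $g_i$ into $[X]$. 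Thus $g_i \in G \cap \mathbb{Q}[X]$ and $z \in \qideal{G \cap \mathbb{Q}[X]}$, so $q = z + p \in \algconeX{\project_X(Z,P)}{X}$.

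The main obstacle is controlling cancellation between the ideal and cone components: a priori, $z$ and $p$ could each involve monomials outside $[X]$ that cancel in the sum, in which case neither summand would project to anything meaningful. The Decomposition Lemma, which requires the pair to be reduced with respect to $\preceq_X$, is exactly what rules this out, and is why the algorithm must recompute the Gr\"{o}bner basis under the elimination ordering rather than reusing whatever representation of $\regcone(Z,P)$ was already on hand.
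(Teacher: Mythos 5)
Your proof is correct and follows essentially the same route as the paper's: reduce the pair with respect to the elimination ordering, invoke the Decomposition Lemma to split $q = z + p$ with both leading monomials bounded by $\leadmon(q)$, and conclude that each component separately lies in $\mathbb{Q}[X]$ so that the ideal and cone projections can be applied independently. The only difference is presentational — you spell out the elimination-ordering monomial argument and re-derive the elimination-ideal property from the Ordering property of Gr\"{o}bner bases, which the paper delegates to its background discussion and to the classical elimination theorem.
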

\begin{proof}
  Let $G$ be a Gr\"{o}bner basis for $\product{\mathbb{Q}[Y]}{Z}$
  w.r.t. the elimination order $\preceq_X$, let
  $Z' = G \cap \mathbb{Q}[X]$,
  and let
  $P' = \polyhedralproject{\red_G(P)}{[X]}$,
  where $\red_G(P)$ denotes the set $\set{ \red_G(p) : p \in P }$.
  By Lemma~\ref{lem:ordering}, we have
  $\algconeX{Z, P}{Y} = \algconeX{G, \reduce_G(P)}{Y}$.
  We only need to show that
  $\algconeX{Z', P'}{X} = \algconeX{G, \reduce_G(P)}{Y} \cap \mathbb{Q}[X]$.
  We know that $\algconeX{Z', P'}{X} \subseteq \algconeX{G, \reduce_G(P)}{Y} \cap \mathbb{Q}[X] $
  since $G \cap \mathbb{Q}[X]$ is a basis for
  $\product{\mathbb{Q}[Y]}{G} \cap \mathbb{Q}[X]$ and
  $\cone{\polyhedralproject{\reduce_G(P)}{[X]}} = \cone{\reduce_G(P)} \cap \mathbb{Q}[X]$.
  Thus we only need to show the other direction.

  Consider any $y \in \algconeX{G, \reduce_G(P)}{Y} \cap \mathbb{Q}[X]$,
  i.e., $y = z + p$ where $z \in \product{\mathbb{Q}[Y]}{G}$ and $p \in \cone{\reduce_G(P)}$.
  By \autoref{lem:decomp-to-smaller-monos} we know that
  $\leadmon(z) \preceq_X \leadmon(y)$ and $\leadmon(p) \preceq_X \leadmon(y)$.
  Since $y \in \mathbb{Q}[X]$, we must have
  $z \in  \product{\mathbb{Q}[Y]}{G}  \cap \mathbb{Q}[X]$
  and $p \in \cone{ \reduce_G(P)} \cap \mathbb{Q}[X]$.
  Thus every $y \in \regcone(G,  \reduce_G(P))$ can be written as $z + p$ where
  $z \in \product{\mathbb{Q}[Y]}{G}  \cap \mathbb{Q}[X] = \product{\mathbb{Q}[X]}{Z'}$
  and $p \in \cone{ \reduce_G(P)} \cap \mathbb{Q}[X] = \cone{P'}$,
  hence $y = z + p \in \algconeX{Z', P'}{X}$.
\end{proof}

%\begin{algorithm}
%\SetKwFunction{FProj}{project}

%\Sub{\FProj{$Z, P, X$}}{
%\Input{Sets of polynomials $Z$ and $P$, and a subset $X$ of variables.}
%\Output{A reduced pair $\tuple{Z', P'}$ such that $\regcone(Z', P') = \regcone(Z, P) \cap \mathbb{Q}[X]$.}
%$G \gets \text{a Gr\"obner basis for } \qideal{Z} \text{ w.r.t. elimination ordering } \preceq_{X}$\;
%$Z' \gets G \cap \mathbb{Q}[X]$\;
%$S \gets \{ \reduce_G(p) : p \in P, \reduce_G(p) \neq 0 \}$\;
% $V \gets \{x = (x_{1}, \dots, x_{n})^{\intercal} \in \mathbb{Q}^{n}:    \begin{bmatrix}
%   s_{1} & \dots & s_{n}
% \end{bmatrix} x \in \mathbb{Q}[X], x_{i} \geq 0, s_{i} \in S \}$\;
% $G \gets \text{generators } g_{1}, \dots, g_{m} \text{ of cone } V$\;
%$P' \gets \project_X(S)$\;
% $P' \gets \{ p_{i}' : p_{i}' = \begin{bmatrix}
%   s_{1} & \dots & s_{n}
% \end{bmatrix} g_{i}, \forall i \in 1..m \}$\;
% \zak{Don't we just take $P' = \textit{project}_X(S)$?}
%\Return{$\tuple{Z',P'}$}
%}
%\caption{Projection %\label{alg:proj}}
%\end{algorithm}

\subsubsection{Intersection of algebraic cones} \label{sec:intersection}

This section addresses the following problem: given finite sets $Z_1,P_1,Z_2,P_2 \subseteq \mathbb{Q}[X]$, compute $Z,P \subseteq \mathbb{Q}[X]$ such that $\regcone(Z,P) = \regcone(Z_1,P_1) \cap \regcone(Z_2,P_2)$.
The essential idea is to reduce the problem to a projection problem---essentially the same idea as the standard algorithm for ideal intersection \cite[Ch. 4 \S 3]{Book:CLO2015} and the constraint-based algorithm for polyhedral join \cite{TPLP:BKM2005}.\footnote{Recalling that cones are dual to polyhedra, cone intersection corresponds to polyhedral join, and cone sum to polyhedral meet.  The cone sum (like polyhedral meet) is easy: $\regcone(Z_1,P_1) + \regcone(Z_2,P_2) = \regcone(Z_1 \cup Z_2, P_1 \cup P_2)$.}

The essential idea is to introduce a parameter $t$ that does not belong to $X$, and to ``tag'' each element of a cone $C_1 = \regcone(Z_1,P_1)$ by multiplying by $t$, and to tag elements of $C_2 = \regcone(Z_2,P_2)$ by multiplying by $1-t$.
If $p$ is a polynomial that belongs to $C_1$ and $C_2$, then $tp + (1-t)p = p$ belongs to their ``tagged sum.''
This yields the following definition:
\[ \textit{intersect}(Z_1,P_1,Z_2,P_2) \defeq \project_X(tZ_1 \cup (1-t)Z_2, tP_1 \cup (1-t)P_2) \]
where the notation $pQ \defeq \set{ pq : q \in Q}$ (for a polynomial $p$ and set of polynomials $Q$) denotes the ``tagging'' operation.

\begin{mexample}
  Consider the regular cones
  \[
  \begin{array}{rcccl}
  C_1 &=& \Cn{x = 1 \land y \leq 1} &=& \regcone(\set{x-1},\set{1, 1-y})\\
  C_2 &=& \Cn{y = 2 \land 2 \leq x^2 } &=& \regcone(\set{y-2},\set{1,x^2-2})
  \end{array}
  \]
  To intersect $C_1$ and $C_2$, we form
  $Z = \set{t (x-1), (1-t)(y-2)}$
  and $P = \set{t, t(1-y), (1-t), (1-t)(x^2-2)}$.
  To compute a Gr\"{o}bner basis $G$
  (w.r.t $\preceq_{\set{x,y}}$) for $\product{\mathbb{Q}[x,y,t]}{Z}$,
  we start with $Z = \set{tx-t, -ty+2t+y-2}$ and
  apply Buchberger's algorithm to complete it to
  $G = \set{tx-t, -ty+2t+y-2, xy-2x-y+2}$.
  Then reduce $P$ to get $\reduce_G(P) = \set{t, -t-y+2, -t + 1, t + x^2 - 2 }$.
  Then intersecting $G$ with $\mathbb{Q}[x,y]$ and projecting the monomial $t$ out of
  $\reduce_G(P)$ using Fourier-Motzkin elimination, we get
   \[
   C_1 \cap C_2 = \regcone(\set{xy-2x-y+2},\set{1, -y+2, x^2-y, x^2 - 1})\ ,
   \]
   which is the non-negative cone of the formula
   $(x-1)\cdot (y-2) = 0 \land 2 \geq y \land x^2 \geq y \land x^2 \geq 1$.
\end{mexample}

To prove correctness of this construction, we need the following technical lemma relating cones to their ``tagged'' counterparts:

\begin{lemma}
\label{lem:tag}
  Let $X$ be a set of variables and $t \notin X$. Let $Z, P \subseteq \mathbb{Q}[X]$ be
  finite sets of polynomials, and let $f \in \mathbb{Q}[t]$ be a polynomial in $t$.
  Then for all $a \in \mathbb{Q}$ such that $f(a) \geq 0$, and for all
  $q \in \algconeX{fZ,fP}{X,t}$, we have $q[t \mapsto a] \in \algconeX{Z,P}{X}$
  (where $q[t \mapsto a]$ denotes substitution of occurrence of $t$ in $q$ with $a$).
  % \zak{The $\qideal{...}$ notation makes it a little difficult to see that
  % $\qideal{I}$ is an ideal of $\mathbb{Q}[X]$, and
  % $\qideal{ f(t)p : p \in I }$ is an ideal of $\mathbb{Q}[X,t]$. We should try
  % to think of a way to make this more clear.}
\end{lemma}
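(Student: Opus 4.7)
The plan is to unfold the definition of $\algconeX{fZ,fP}{X,t}$ explicitly as a sum of an ideal element and a conic combination, and then show that substitution $t \mapsto a$ preserves this structure in $\mathbb{Q}[X]$, using $f(a) \geq 0$ to keep the conic coefficients non-negative.

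More concretely, write $Z = \{z_1,\dots,z_m\}$ and $P = \{p_1,\dots,p_n\}$. By definition of the algebraic cone over $X \cup \{t\}$, any $q \in \algconeX{fZ,fP}{X,t}$ can be written as
\[
  q \;=\; \sum_{i=1}^m r_i \cdot (f z_i) \;+\; \sum_{j=1}^n \lambda_j \cdot (f p_j)
\]
for some coefficients $r_1,\dots,r_m \in \mathbb{Q}[X,t]$ and $\lambda_1,\dots,\lambda_n \in \Qplus$. Applying the substitution $t \mapsto a$ commutes with sums and products, and $z_i, p_j \in \mathbb{Q}[X]$ are unaffected, so
\[
  q[t\mapsto a] \;=\; \sum_{i=1}^m \bigl(r_i[t\mapsto a]\bigr) \cdot f(a) \cdot z_i \;+\; \sum_{j=1}^n \lambda_j \cdot f(a) \cdot p_j .
\]

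Now I would verify each summand lies in the right piece of $\algconeX{Z,P}{X} = \qideal{Z} + \cone{P}$. For the first sum, each $r_i[t\mapsto a] \in \mathbb{Q}[X]$ since substituting a rational for $t$ in a polynomial in $\mathbb{Q}[X,t]$ yields a polynomial in $\mathbb{Q}[X]$; multiplying by the rational scalar $f(a)$ keeps it in $\mathbb{Q}[X]$, so the first sum is an element of $\qideal{Z}$ (computed in $\mathbb{Q}[X]$). For the second sum, the hypothesis $f(a) \geq 0$ together with $\lambda_j \in \Qplus$ gives $\lambda_j f(a) \in \Qplus$, so the sum is a conic combination of elements of $P$ and thus lies in $\cone{P}$.

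The main ``obstacle,'' such as it is, is just bookkeeping: making sure to separate the role of $t$ as a ring variable (in the ideal coefficients $r_i$) from its role as a scaling factor (inside $f$), and noting that both become scalars in $\mathbb{Q}$ after substitution so that $f(a) \geq 0$ suffices to preserve the conic combination. Combining the two pieces yields $q[t\mapsto a] \in \qideal{Z} + \cone{P} = \algconeX{Z,P}{X}$, as required.
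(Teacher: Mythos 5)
Your proof is correct and follows essentially the same route as the paper's: expand $q$ as an ideal-plus-conic combination of the tagged generators, apply the substitution $t \mapsto a$ term by term, and observe that the ideal coefficients land in $\mathbb{Q}[X]$ while the conic coefficients $\lambda_j f(a)$ remain non-negative. Your bookkeeping (explicitly writing $r_i[t\mapsto a]$) is in fact slightly more careful than the paper's.
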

\begin{proof}
  Let $Z = \set{z_1, \dots, z_m}$ and $P = \set{p_1, \dots, p_n}$.
  For any $q \in \algconeX{fZ,fP}{X,t}$ we can write
  \begin{align*}
      q &= \sum_{i=1}^m g_ifz_i + \sum_{j=1}^n \lambda_jfp \quad (\forall i.\,q_i \in \mathbb{Q}[X,t], \lambda_i \in \mathbb{Q}^{\geq 0}) \\
      q[t \mapsto a] &= \sum_{i \in 1}^m \left(g_if(a)\right) z + \sum_{j=1}^n \left(\lambda_{j} f(a)\right) p \in \regcone(Z,P)
  \end{align*}
  since each $g_if(a)$ is a polynomial in $X$ and each
  $\lambda_jf(a)$ is a non-negative rational.
\end{proof}

\begin{theorem}[Intersection] \label{thm:intersection}
  Let $Z_1, P_1, Z_2, P_2 \subseteq \mathbb{Q}[X]$ be finite sets of polynomials over some set of variables $X$.
  Then 
  \[ 
  \regcone(\textit{intersect}(Z_{1}, P_{1}, Z_{2}, P_{2})) = \regcone(Z_1, P_1) \cap \regcone(Z_2, P_2)\ .
  \]
\end{theorem}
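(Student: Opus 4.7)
The plan is to apply Theorem~\ref{thm:projection} to unfold the definition of \textit{intersect} and thereby reduce the claim to a statement about the ``tagged'' algebraic cone over $X \cup \set{t}$, and then prove the resulting set equality by two inclusions that pivot on the substitutions $t\mapsto 1$ and $t\mapsto 0$, each justified by Lemma~\ref{lem:tag}. Concretely, by the definition of $\textit{intersect}$ and Theorem~\ref{thm:projection},
\[
\regcone(\textit{intersect}(Z_1, P_1, Z_2, P_2)) = \algconeX{tZ_1 \cup (1-t)Z_2,\, tP_1 \cup (1-t)P_2}{X,t} \cap \mathbb{Q}[X]\ ,
\]
so it suffices to show that this set equals $\regcone(Z_1,P_1) \cap \regcone(Z_2,P_2)$.

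For the $(\supseteq)$ inclusion, I would take $q \in \regcone(Z_1,P_1) \cap \regcone(Z_2,P_2)$ and use the telescoping identity $q = tq + (1-t)q$. Multiplying a witness of $q \in \regcone(Z_1,P_1)$ through by $t$ yields $tq \in \algconeX{tZ_1, tP_1}{X,t}$, and symmetrically $(1-t)q \in \algconeX{(1-t)Z_2, (1-t)P_2}{X,t}$. Since algebraic cones over the same variable set are closed under sum, $q = tq + (1-t)q$ lies in $\algconeX{tZ_1 \cup (1-t)Z_2,\, tP_1 \cup (1-t)P_2}{X,t}$; and $q \in \mathbb{Q}[X]$ by hypothesis.

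For the $(\subseteq)$ inclusion, I would take $q$ in the left-hand set and, by the definition of algebraic cones, partition a witness into $q = q_1 + q_2$, where $q_1 \in \algconeX{tZ_1, tP_1}{X,t}$ collects the terms from the $t$-tagged generators and $q_2 \in \algconeX{(1-t)Z_2, (1-t)P_2}{X,t}$ collects the terms from the $(1-t)$-tagged generators. Every generator of $q_2$ carries a factor of $(1-t)$, so $q_2[t \mapsto 1] = 0$; since $q \in \mathbb{Q}[X]$ is invariant under this substitution, $q = q_1[t \mapsto 1]$. Applying Lemma~\ref{lem:tag} with $f = t$ and $a = 1$ (so that $f(a) = 1 \geq 0$) yields $q = q_1[t\mapsto 1] \in \regcone(Z_1, P_1)$. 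A symmetric argument using $f = 1-t$ and $a = 0$ produces $q \in \regcone(Z_2, P_2)$, establishing membership in the intersection.

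The one step to handle carefully is the partition of a witness in the $(\subseteq)$ direction: I need to observe that a polynomial in $\algconeX{tZ_1 \cup (1-t)Z_2,\, tP_1 \cup (1-t)P_2}{X,t}$ can always be split cleanly into an ``$X_1$-piece'' and an ``$X_2$-piece'', so that Lemma~\ref{lem:tag} can be invoked separately on each. This follows directly from the definition $\qideal{Z} + \cone{P}$ by re-grouping summands, but it is the substantive content of the proof; the rest is a matter of exploiting $1 = t + (1-t)$ and applying Lemma~\ref{lem:tag} twice.
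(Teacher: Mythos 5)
Your proposal is correct and follows essentially the same route as the paper's proof: unfold \textit{intersect} via Theorem~\ref{thm:projection}, prove $\supseteq$ by the telescoping identity $q = tq + (1-t)q$, and prove $\subseteq$ by splitting a witness as $q_1 + q_2$ and applying Lemma~\ref{lem:tag} under the substitutions $t \mapsto 1$ and $t \mapsto 0$. The one point you flag for care---that an element of the tagged sum cone splits into a $t$-tagged piece and a $(1-t)$-tagged piece---is exactly the step the paper also uses (stated there without further comment, as it follows by regrouping summands in the definition of $\qideal{Z}+\cone{P}$).
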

\begin{proof}
We prove each side of the equation is included in the other:
\begin{itemize}
    \item[$\subseteq$]:
  Let $q \in \regcone(\textit{intersect}(Z_1,P_1,Z_2,P_2))$.  
  Since $\textit{intersect}(Z_1,P_1,Z_2,P_2) = \textit{project}_X(tZ_1 \cup (1-t)Z_2, tP_1 \cup (1-t)P_2)$, we have $q \in \regcone(tZ_1 \cup (1-t)Z_2, tP_1 \cup (1-t)P_2) \cap \mathbb{Q}[X]$
  by Theorem~\ref{thm:projection}.
  Then
  $q$ can be written as $q_1 + q_2$ for some $q_1 \in \regcone(tZ_1,tP_1)$ and $q_2 \in \regcone((1-t)Z_2,(1-t)P_2)$.
  Then we have
  \begin{align*}
      q &= q[t \mapsto 0] & q \in \mathbb{Q}[X]\\
      &= q_1[t \mapsto 0] + q_2[t \mapsto 0] & q = q_1+q_2, \text{linearity of substitution}\\
      &=q_2[t \mapsto 0] & t \text{ divides } q_1\\
      &\in \regcone(Z_{2}, P_{2}) & Lemma~\ref{lem:tag}
  \end{align*}
  Symmetrically, we have $q = q[t \mapsto 1] = q_1[t \mapsto 1] \in \regcone(Z_1,P_1)$, and so $q$ belongs to the intersection
  $\regcone(Z_{1}, P_{1}) \cap \regcone(Z_{2}, P_{2})$.

    \item[$\supseteq$]:
  Let $q \in \regcone(Z_{1}, P_{1}) \cap \regcone(Z_{2}, P_{2})$.
  We have $tq \in \regcone(tZ_{1}, tP_{1})$ and $(1-t)q \in \regcone((1-t)Z_{2}, (1-t)P_{2})$,
  and therefore
  \begin{align*}
  q = tq + (1-t)q &\in
  \algconeX{tZ_1,tP_1}{X,t}
  + \algconeX{(1-t)Z_2,(1-t)P_2}{X,t}\\
  &= 
  \algconeX{tZ_1\cup (1-t)Z_2, tP_1 \cup (1-t)P_2}{X,t}\ .
  \end{align*}
  Since $q$ also belongs to $\mathbb{Q}[X]$, we have
 $q \in \algconeX{tZ_1\cup (1-t)Z_2, tP_1 \cup (1-t)P_2}{X,t} \cap \mathbb{Q}[X] = \regcone(\textit{intersect}(Z_1,P_1,Z_2,P_2))$. \qedhere
\end{itemize}

\end{proof}

\subsubsection{Correctness of consequence finding}
\label{sec:correctness-conseq-finding}
We are now ready to prove the correctness of consequence finding.

\begin{theorem}
  Given a $\sigma_{or}(Y)$ formula $F$ and $X \subseteq Y$, 
  $\texttt{consequence}(F, X)$ 
  terminates and returns a reduced pair $\tuple{Z, P}$ such that
  $\algconeX{Z, P}{X} = \CnX{F}{X}$.
\end{theorem}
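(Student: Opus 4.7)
The plan is to prove the theorem by a standard loop-invariant argument combined with a counting argument over the DNF cubes of $F$ to bound termination.

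First, I would prove preservation of the invariant $\CnX{F}{X} \subseteq \regcone(Z,P)$ that the algorithm itself asserts. Initially $\regcone(\{1\},\{0\}) = \mathbb{Q}[X]$ trivially contains $\CnX{F}{X}$. For preservation, I would chain the already-established building blocks: \texttt{get-model}$(G)$ returns $\tuple{Z',P'}$ with $\regcone(Z',P') = \CnX{G_i}{Y}$ for some cube $G_i$ of the DNF of $F$ (by Lemma~\ref{lem:conseq-for-conjunction}); since $G_i \models F$ gives $\CnX{F}{Y} \subseteq \CnX{G_i}{Y}$, applying \texttt{project} (Theorem~\ref{thm:projection}) and then \texttt{intersect} (Theorem~\ref{thm:intersection}) produces a new $\regcone(Z,P)$ that still contains $\CnX{F}{X}$.

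Second, I would show termination by arguing that no cube of the DNF of $F$ is sampled twice. Immediately after iteration $j$ processes cube $G_i$, the updated $\regcone(Z,P)$ satisfies $\regcone(Z,P) \subseteq \CnX{G_i}{X} \subseteq \CnX{G_i}{Y}$. Since $\pm Z \subseteq \qideal{Z} \subseteq \regcone(Z,P)$ and $P \subseteq \regcone(Z,P)$, both $Z$ and $P$ lie in $\CnX{G_i}{Y}$ (with $Z$ in the lineality), so $G_i \models \bigwedge_{z \in Z} z = 0 \land \bigwedge_{p \in P} 0 \leq p$ and is excluded by the blocking clause added in iteration $j$. As $F$ has finitely many atoms, its DNF has finitely many cubes, so the loop terminates.

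Third, I would establish the reverse inclusion $\regcone(Z,P) \subseteq \CnX{F}{X}$ from unsatisfiability of $G$ at termination. For any $\ThQ$-model $\mathfrak{A}$ of $F$, $\mathfrak{A}$ must violate some accumulated blocking clause, so for some iteration $j$ we have $\pm Z_j \cup P_j \subseteq \CnM{\mathfrak{A}}$. Because $\CnM{\mathfrak{A}}$ is a regular cone, $\units{\CnM{\mathfrak{A}}}$ is an ideal containing $Z_j$, hence containing $\qideal{Z_j}$; therefore $\regcone(Z_j, P_j) \subseteq \CnM{\mathfrak{A}}$. Since subsequent iterations only intersect, the final $\regcone(Z,P) \subseteq \regcone(Z_j, P_j) \subseteq \CnM{\mathfrak{A}}$, so every $p \in \regcone(Z,P)$ is non-negative in every $\ThQ$-model of $F$, giving $p \in \CnX{F}{X}$. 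Reducedness of the returned pair follows from the reductions built into \texttt{project} (and hence \texttt{intersect}). The main obstacle I anticipate is exactly this tightness step: translating the syntactic blocking conjunct $\bigwedge z = 0$ into the algebraic containment $\qideal{Z_j} \subseteq \CnM{\mathfrak{A}}$ crucially invokes regularity of $\CnM{\mathfrak{A}}$, which is the structural property of $\ThQ$-models that allows finite blocking clauses to faithfully ``cover'' entire cones rather than just the generators $Z_j, P_j$ appearing syntactically.
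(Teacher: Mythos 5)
Your proof is correct and follows essentially the same three-part structure as the paper's proof: the invariant $\CnX{F}{X} \subseteq \regcone(Z,P)$ gives one inclusion, a finiteness argument gives termination, and unsatisfiability of the residual formula at exit gives the reverse inclusion. There are two small differences worth noting, neither of which affects correctness. For termination, you count DNF cubes of $F$ directly and show the just-processed cube is henceforth blocked; the paper instead introduces the set of \emph{cube cones} of the working formula $G_i$ (the distinct consequence cones arising from its cubes) and shows this set strictly shrinks each iteration. Your version is more elementary; the paper's is slightly cleaner when two syntactically distinct cubes yield the same cone. For tightness, you argue semantically: any $\ThQ$-model $\mathfrak{A}$ of $F$ must satisfy some $B_j$, from which regularity of $\CnM{\mathfrak{A}}$ gives $\regcone(Z_j,P_j)\subseteq \CnM{\mathfrak{A}}$, and then the monotone shrinking of the working cone finishes. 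The paper argues syntactically: the blocking clauses form a chain $B_1 \models_{\ThQ} B_2 \models_{\ThQ} \dots \models_{\ThQ} B_N$, so unsatisfiability of $G_N$ yields $F \models_{\ThQ} B_N$ and hence $\CnX{F}{X} \supseteq \CnX{B_N}{X} = \algconeX{Z,P}{X}$; the equality $\CnX{B_N}{X} = \algconeX{Z,P}{X}$ encapsulates exactly the regularity reasoning you carry out explicitly. Your observation that regularity of $\CnM{\mathfrak{A}}$ is what lets the finitely many generators $Z_j, P_j$ ``cover'' the whole cone is precisely the right structural insight.
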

\begin{proof}

  Say that a regular cone $C$ is a
  \textit{cube cone} of formula $H$ if $C = \CnX{H}{X}$ for some cube $D$ in
  the DNF of $H$, and $C$ is consistent. Let $\mathit{cube.cone}(H)$ denote the (finite) set of cube cones of $H$.
  
  We first prove termination of the algorithm. 
  Let $G_i$ denote the formula $G$ on the $i$th iteration of the loop,
  and let $B_i$ denote the $i$th blocking clause, so for each $i$ we have $G_{i+1} = G_i \land \lnot B_i$.
  By Theorem~\ref{thm:min-model-for-conjunction} and Lemma~\ref{lem:consequence-cone}, we have that
  \[\mathit{cube.cones}(G_{i+1}) = \mathit{cube.cones}(G_i \land \lnot B_i) = \set{Q \in \mathit{cube.cones}(G_i) : \ConeModel{Q} \models \lnot B_i}\ .\]
  Since the model returned by
  \texttt{get-model} is always in $\mathit{cube.cone}(G_i)$ (Theorem~\ref{thm:min-model-for-conjunction} and Lemma~\ref{lem:consequence-cone}) and that model always satisfies $B_i$, we have a strictly descending sequence
  $\mathit{cube.cones}(G_{0}) \supsetneqq \mathit{cube.cones}(G_{1}) \supsetneqq \mathit{cube.cones}(G_{2}) \supsetneqq \dots$.
  Since $\mathit{cube.cones}(G_{0})$ is a finite set, this sequence must have finite length and therefore the algorithm terminates.
  
  We then show $\algconeX{Z, P}{X} = \CnX{F}{X}$ when the while loop exits.
  First we prove one direction that $\algconeX{Z, P}{X} \supseteq \CnX{F}{X}$.
  Suppose the while loop runs $N$ times.
  By Theorems~\ref{thm:projection} and~\ref{thm:intersection}, 
  $\algconeX{Z,P}{X} = \mathbb{Q}[X] \cap \bigcap_{i = 1}^N C_i$
  where each $C_i \in \mathit{cube.cone}(F)$, thus
  $\algconeX{Z, P}{X} \supseteq \CnX{F}{X}$.
  We then show that $\algconeX{Z, P}{X} \subseteq \CnX{F}{X}$.
  When the while loop exits, we have that
  $G_N = F \land \neg B_{1} \land \dots \land \neg B_{N} $ is unsatisfiable modulo $\ThQ$. 
  Since $B_1 \models_{\ThQ} B_2 \models_{\ThQ} \dots \models_{\ThQ} B_N$, we have $F \models_{\ThQ} B_{N}$, and
  therefore $\CnX{F}{X} \supseteq \CnX{B_{N}}{X} = \algconeX{Z, P}{X}$.

%   Denote variable set $K = Y \setminus X$, and the cone $\regcone(Z, P)$
%   in each iteration as
%   $\CnX{B_{0}}{X}, \CnX{B_{1}}{X}, \dots, \CnX{B_{N}}{X}$, and blocking
%   clauses added in line 8 of \autoref{alg:consequence} in each iteration as $\neg B_{1}, \neg B_{2}, \dots, \neg B_{N}$, and the
%   cube from the DNF of $G$ selected in each iteration as
%   $D_{1}, D_{2}, \dots, D_{N}$. Then
%   $\CnX{B_{i+1}}{X} = \CnX{B_{i}}{X} \cap \CnX{\exists K. D_{i}}{X}$ and
%   correspondingly $B_{1} \implies B_{2} \implies \dots \implies B_{N}$. Thus
%   $\neg B_{n} \implies \dots \implies \neg B_{1}$. Since $\exists K. D_{i} \models F$, we have
%   $\CnX{\exists K. D_{i}}{X} \supseteq \CnX{F}{X}$. Thus
%   $\algconeX{Z, P}{X} = \bigcap \CnX{\exists K. D_{i}}{X} \supseteq \CnX{F}{X}$.
  
%   \zak{By Theorems~\ref{thm:projection} and~\ref{thm:intersection}, 
%   $\algconeX{Z,P}{X} = \mathbb{Q}[X] \cap \bigcap_{i = 1}^N C_i$
%   where each $C_i \in \mathit{cube.cone}(F)$.
%   Do you mean this could be used to show both directions of $\algconeX{Z, P}{X} = \CnX{F}{X}$ since we are enumerating cubes?

%   $\algconeX{Z, P}{X} \supseteq \CnX{F}{X}$ follows immediately.
%   Still need an argument for $\algconeX{Z, P}{X} \subseteq \CnX{F}{X}$
%   }
  
%   To prove the
%   other direction, we examine the exit condition of the loop. By the time the
%   loop exits, we have that
%   $F \land \neg B_{1} \land \dots \land \neg B_{N} = F \land \neg B_{N}$ is unsatisfiable
%   modulo $\ThQ$. Thus $F \models_{\ThQ} B_{N}$, and therefore
%   $\CnX{F}{X} \supseteq \CnX{B_{N}}{X} = \algconeX{Z, P}{X}$ when the loop exits.
\end{proof}

\section{Integer Arithmetic}
\label{sec:integer}
\newcommand{\Int}{\textit{Int}}

Let $\sigma_{or}^Z$ be the signature of ordered rings extended with an
additional unary relation symbol $\Int$.
Define the theory of \textit{linear integer real rings}
$\ThZ$ to be the $\sigma_{or}^Z$-theory axiomatized by the axioms
of $\ThQ$ along with the following:
\[\begin{array}{cr}
    \Int(1)\\
    \forall x,y\ldotp \Int(x) \land \Int(y) \Rightarrow \Int(x+y) & \text{(Int closure +)}\\
    \forall x,y\ldotp \Int(x) \land x + y = 0 \Rightarrow \Int(y) & \text{(Int closure -)}\\
    \text{for all } n \in \mathbb{Z}^{>0} \text{ and }  m \in \mathbb{Z},
    \forall x\ldotp \Int(x) \land 0 \leq nx + m
    \Rightarrow 0 \leq x + \floor{\frac{m}{n}} & \text{(Cutting plane)}
\end{array}\]

We regard $\mathbb{R}$ as the ``standard'' model of $\ThZ$,
with $\Int$ identifying the subset of integers; $\Th^Z(\mathbb{R})$ refers
to the theory consisting of all $\sigma_{\textit{or}}^Z$-sentences satisfied by $\mathbb{R}$.
As before, the theory admits a class of non-standard models that captures the
characteristics of all models of $\ThZ$.

\zak{Can be more specific: To extend our construction of $\ThQ$-models from
  regular cones to the theory $\ThZ$, we must develop (1) a notion of a
  suitable interpretation of the $\Int$ predicate, and (2) a criterion under
  which the cutting plane axiom is satisfied.  Then we can get into the
  next paragraph, where the reader knows what to expect.  Curiously, however,
  it polynomial lattices don't seem like \textit{the} solution -- the Int
  predicate should just be a an additive subgroup.  Polynomial lattices are an
  analogue of algebraic cones (lattices that can be effectively manipulated).
  If we have frame things in terms of additive subgroups, then we have an
  direct analog of Lemma~\ref{lem:model-construction} (without needing ``least'' models)}
  
Let $X$ be a set of variables.
For sets $C, L \subseteq \mathbb{Q}[X]$, say that $C$ is
\textit{closed under cutting planes with respect to $L$} if
if for any $a, b \in \mathbb{Z}$, $a > 0$, $p \in L$,
if $ap + b \in C$, then $p + \floor{\frac{b}{a}} \in C$.
Define the \textit{cutting plane closure of $C$ with respect to $L$}, denoted $\cp_L(C)$, to be the least cone that
contains $C$ and is closed under cutting planes with respect to $L$.
A set $L \subseteq \mathbb{Q}[X]$ is a \textbf{polynomial lattice}
if $L = I + L_0$ for some ideal $I$ and a point lattice $L_0$.
If $C$ is a regular cone and $L$ is a polynomial lattice,
the pair $\tuple{C, L}$ is \textbf{coherent} if $L = \units{C} + L_0$
for some point lattice $L_0$.

We associate a coherent
$\tuple{C, L}$ with a $\sigma_{or}^Z(X)$-structure
$\ConeModel{C, L}$, where the universe and interpretations of
all symbols other than $\Int$ follow that of $\ConeModel{C}$,
and $\Int$ is interpreted as the relation $\set{p + \units{C} : p \in L}$.
(This interpretation is well-defined because $\tuple{C, L}$ is coherent.)
That is, for all $q \in \mathbb{Z}[X]$,
$\ConeModel{C, L} \models \Int(q)$ iff $q \in L$.

\begin{theoremEnd}[restate]{lemma}
\label{lem:thz-model-construction}
  Let $X$ be a set of variables.
  Let $\tuple{C, L}$ be coherent with $1 \in L$ and
  $C$ consistent and closed under cutting planes with respect to $L$.
  Then $\ConeModel{C, L}$ is a model of $\ThZ$.
\end{theoremEnd}
\begin{proofEnd}
  Since $1 \in L$, $1 + I \in \Int^{\ConeModel{C, L}}$ and
  $\ConeModel{C, L} \models \Int(1)$.
  The closure axioms hold because $L$ is closed under addition and taking
  additive inverses.

  For the cutting plane axiom,
  let $n, m \in \mathbb{Z}$, $n > 0$,
  and $p \in \mathbb{Z}[X]$ be such that $np + m \in C$ and $p \in L$.
  Since $C$ is closed under cutting planes with respect to $L$,
  $p + \floor{\frac{m}{n}} \in C$.
  So $\ConeModel{C, L} \models 0 \leq p + \floor{\frac{m}{n}}$.
\end{proofEnd}

\begin{theoremEnd}{theorem}
  \label{thm:ThZ-min-model-for-conjunction}
  Given a conjunctive ground formula
  \[
  F = \left(\bigwedge_{p \in P} 0 \leq p \right) \land
  \left( \bigwedge_{q \in Q} \lnot (0 \leq q) \right) \land
  \left( \bigwedge_{r \in R} \lnot(0 = r) \right) \land
  \left( \bigwedge_{s \in S} \Int(s) \right) \land
  \left( \bigwedge_{t \in T} \lnot \Int(t) \right)
  \]
  Let $B = S \cup \set{1}$.
  Let $C$ be the least regular cone that contains $P$
  and is closed under cutting planes with respect to $\units{C} + \product{\mathbb{Z}}{B}$.
  Then $F$ is satisfiable if and only if $C$ is consistent and
  $\ConeModel{C, \units{C} + \product{\mathbb{Z}}{B}} \models F$.
\end{theoremEnd}
\begin{proofEnd}
  Sufficiency follows from Lemma~\ref{lem:thz-model-construction}.
  For necessity, assume $F$ to be satisfiable, and consider some $\ThZ$-model
  $M$ that satisfies $F$.
  Following the proof of Theorem~\ref{thm:min-model-for-conjunction},
  $\CnM{M} = \set{p \in \mathbb{Q}[X]: M \models 0 \leq p}$ is a 
  regular cone that contains $P$.

  Consider $np + m \in \CnM{M}$, where $n, m \in \mathbb{Z}$, $n > 0$,
  and $p \in \product{\mathbb{Z}}{B}$.
  Since $M \models \Int(s)$ for all $s \in S$ and $M \models \Int(1)$,
  by the $\Int$ closure axioms,
  $M \models \Int(p)$.
  Since $M$ satisfies the cutting plane axiom,
  $M \models p + \floor{\frac{m}{n}} \geq 0$.
  Hence, $p + \floor{\frac{m}{n}} \in \CnM{M}$.

  Thus, $C \subseteq \CnM{M}$.
  As per the proof of Theorem~\ref{thm:min-model-for-conjunction},
  $C$ is consistent, and
  $\ConeModel{C, \units{C} + \product{\mathbb{Z}}{B}}$
  satisfies $0 \leq p$, $\lnot (0 \leq q)$, and $\lnot (0 = r)$
  for all $p \in P$, $q \in Q$ and $r \in R$.
  It satisfies all $\Int(s)$ since each $s \in B$.
  If $\ConeModel{C, \units{C} + \product{\mathbb{Z}}{B}} \models \Int(t)$
  for some $t \in T$,
  then $t \in \units{C} + \product{\mathbb{Z}}{B}$,
  i.e., $t = z + b$ for some $z \in \units{C}$ and
  $b \in \product{\mathbb{Z}}{B}$.
  Since $C \subseteq \CnM{M}$, $M \models z = 0$.
  Since $M \models \Int(0)$ and $M \models \Int(b)$,
  $M \models \Int(t)$, contradicting $M \models F$.
  So $\ConeModel{C, \units{C} + \product{\mathbb{Z}}{B}} \models \lnot \Int(t_k)$.
  Thus, it satisfies $F$.
\end{proofEnd}

An induction axiom is conspicuously absent from $\ThZ$.  Nevertheless, the axiomatization is sufficient for positive linear formulas.

\begin{theorem} \label{thm:lirr-lira-complete}
  Let $F$ be a ground formula in $\sigma_{or}^Z(X)$ that is free of negation and
  multiplication.  Then $F$ is satisfiable modulo $\ThZ$ iff $F$ is
  satisfiable modulo $\Th^Z(\mathbb{R})$.
\end{theorem}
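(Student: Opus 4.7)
The forward direction is immediate: since $\mathbb{R}$ (with $\Int$ interpreted as the integers) is a model of $\ThZ$, satisfaction over $\mathbb{R}$ entails satisfaction modulo $\ThZ$. For the converse, my plan is to follow the contrapositive strategy of Theorem~\ref{thm:lrr-lra-complete}: assume $F$ is unsatisfiable modulo $\Th^Z(\mathbb{R})$ and derive that $F$ is unsatisfiable modulo $\ThZ$.

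First I would put $F$ in disjunctive normal form. Since $F$ is free of negation and multiplication, each cube is a conjunction of atoms of the form $0 \leq p$, $0 = q$, or $\Int(s)$ with $p,q,s$ linear. Replacing each equality by two inequalities, every cube takes the shape
\[
F_i \;=\; \bigwedge_{k} 0 \leq p_k \;\wedge\; \bigwedge_{m} \Int(s_m).
\]
Since $F \equiv \bigvee_i F_i$ in any theory, it suffices to show that every $\Th^Z(\mathbb{R})$-unsatisfiable $F_i$ is unsatisfiable modulo $\ThZ$.

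Next I would invoke completeness of the Chv\'atal--Gomory proof system for mixed integer-real linear arithmetic \cite[Ch.~23]{Book:Schrijver1999}: if $F_i$ is $\Th^Z(\mathbb{R})$-unsatisfiable, then there is a finite derivation of $0 \leq -1$ from the hypotheses of $F_i$ in which each step is either (a)~a non-negative rational combination of previously derived inequalities, or (b)~a Gomory cut that infers $0 \leq y + \floor{m/n}$ from some previously derived $0 \leq ny + m$, with $n \in \mathbb{Z}^{>0}$, $m \in \mathbb{Z}$, and $y$ a $\mathbb{Z}$-linear combination of $\{s_1, \dots, s_k, 1\}$.

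The final step replays this derivation inside $\ThZ$. Steps of type~(a) are handled exactly as in Theorem~\ref{thm:lrr-lra-complete}: non-negative integer combinations are obtained using compatibility, commutativity, associativity, and transitivity, and non-negative rational combinations are recovered by clearing denominators and applying divisibility together with perforation-freeness. Steps of type~(b) match the cutting plane axiom of $\ThZ$ directly, provided the hypothesis is first presented with integer coefficients (by scaling through by the common denominator of $m/n$, again using divisibility and perforation-freeness); integrality of $y$ is derivable from $\Int(s_m)$ for each $m$, from $\Int(1)$, and from the $\Int$-closure axioms. Chaining these steps yields $F_i \models_{\ThZ} 0 \leq -1$, and then non-triviality and antisymmetry give $F_i \models_{\ThZ} \false$. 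The main obstacle is stating and citing the Chv\'atal--Gomory completeness result in a form that accommodates hypotheses containing arbitrary linear terms asserted to be integers (not only integer variables); once the completeness theorem is phrased appropriately, its simulation inside $\ThZ$ is mechanical.
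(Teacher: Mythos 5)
Your overall strategy is the paper's: the easy direction via the standard model, and the contrapositive of the hard direction via a cutting-plane refutation whose individual steps are valid modulo $\ThZ$ (non-negative combinations as in Theorem~\ref{thm:lrr-lra-complete}, rounding steps via the cutting plane axiom). The normalization to conjunctions of $0 \leq p$ and $\Int(s)$ is also as in the paper.

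The genuine gap is the step you yourself flag at the end, and it is larger than you suggest. There is no off-the-shelf ``completeness of the Chv\'atal--Gomory proof system for mixed integer-real linear arithmetic'': the classical completeness results \cite{Chvatal1973,Schrijver1980} concern \emph{pure} integer polyhedra (every rational polyhedron with no integer point admits a finite CG refutation), and for genuinely mixed sets the Chv\'atal--Gomory closure is not complete --- one needs split/mixed-integer cuts, and even those do not converge finitely in general. So your cube $F_i$, which may contain symbols not constrained to be integral and may assert $\Int(s)$ for non-variable linear terms $s$, is not yet in a form where the theorem you want to cite exists. The paper closes exactly this gap with two normalization steps before invoking Chv\'atal: (1) each $\Int(t)$ for a non-constant term $t$ is replaced by $\Int(y) \land y \leq t \land t \leq y$ for a fresh constant $y$, which is equisatisfiable modulo both $\Th^Z(\mathbb{R})$ and $\ThZ$; and (2) every symbol not constrained to be integral is eliminated by Fourier--Motzkin, again preserving satisfiability modulo both theories (the FM inferences are non-negative combinations, hence $\ThQ$-valid). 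After these steps all symbols are integral and the pure-integer completeness theorem applies; the replay inside $\ThZ$ then goes through as you describe. Without (1) and (2) your appeal to completeness is unsupported; with them, your argument coincides with the paper's.
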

\begin{proof}
  The $\Leftarrow$ direction  is trivial, since any model of $\Th^Z(\mathbb{R})$ is a model of  $\ThZ$.

  For the $\Rightarrow$ direction, we prove that if $F$ is unsatisfiable modulo $\Th^Z(\mathbb{R})$, then it is unsatisfiable modulo $\ThZ$.
  We may assume without loss of generality that $F$ takes the form
  $\bigwedge_{p \in P} p \geq 0 \land \bigwedge_{s \in S} \Int(s)$.
  We may also assume that $S \subseteq X$:
  if $F = G \land \Int(t)$ for some non-constant term $t$,
  then $F$ is equisatisfiable with $G \land \Int(y) \land y \leq t \land t \leq y$ for a fresh constant $y$
  (modulo $\Th^Z(\mathbb{R})$ and also modulo $\ThZ$).
  Furthermore, we may assume that $S = X$ (i.e., all symbols are integral),
  since if some symbol is not constrained to be integral, it can be projected
  by Fourier-Motzkin elimination, resulting in an equisatisfiable formula
  (again modulo both theories).

  % We can consider the (linear) $s_j$'s as vectors defining the set
  % \[
  % L = \set{ x \in \mathbb{Q}^X: \bigwedge_k s_j^{\top} x \in \mathbb{Z}^X },
  % \]
  % where $u^{\top} v$ is the dot product of $u$ and $v$.
  % Then satisfiability in $\mathbb{Q}$ amounts to checking the feasibility of a
  % linear integer-real system, i.e., if
  % $\set{\vec{x}: P \vec{x} \leq \vec{b}, \vec{x}' \in L} = \emptyset$,
  % where $\vec{x}'$ is some subset of the variables $\vec{x}$,
  % and $P \vec{x} \leq \vec{b}$ corresponds to the inequalities.
  % \zak{what does this mean?},
  % so we can instead consider
  % $Q = \set{\vec{x}: A \vec{x} \leq \vec{b}, \text{$\vec{x}'$ integral}}$
  % for some matrix $A$ and subset $\vec{x}'$ of the variables $\vec{x}$.
  % (A vector is integral if all its coordinates are integers.)

  %% Supposing that $F$ is unsatisfiable modulo $\Th^Z(\mathbb{R})$. Then 
  %% $\set{\vec{x}: A \vec{x} \leq \vec{b}, \text{$\vec{x}'$ integral}} = \emptyset$,
  %% where $A\vec{x} \leq \vec{b}$ corresponds to the inequalities and
  %% $\vec{x}'$ are the constants $s$ that occur as $\Int(s)$ in $F$.
  %% Let $Q' = \set{\vec{x}: A'\vec{x} \leq \vec{b}'}$ be the polyhedral
  %% projection of $Q$ onto the linear subspace spanned by $\vec{x}'$.
  %% Since $Q$ is empty,
  %% the integer hull $\inthull{Q'}$ of $Q'$ is empty.

  Suppose that $F$ is unsatisfiable modulo $\Th^Z(\mathbb{R})$---i.e., the polyhedron defined by $\bigwedge_{p \in P} p \geq 0$ has no integer points.
  Then there is a cutting-plane proof of
  $0 \leq -1$ from $\bigwedge_{p \in P} p \geq 0$~\cite{Chvatal1973,Schrijver1980}.
  Since each inference step of a cutting-plane proof is valid modulo $\ThZ$,
  $F$ is unsatisfiable modulo $\ThZ$.
  %% Since $Q'$ is the polyhedral projection of $Q$,
  %% each inequality in $A'\vec{x} \leq \vec{b}'$ is derivable modulo $\ThZ$,
  %% e.g., by Fourier-Motzskin elimination, so
  %% $F \models_{\ThZ} A'\vec{x} \leq \vec{b}'$.
  %%  Thus, $F \models_{\ThZ} 0 \leq -1$, and $F$ is unsatisfiable in $\ThZ$.
\end{proof}

If $\tuple{C, L}$ is coherent, we may represent it using
a triple $\tuple{Z, P, B}$, where
$Z, P, B \subseteq \mathbb{Q}[X]$ are finite sets,
$C = \regcone(Z,P)$, and
%$L = \product{\Qplus}{Z} + \product{\mathbb{Z}}{B}$.
$L = \qideal{Z} + \product{\mathbb{Z}}{B}$.
Say that $\tuple{Z, P, B}$ is \textbf{reduced} (with respect to a monomial ordering
$\preceq$ if
$\tuple{Z, P}$ is reduced (with respect to $\preceq$)), each
$b \in B$ is reduced with respect to $Z$,
and $B$ is linearly independent.

\zak{Needed some context -- why do we care about membership?}
Extending our earlier results for $\ThQ$, we have that the problem of checking
whether a model $\ConeModel{\regcone(Z,P), \qideal{Z}+B}$ satisfies a ground
formula $F$ is decidable.  This follows from Lemma~\ref{lem:membership} and the following.

\zak{Analogue of the ordering lemma (i.e., we can always compute a reduced representation)}

\begin{lemma}[Membership]
  \label{lem:lattice-membership}
  If $\tuple{Z, P, B}$ is reduced, for any polynomial $p \in \mathbb{Q}[X]$,
  we have $p \in \qideal{Z} + \product{\mathbb{Z}}{B}$
  iff $\reduce_Z(p) \in \product{\mathbb{Z}}{B}$.
  The latter can be checked in polytime by solving the linear system for its
  unique solution and checking if all coordinates are integers.
\end{lemma}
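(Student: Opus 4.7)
The plan is to mirror the proof of Lemma~\ref{lem:membership}, leveraging the linearity of $\reduce_Z$ together with the hypothesis that $\tuple{Z,P,B}$ is reduced. Both implications will key off the identity $p = \reduce_Z(p) + (p - \reduce_Z(p))$, where the second summand lies in $\qideal{Z}$ by the definition of the normal form.

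For the backward direction $(\Leftarrow)$, I expect the argument to be immediate: if $\reduce_Z(p) \in \product{\mathbb{Z}}{B}$, then the decomposition $p = (p - \reduce_Z(p)) + \reduce_Z(p)$ already exhibits $p$ as an element of $\qideal{Z} + \product{\mathbb{Z}}{B}$.

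For the forward direction $(\Rightarrow)$, I will suppose $p = z + b$ with $z \in \qideal{Z}$ and $b = \sum_i n_i b_i$ for integers $n_i$ and $b_i \in B$, and compute $\reduce_Z(p)$ using the linearity property recalled in Section~\ref{sec:background}: $\reduce_Z(p) = \reduce_Z(z) + \sum_i n_i \reduce_Z(b_i)$. The first term vanishes by the membership property of $\reduce_Z$, and each $\reduce_Z(b_i)$ equals $b_i$ by the reducedness hypothesis (every $b_i$ is reduced with respect to $Z$). Hence $\reduce_Z(p) = \sum_i n_i b_i \in \product{\mathbb{Z}}{B}$. The essential use of reducedness is exactly here: it guarantees that reduction of a $\mathbb{Z}$-combination of elements of $B$ does not introduce rational denominators, which would otherwise escape the lattice.

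For the polynomial-time check, I will reduce to linear algebra over $\mathbb{Q}$. Since $B$ is linearly independent over $\mathbb{Q}$ (this being part of what it means for $B$ to be a basis of the lattice, as defined in Section~\ref{sec:background}), the equation $\reduce_Z(p) = \sum_i c_i b_i$---viewed as a system indexed by the finitely many monomials appearing in $\reduce_Z(p)$ and $B$---has at most one rational solution $\vec{c}$, computable by Gaussian elimination in polynomial time; we then simply verify that each $c_i$ is an integer. I do not anticipate a genuine obstacle: the only subtle point is to be explicit that linear independence of $B$ yields uniqueness of the candidate solution, after which the integrality test is mechanical.
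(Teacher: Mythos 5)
Your proof is correct and is exactly the intended argument: the paper states this lemma without an explicit proof, and the natural proof is the direct analogue of its one-line proof of Lemma~\ref{lem:membership}, replacing conic combinations by $\mathbb{Z}$-combinations and using linearity of $\reduce_Z$ together with $\reduce_Z(b_i)=b_i$ for each $b_i\in B$. Your handling of the polytime claim via linear independence of $B$ (uniqueness of the rational solution, then an integrality check) likewise matches what the lemma's statement presupposes.
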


\subsection{Satisfiability modulo $\ThZ$} \label{sec:sat-mod-Z}

\zak{Inform the reader of the high-level structure of the plan.}

This section presents a theory solver for testing satisfiability of ground
conjunctive $\sigma_{or}^Z$ formulas modulo the theory $\ThZ$. Following
  Theorem~\ref{thm:ThZ-min-model-for-conjunction}, \zak{or just `` the strategy for $\ThQ$ '' if we get rid of minimal models in that theorem}
  we show that it is possible to construct
  \textit{minimal} $\ThZ$-models of ground conjunctive $\sigma_{or}^Z$
  formulas.
  We first show how to compute cutting plane closure (Lemma~\ref{lem:cut-correct}).
  Then we show that minimal models can be constructed by
  iterating the regular
  closure (that is, Algorithm~\ref{alg:saturation}) and
  cutting plane closure separately until fixed point (Lemma~\ref{lem:saturate-cp-closure}).

\zak{Consider re-ordering: first cutting plane closure, then tying the two
  together.  It's intuitive that the two can be iterated to a simultaneous
  fixpoint to get closure under both (Knaster-Tarski).  The thing that isn't
  obvious is termination.  Lemma~\ref{lem:cp-ineq-invariance} can get rolled into
  Theorem~\ref{thm:rcp}}
\nic{Yep, thought cutting plane closure should come first too; rearranged.
  I'm leaving Lemma~\ref{lem:cp-ineq-invariance} in for now because it explains
  the shift from polynomial lattices to point lattices.
}

\zak{Give the reader a roadmap for how things fit together}
To compute the cutting plane closure of an algebraic cone with respect
to a (coherent) polynomial lattice, we proceed in three steps.  First,
Lemma~\ref{lem:cp-ineq-invariance} shows that it is sufficient to consider
point (rather than polynomial) lattices.  Second,
Lemma~\ref{lem:cp-ineq-closure} shows that cutting plane closure of an
algebraic cone with respect to a point lattice can be reformulated in
terms of the cutting plane closure of a polyhedral cone with respect to a
point lattice. Third, we show that these lemmas constitute an effective
reduction from the problem of computing the cutting plane closure of an
algebraic cone with respect to a polynomial lattice to that of computing the
integer hull of a polyhedron.

\begin{lemma}
  \label{lem:cp-ineq-invariance}
  Let $C, L \subseteq \mathbb{Q}[X]$ with $C$ a cone.
  Let $L' \subseteq \units{C} + L$.
  Then $\cp_{L'}(C) \subseteq \cp_{L}(C)$.
  Thus, $\cp_{\units{C} + L}(C) = \cp_{L}(C)$.
\end{lemma}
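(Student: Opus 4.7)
The plan is to prove the inclusion $\cp_{L'}(C) \subseteq \cp_L(C)$ by showing that $\cp_L(C)$ itself is closed under cutting planes with respect to $L'$, and then invoking the minimality of $\cp_{L'}(C)$ among cones containing $C$ with that closure property.

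For the main step, I would fix $a, b \in \mathbb{Z}$ with $a > 0$ and $p' \in L'$ such that $ap' + b \in \cp_L(C)$, and show that $p' + \floor{b/a} \in \cp_L(C)$. Since $L' \subseteq \units{C} + L$, decompose $p' = u + p$ with $u \in \units{C}$ and $p \in L$. The crucial observation is that $\units{C} \subseteq \cp_L(C)$ (both $u$ and $-u$ lie in $C \subseteq \cp_L(C)$), and hence $\pm au \in \cp_L(C)$ since $\cp_L(C)$ is a cone. Because $\cp_L(C)$ is closed under addition, we then have
\[ ap + b = (ap' + b) + (-au) \in \cp_L(C)\ . \]
Since $p \in L$ and $\cp_L(C)$ is closed under cutting planes with respect to $L$, it follows that $p + \floor{b/a} \in \cp_L(C)$. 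Finally, adding $u \in \cp_L(C)$ back yields $p' + \floor{b/a} = u + p + \floor{b/a} \in \cp_L(C)$, as required.

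For the equality $\cp_{\units{C} + L}(C) = \cp_L(C)$, the $\subseteq$ direction is immediate from the first part applied with $L' = \units{C} + L$ (using $\units{C} + L \subseteq \units{C} + L$). For the $\supseteq$ direction, apply the same lemma with the roles swapped: taking $L' = L$ and the lattice parameter to be $\units{C} + L$, the hypothesis $L \subseteq \units{C} + (\units{C} + L) = \units{C} + L$ holds trivially (using that $\units{C}$ is closed under addition), yielding $\cp_L(C) \subseteq \cp_{\units{C} + L}(C)$.

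The argument is essentially bookkeeping; the only step requiring care is verifying that $\units{C} \subseteq \cp_L(C)$ and that $\cp_L(C)$ absorbs additions by elements of $\units{C}$ without disturbing the integer-arithmetic structure of the cutting-plane step, which is the reason we can rewrite $ap' + b$ as $a p + b$ (a cutting-plane-ready combination in $L$) by subtracting the cone element $au$. No obstacle beyond this substitution is anticipated.
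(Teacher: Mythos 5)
Your proof is correct and follows essentially the same route as the paper's: decompose $p' = u + p$ with $u \in \units{C}$, $p \in L$, subtract $au$ to land in the $L$-cutting-plane setting, apply closure, and add $u$ back. The only cosmetic difference is in the final $\supseteq$ direction, where the paper uses the simpler monotonicity observation that $L \subseteq \units{C}+L$ directly implies $\cp_L(C) \subseteq \cp_{\units{C}+L}(C)$, whereas you re-invoke the first part with roles swapped; both are valid.
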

\begin{proof}
  First observe that $\cp_{L}(C)$
  contains $C$, so it suffices to show that it is closed under cutting
  planes with respect to $L'$.

  Let $np + m \in \cp_{L}(C)$,
  where $n, m \in \mathbb{Z}$, $n > 0$, and $p \in L'$.
  By definition, $p = z + v$ for some $z \in \units{C}$ and $v \in L$.
  Since $nz \in \units{C} \subseteq \units{\cp_{L}(C)}$,
  $nv + m = n(z + v) + m - nz = (np + m) - nz \in \cp_{L}(C)$.
  Since $\cp_{L}(C)$ is closed under cutting planes with respect to $L$
  and $v \in L$, $v + \floor{\frac{m}{n}} \in \cp_{L}(C)$.
  Then $p + \floor{\frac{m}{n}} = z + v + \floor{\frac{m}{n}} \in \cp_{L}(C)$.
  Thus, $\cp_{L'}(C) \subseteq \cp_{L}(C)$.

  Setting $L' = \units{C} + L$, $\cp_{\units{C} + L}(C) \subseteq \cp_{L}(C)$.
  Since $L \subseteq L'$, $\cp_{L}(C) \subseteq \cp_{\units{C} + L}(C)$,
  so we have equality.
\end{proof}

The following is the key result of this section.  It shows that we may
reduce the problem of computing the cutting plane closure of an algebraic cone
with respect to a point lattice to the problem of computing the integer
hull of a (finite-dimensional) polyhedron.

\begin{lemma}
  \label{lem:cp-ineq-closure}
  Let $B = \set{b_1, \ldots, b_k} \subseteq \mathbb{Q}[X]$,
  and let $C \subseteq \mathbb{Q}[X]$ be a cone.
  Let $Y = \set{y_1, \ldots, y_k}$ be a set of variables
  disjoint from $X$ and define a linear map
  $f: \linpoly{Y} \to \mathbb{Q}[X]$ by
  \[
  f(a_0 + a_1y_1 + \dots + a_ky_k) = a_0 + a_1b_1 + \dotsi +a_kb_k.
  \]
  Then $\cp_{\, \product{\mathbb{Z}}{B}}(C) = C + f(\cp_{\, \product{\mathbb{Z}}{Y}}(f^{-1}(C)))$.
\end{lemma}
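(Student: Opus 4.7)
The plan is to show both inclusions separately, exploiting the observation that $f$ maps the lattice $\product{\mathbb{Z}}{Y}$ surjectively onto $\product{\mathbb{Z}}{B}$ (since $f(y_i) = b_i$) and that $f$ is a linear map of cones.

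For the $\supseteq$ direction, I would define $D \defeq \set{q \in \linpoly{Y} : f(q) \in \cp_{\, \product{\mathbb{Z}}{B}}(C)}$ and show that $D$ is a cone containing $f^{-1}(C)$ that is closed under cutting planes with respect to $\product{\mathbb{Z}}{Y}$; by minimality this gives $\cp_{\, \product{\mathbb{Z}}{Y}}(f^{-1}(C)) \subseteq D$, i.e., $f(\cp_{\, \product{\mathbb{Z}}{Y}}(f^{-1}(C))) \subseteq \cp_{\, \product{\mathbb{Z}}{B}}(C)$, and combining with $C \subseteq \cp_{\, \product{\mathbb{Z}}{B}}(C)$ gives the inclusion. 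The closure check is the point of the argument: if $np + m \in D$ with $n \in \mathbb{Z}^{>0}$, $m \in \mathbb{Z}$, $p \in \product{\mathbb{Z}}{Y}$, then $nf(p) + m = f(np + m) \in \cp_{\, \product{\mathbb{Z}}{B}}(C)$ and $f(p) \in \product{\mathbb{Z}}{B}$, so $f(p) + \floor{m/n} \in \cp_{\, \product{\mathbb{Z}}{B}}(C)$; by linearity of $f$ this equals $f(p + \floor{m/n})$, so $p + \floor{m/n} \in D$.

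For the $\subseteq$ direction, let $E \defeq C + f(\cp_{\, \product{\mathbb{Z}}{Y}}(f^{-1}(C)))$. Then $E$ is a cone (sum of a cone and the linear image of a cone) and contains $C$ (take $f(0)$ as the second summand). It suffices to show $E$ is closed under cutting planes with respect to $\product{\mathbb{Z}}{B}$, for then $\cp_{\, \product{\mathbb{Z}}{B}}(C) \subseteq E$ by minimality. Suppose $np + m \in E$ with $n \in \mathbb{Z}^{>0}$, $m \in \mathbb{Z}$, $p \in \product{\mathbb{Z}}{B}$, and write $np + m = c + f(q)$ with $c \in C$ and $q \in \cp_{\, \product{\mathbb{Z}}{Y}}(f^{-1}(C))$. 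Since $p \in \product{\mathbb{Z}}{B}$, pick any $p' \in \product{\mathbb{Z}}{Y}$ with $f(p') = p$; then $f(np' + m - q) = np + m - f(q) = c \in C$, so $np' + m - q \in f^{-1}(C) \subseteq \cp_{\, \product{\mathbb{Z}}{Y}}(f^{-1}(C))$. As cones are closed under addition, $np' + m = (np' + m - q) + q$ lies in $\cp_{\, \product{\mathbb{Z}}{Y}}(f^{-1}(C))$; applying the cutting-plane closure hypothesis at $p' \in \product{\mathbb{Z}}{Y}$ yields $p' + \floor{m/n} \in \cp_{\, \product{\mathbb{Z}}{Y}}(f^{-1}(C))$, and applying $f$ gives $p + \floor{m/n} \in f(\cp_{\, \product{\mathbb{Z}}{Y}}(f^{-1}(C))) \subseteq E$.

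The main obstacle is the second inclusion: one must lift a cut on the $B$-side (where the representation $p = \sum a_i b_i$ need not be unique) to a cut on the $Y$-side, and then verify that the ``error term'' $np' + m - q$ still lives in $f^{-1}(C)$ so that the lifted point is in the closure. The key enabler is that $f$ is a linear surjection from $\product{\mathbb{Z}}{Y}$ onto $\product{\mathbb{Z}}{B}$, which makes any choice of preimage $p'$ work.
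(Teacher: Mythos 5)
Your proof is correct and follows essentially the same route as the paper's: both directions are established by showing the relevant side is a cone that contains the base set and is closed under cutting planes, then invoking minimality of the closure operator. Your handling of the $\subseteq$ direction is a mild streamlining of the paper's argument --- the paper first reduces WLOG to the case $np+m \in f(\cp_{\product{\mathbb{Z}}{Y}}(f^{-1}(C)))$ and then separately absorbs a $\ker(f)$ error term, whereas you fold both the $C$-summand and the kernel ambiguity into the single term $np'+m-q$ and observe directly that it lands in $f^{-1}(C)$.
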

\begin{proof}
  ($\subseteq$)
  Since $\cp_{\, \product{\mathbb{Z}}{B}}(C)$ is the \textit{least} cone that contains $C$
  and is closed under cutting planes with respect to $\product{\mathbb{Z}}{B}$,
  it is sufficient to prove the following:
  \begin{enumerate}
  \item $C + f(\cp_{\, \product{\mathbb{Z}}{Y}}(f^{-1}(C)))$ is a cone: it is the sum of two cones and is thus a cone.
  \item $C + f(\cp_{\, \product{\mathbb{Z}}{Y}}(f^{-1}(C)))$ contains $C$: since $C$ is non-empty,
    $0 \in f(\cp_{\,\product{\mathbb{Z}}{Y}}(f^{-1}(C)))$,
    so $C \subseteq C + f(\cp_{\,\product{\mathbb{Z}}{Y}}(f^{-1}(C)))$.
  \item $C + f(\cp_{\, \product{\mathbb{Z}}{Y}}(f^{-1}(C)))$ is closed under cutting planes w.r.t. $\product{\mathbb{Z}}{B}$.
  Suppose $np + m \in C + f(\cp_{\product{\mathbb{Z}}{Y}}(f^{-1}(C)))$,
  with $n, m \in \mathbb{Z}$, $n > 0$, and $p \in \product{\mathbb{Z}}{B}$; we must show that
  $p + \floor{\frac{m}{n}} \in C + f(\cp_{\, \product{\mathbb{Z}}{Y}}(f^{-1}(C)))$.

  Without loss of generality, we may suppose $np + m \in f(\cp_{\product{\mathbb{Z}}{Y}}(f^{-1}(C)))$---the argument is as follows.
  Since $np+m \in C + f(\cp_{\product{\mathbb{Z}}{Y}}(f^{-1}(C)))$, we have
   $np + m = g  + h$ for some $g \in C$ and
  $h \in f(\cp_{\, \product{\mathbb{Z}}{Y}}(f^{-1}(C)))$.
  Since $np + m \in \product{\mathbb{Z}}{B} + \mathbb{Z} = f(\linpoly{Y})$, we have
  $g = (np + m) - h \in f(\linpoly{Y})$ since the image of $f$ is a subspace containing both $(np+m)$ and $h$.
  So $g \in C \cap f(\linpoly{Y})$, and
  thus $g \in f(f^{-1}(C))$.
  Since $\cp_{\,\product{\mathbb{Z}}{Y}}$ is extensive
  (i.e., $S \subseteq \cp_{\,\product{\mathbb{Z}}{Y}}(S)$ for all $S$)
  and $f$ is linear,
  $np + m = g + h \in f(\cp_{\, \product{\mathbb{Z}}{Y}}(f^{-1}(C)))$.

  We now prove that $p + \floor{\frac{m}{n}} \in C + f(\cp_{\, \product{\mathbb{Z}}{Y}}(f^{-1}(C)))$.    Since $p \in \product{\mathbb{Z}}{B}$, there exists $p' \in \product{\mathbb{Z}}{Y}$
  such that $f(p') = p$.
  Then $f(np' + m) = np + m \in f(\cp_{\, \product{\mathbb{Z}}{Y}}(f^{-1}(C)))$.
  Since $f$ is a linear map,
  $np' + m + g \in \cp_{\, \product{\mathbb{Z}}{Y}}(f^{-1}(C))$
  for some $g \in \ker(f)$.
  Since $0 \in C$, $\ker(f) = f^{-1}(0) \subseteq f^{-1}(C) \subseteq \cp_{\, \product{\mathbb{Z}}{Y}}(f^{-1}(C))$,
  so $-g \in \cp_{\, \product{\mathbb{Z}}{Y}}(f^{-1}(C))$.
  Hence, $np' + m \in \cp_{\, \product{\mathbb{Z}}{Y}}(f^{-1}(C))$.
  Since $p' \in \product{\mathbb{Z}}{Y}$,
  $p' + \floor{\frac{m}{n}} \in \cp_{\, \product{\mathbb{Z}}{Y}}(f^{-1}(C))$.
  So $p + \floor{\frac{m}{n}} = f(p' + \lfloor \frac{m}{n} \rfloor) \in C + f(\cp_{\, \product{\mathbb{Z}}{Y}}(f^{-1}(C)))$.
  \end{enumerate}

  ($\supseteq$) First note that $C \subseteq \cp_{\, \product{\mathbb{Z}}{B}}(C)$,
  and since $\cp_{\, \product{\mathbb{Z}}{B}}(C)$ is closed under addition, it suffices to show that
  $f(\cp_{\, \product{\mathbb{Z}}{Y}}(f^{-1}(C))) \subseteq \cp_{\, \product{\mathbb{Z}}{B}}(C)$.
  In turn, it suffices to show
  $\cp_{\, \product{\mathbb{Z}}{Y}}(f^{-1}(C)) \subseteq f^{-1}(\cp_{\, \product{\mathbb{Z}}{B}}(C))$.
  As before, it suffices to show that
  $f^{-1}(\cp_{\product{\, \mathbb{Z}}{B}}(C))$ is a cone,
  $f^{-1}(C) \subseteq f^{-1}(\cp_{\product{\, \mathbb{Z}}{B}}(C))$
  and $f^{-1}(\cp_{\product{\, \mathbb{Z}}{B}}(C))$ is also closed.
  It is easy to verify that $f^{-1}(\cp_{\product{\, \mathbb{Z}}{B}}(C))$ is a cone,
  and since $\cp_{\, \product{\mathbb{Z}}{B}}$ is extensive,
  $f^{-1}(C) \subseteq f^{-1}(\cp_{\product{\, \mathbb{Z}}{B}}(C))$.
  Suppose that $np + m \in f^{-1}(\cp_{\, \product{\mathbb{Z}}{B}}(C))$,
  with $p \in \product{\mathbb{Z}}{Y}$,
  $n, m \in \mathbb{Z}$, and $n > 0$; we must show that
  $p + \floor{\frac{m}{n}} \in f^{-1}(\cp_{\,\product{\mathbb{Z}}{B}}(C))$.
  Since $f$ preserves $1$, we have
  $n f(p) + m = f(np + m) \in \cp_{\, \product{\mathbb{Z}}{B}}(C)$.
  Since $f(p) \in \product{\mathbb{Z}}{B}$,
  and $\cp_{\, \product{\mathbb{Z}}{B}}(C)$
  is closed under cutting planes with respect to $\product{\mathbb{Z}}{B}$,
  $f(p) + \lfloor \frac{m}{n} \rfloor \in \cp_{\, \product{\mathbb{Z}}{B}}(C)$.
  Thus, $f(p + \floor{\frac{m}{n}}) \in \cp_{\, \product{\mathbb{Z}}{B}}(C)$,
  and $p + \floor{\frac{m}{n}} \in f^{-1}(\cp_{\,\product{\mathbb{Z}}{B}}(C))$.
\end{proof}

It remains to show that the reduction in Lemma~\ref{lem:cp-ineq-closure} is
effective.  The only non-trivial operation involved in the reduction that we
have not already seen is computing the inverse image of an algebraic cone
$C \subseteq \mathbb{Q}[X]$ under a linear map with finite-dimensional domain
$f: \linpoly{Y} \rightarrow \mathbb{Q}[X]$.
This can be accomplished by
extending the linear map to a ring homomorphism $\hat{f}: \mathbb{Q}[Y]
\rightarrow \mathbb{Q}[X]$, taking its inverse image, and intersecting it with
$\linpoly{Y}$.

For disjoint sets of variables $X$ and $Y$,
$f: \mathbb{Q}[Y] \to \mathbb{Q}[X]$ a ring homomorphism,
and $Z, P \subseteq \mathbb{Q}[X]$ with $Z$ a Gr\"{o}bner basis,
define
\[
\textit{inverse-hom}(Z,P,f,Y) \defeq \textit{project}_Y(\set{ y-f(y) : y \in Y } \cup Z, P).
\]

\begin{theoremEnd}[restate]{theorem}[Inverse image]
  \label{thm:inverse-ring-groebner}
  Let $Z, P \subseteq \mathbb{Q}[X]$ be finite with $Z$ a Gr\"{o}bner basis.
  Let $Y$ be a finite set of variables distinct from $X$.
  Let $f : \mathbb{Q}[Y] \to \mathbb{Q}[X]$ be a ring homomorphism.
  Then
  \[
  \algconeX{\textit{inverse-hom}(Z,P,f,Y)}{Y} = f^{-1}(\algconeX{Z,P}{X}).
  \]
\end{theoremEnd}
\begin{proofEnd}
  Let $T = \set{y - f(y) : y \in Y}$.
  Expanding the definition of \textit{inverse-hom} and using Theorem~\ref{thm:projection},
  $\algconeX{\textit{inverse-hom}(Z,P,f,Y)}{Y} = \algconeX{T \cup Z, P}{X, Y} \cap \mathbb{Q}[Y]$.
  
  ($\subseteq$)
  If $s \in \algconeX{T \cup Z, P}{X, Y} \cap \mathbb{Q}[Y]$, then
  $s - p \in \product{\mathbb{Q}[X, Y]}{T \cup Z}$ for some
  $p \in \product{\Qplus}{P}$.

  Extend the ring homomorphism $f: \mathbb{Q}[Y] \to \mathbb{Q}[X]$
  to a homomorphism $\hat{f}: \mathbb{Q}[X, Y] \to \mathbb{Q}[X]$
  such that $\hat{f}(x) = x$ for each variable $x \in X$
  and $\hat{f}(y) = f(y)$ for each variable $y \in Y$,
  so that $\hat{f} |_{\, \mathbb{Q}[Y]} = f$.
  By Lemma~\ref{lem:eval-ring-hom},
  and noting that $T = \set{y - f(y) : y \in Y} = \set{y - \hat{f}(y) : y \in Y}$,
  $\hat{f}(s-p) - (s - p) \in \product{\mathbb{Q}[X]}{T \cup Z}$.

  Then $\hat{f}(s - p) = (\hat{f}(s-p) - (s - p)) + (s - p) \in \product{\mathbb{Q}[X]}{T \cup Z}$.
  Since $Z \subseteq \mathbb{Q}[X]$ is a Gr\"{o}bner basis and $X \cap Y = \emptyset$,
  $T \cup Z$ is a Gr\"{o}bner basis for $\product{\mathbb{Q}[X]}{T \cup Z}$ 
  with respect to an elimination order $\preceq_X$.
  Since $\hat{f}(s - p) \in \mathbb{Q}[X]$,
  by the ordering property of Gr\"{o}bner bases, 
  $\hat{f}(s - p) \in \product{\mathbb{Q}[X]}{Z}$.
  Then $f(s) = \hat{f}(s) = \hat{f}(s - p) + \hat{f}(p) = \hat{f}(s - p) + p \in \algconeX{Z, P}{X}$.
  
  % \zak{Clarify}.\nic{Going via the evaluation lemma now.}
  % \zak{Got it.  The "by confluence" argument seems to only really need that
  % for all $q$, $q - \hat{f}(q) \in \product{\mathbb{Q}[X,Y]}{T}$,
  % since then $(s-p) \in \product{\mathbb{Q}[X,Y]}{T \cup Z}$ and $(s-p)-\hat{f}(s-p) \in \product{\mathbb{Q}[X,Y]}{T \cup Z}$, and so
  % their difference $\hat{f}(s-p)$ belongs to $\product{\mathbb{Q}[X,Y]}{T \cup Z}$.}
  % \nic{The confluence argument is for showing that it is in $\product{\mathbb{Q}[X,Y]}{Z}$ without $T$;
  % do you mean something like the following where confluence is not mentioned?}
  % \zak{Yeah, looks good}
  % \nic{Also, do you mean to say that $p - f(p) \in \product{\mathbb{Q}[X, Y]}{T}$ is clear, and
  % that the evaluation lemma can be removed?
  % } \zak{Should probably state it.  May not need it in a separate lemma (but I don't object to a separate lemma either).  If the reader knows that $\equiv_{T}$ is an equivalence relation, then this easy to see}
  
  % By confluence, $\reduce_{T \cup Z}(\hat{f}(s - p)) = 0$.
  % Since the monomial order is an elimination order $\preceq_X$,
  % and $\hat{f}(s - p) \in \mathbb{Q}[X]$,
  % the reduction involves only $Z$, i.e.,
  % $\reduce_{Z}(\hat{f}(s - p)) = 0$.
  % Thus, $\hat{f}(s - p) \in \product{\mathbb{Q}[X]}{Z}$.
  % Then $f(s) = \hat{f}(s) = \hat{f}(s - p) + \hat{f}(p) = \hat{f}(s - p) + p \in \algconeX{Z, P}{X}$.
  ($\supseteq$) Let $p \in \mathbb{Q}[Y]$ be such that $f(p) \in \algconeX{Z, P}{X}$.
  By considering the extended ring homomorphism $\hat{f}$ above,
  we have by Lemma~\ref{lem:eval-ring-hom} that
  $p - f(p) \in \product{\mathbb{Q}[X, Y]}{T}$.
  Since $f(p) \in \algconeX{Z, P}{X}$,
  $p = (p - f(p)) + f(p) \in \product{\mathbb{Q}[X,Y]}{T} + \algconeX{Z,P}{X} \subseteq \algconeX{Z \cup T,P}{X,Y}$.
\end{proofEnd}

\begin{theoremEnd}[restate]{lemma}
  \label{cor:inverse-linear-groebner}
  Let $Z, P \subseteq \mathbb{Q}[X]$ be finite with $Z$ a Gr\"{o}bner basis.
  Let $Y$ be a finite set of variables distinct from $X$.
  Let $g : \linpoly{Y} \to \mathbb{Q}[X]$ be a linear map.
  Then
  \[
  \algconeX{\textit{inverse-hom}(Z,P,\hat{g},Y)}{Y} \cap \linpoly{Y} = g^{-1}(\algconeX{Z,P}{X}),
  \]
  where $\hat{g} : \mathbb{Q}[Y] \to \mathbb{Q}[X]$ is the unique ring
  homomorphism satisfying $\hat{g} |_{\linpoly{Y}} = g$.
\end{theoremEnd}
\begin{proofEnd}
  Let $T = \set{y - \hat{g}(y): y \in Y}$.
  Let $C = \algconeX{Z, P}{X}$.
  For $\subseteq$,
  $\algconeX{T \cup Z, P}{X, Y} \cap \linpoly{Y} \subseteq \algconeX{T \cup Z, P}{X, Y} \cap \mathbb{Q}[Y] = \hat{g}^{-1}(C)$,
  so $g(\algconeX{T \cup Z, P}{X, Y} \cap \linpoly{Y}) \subseteq C$.
  For $\supseteq$, note that
  $g^{-1}(C) \subseteq \hat{g}^{-1}(C) \subseteq \algconeX{T \cup Z, P}{X, Y}$,
  and $g^{-1}(\algconeX{Z, P}{X}) \subseteq \linpoly{Y}$ by definition.
\end{proofEnd}

Note that $\linpoly{Y}$ is an algebraic cone in $\mathbb{Q}[Y]$,
so we can do the intersection using the procedure of Section~\ref{sec:intersection}.
Precisely, let $Y \subseteq X$ be sets of variables.
Define
\[
\textit{intersect-subspace}(Z, P, Y)
= \textit{second}(\textit{intersect}(Z, P, \emptyset, P_{\linpoly{Y}})),
\]
where $P_{\linpoly{Y}} = Y \cup -Y \cup \set{1, -1}$,
and $\textit{second}$ returns the second component of the pair.
Note that the first component is always $\emptyset$ (or equivalently, $\set{0}$) since it is a Gr\"{o}bner basis for the ideal $\qideal{Z} \cap \qideal{\emptyset} = \set{0}$.

Let $\polyhedralcut(P, Y)$ be a procedure that computes
the cutting plane closure of a polyhedral cone generated by $P$, 
with respect to $\product{\mathbb{Z}}{Y}$.
That is, 
$\product{\Qplus}{\polyhedralcut(P, Y)} = \cp_{\product{\mathbb{Z}}{Y}}(\product{\Qplus}{P})$.
This may be done by e.g., the iterated Gomory-Chv\'{a}tal closure.

We may now arrive at an effective implementation of the reduction in Lemma~\ref{lem:cp-ineq-closure}.  Define an operation $\cut(Z, P, B)$ by:
\[
\cut(Z, P, B) \defeq \textit{substitute}(\hat{f}, \polyhedralcut(\textit{intersect-subspace}(\textit{inverse-hom}(Z, P, \hat{f}, Y), Y))),
\]
where $Y = \set{y_1, \ldots, y_n}$ is a set of fresh variables corresponding
to $B = \set{v_1, \ldots, v_n}$ generating a point lattice,
$\hat{f}: \mathbb{Q}[Y] \to \mathbb{Q}[X]$ is the ring homomorphism defined by
$\hat{f}(y_i) = v_i$,
and
$\textit{substitute}(\hat{f}, P')$ applies $\hat{f}$ to each
polynomial in $P'$.
By linearity of $\hat{f}$, $\textit{substitute}(\hat{f}, P')$ computes the image
of the polyhedral cone generated by $P'$.

This procedure implements the key step in computing cutting plane closure
(Lemma~\ref{lem:cp-ineq-closure}).

\begin{theoremEnd}[restate]{lemma}
  \label{lem:cut-correct}
  Let $Z, P, B \subseteq \mathbb{Q}[X]$ with $Z$ a Gr\"{o}bner basis.
  Let $Y$ and $\hat{f}$ be as defined in $\cut$,
  and let $f = \hat{f} |_{\linpoly{Y}}$ be the linear map obtained by restricting
  $\hat{f}$.
  % If $\project_Y(\cdot)$ in $\textit{inverse-hom}$ uses a graded elimination
  % order $\preceq_Y$,
  Then
  \[
  \product{\Qplus}{\cut(Z, P, B)} = f(\cp_{\, \product{\mathbb{Z}}{Y}}(f^{-1}(\algconeX{Z, P}{X}))).
  \]
\end{theoremEnd}
\begin{proofEnd}
  By Corollary~\ref{cor:inverse-linear-groebner}, the definition of \textit{intersect-subspace},
  and Theorem~\ref{thm:intersection},
  \[
  \product{\Qplus}{\textit{intersect-subspace}(\textit{inverse-hom}(Z, P, \hat{f}, Y), Y)} = f^{-1}(\algconeX{Z, P}{X}))).
  \]
  % Obsoleted:
  % To see that the hypotheses in Theorem~\ref{thm:cone-intersect-affine} are
  % satisfied, we have by our assumption that
  % $\textit{project}_Y(\cdot)$ uses a graded elimination order,
  % and by its definition, returns a reduced $\tuple{Z_1, P_1}$.
  % So $\textit{inverse-hom}(Z, P, \hat{f}, Y) = \tuple{Z_1, P_1}$ has
  % $Z_1$ a Gr\"{o}bner basis with respect to this order,
  % and $P_1$ is reduced with respect to $Z_1$.

  Let $P_1 = \textit{intersect-subspace}(\textit{inverse-hom}(Z, P, \hat{f}, Y), Y)$.
  The $\polyhedralcut$ procedure computes cutting plane closure, so
  $\product{\Qplus}{\polyhedralcut(P_1)} = \cp_{\, \product{\mathbb{Z}}{Y}}(\product{\Qplus}{P_1})$.

  Let $P_2 = \polyhedralcut(P_1)$.
  Then
  \[
  \begin{array}{llr}
    \product{\Qplus}{\textit{substitute}(\hat{f}, P_2)}
    &= \product{\Qplus}{\set{\hat{f}(p) : p \in P_2}} & \\
    &= \product{\Qplus}{\set{f(p) : p \in P_2}} & \text{ since $Z_3, P_3 \subseteq \linpoly{Y}$}\\
    &= f(\product{\Qplus}{P_2}) & \text{ by linearity of $f$}.
  \end{array}
  \]
  Chaining these parts yields the claim.
\end{proofEnd}

\begin{mexample}
  Let $Z = \emptyset$, $P = \set{x_1 - 2x_2 + 1, x_1 + 2x_2, -x_1}$
  and $B = \set{2x_1, 2x_2}$.
  To compute $\cut(Z, P, B)$,
  we first do $\textit{inverse-hom}(Z, P, \hat{f}, \set{y_1, y_2})$
  with $\hat{f}(y_1) = 2x_1$ and $\hat{f}(y_2) = 2x_2$.

  Per the definition of \textit{inverse-hom} for \cut,
  we do $\project_Y(\set{y_1 - 2x_1, y_2 - 2x_2}, P)$ with respect to
  an elimination order $\preceq_Y$.
  The set $\set{y_1 - 2x_1, y_2 - 2x_2}$ is already a Gr\"{o}bner basis
  with respect to such an order,
  so we may use it to reduce $P$ to get
  $P' = \set{ \frac{1}{2} y_1 - y_2 + 1, \frac{1}{2} y_1 + y_2, -\frac{1}{2}y_1}$.
  These polynomials are all in $\mathbb{Q}[y_1, y_2]$.
  Since $\set{y_1 - 2x_1, y_2 - 2x_2} \cap \mathbb{Q}[y_1, y_2] = \emptyset$,
  $\project_Y$, hence \textit{inverse-hom}, returns $\tuple{(\emptyset , P')}$.

  Next, we do $\textit{intersect-subspace}(\emptyset, P', \set{y_1, y_2})$.
  The polynomials are already linear in $y_1$ and $y_2$, so
  \textit{intersect-subspace} returns $P'$.

  For \polyhedralcut, we consider the polyhedron
  \[
  Q = \set{ (y_1, y_2) \in \mathbb{Q}^2:
    \frac{1}{2} y_1 - y_2 + 1 \geq 0,
    \frac{1}{2} y_1 + y_2 \geq 0,
    -\frac{1}{2}y_1 \geq 0}
  \]
  defined by $P'$.
  This polyhedron has vertices $(0, 0)$, $(0, 1)$ and $(-1, \frac{1}{2})$,
  so its integer hull is defined by
  $y_1 = 0$, $y_2 \geq 0$, and $y_2 \leq 1$.
  Hence, the cutting plane closure is
  $\polyhedralcut(P') = \set{1, y_1, -y_1, y_2, -y_2 + 1}$,
  where $1$ is present because $1 \geq 0$ is always a valid inequality.

  Finally, applying $\textit{substitute}$ gives
  $\cut(Z, P, B) = \set{1, 2x_1, -2x_1, 2x_2, -2x_2 + 1}$.
  By Lemma~\ref{lem:cp-ineq-closure},
  \begin{align*}
  \cp_{\product{\mathbb{Z}}{B}}(\regcone(Z, P))
  &= \regcone(Z, P) + \product{\Qplus}{\set{1, 2x_1, -2x_1, 2x_2, -2x_2 + 1}} \\
  &= \product{\mathbb{Q}}{\set{x_1}} + \product{\Qplus}{1, x_2, -2x_2 + 1}. \qedhere
  \end{align*}
\end{mexample}

Next, we show that minimal models can be constructed by
iterating the regular
closure (Algorithm~\ref{alg:saturation}) and
cutting plane closure separately until fixed point.

\begin{lemma}
  \label{lem:saturate-cp-closure}
  For any $C \subseteq \mathbb{Q}[X]$,
  let $R(C)$ be the least regular cone that contains $C$.
  Define $T_L(C) = R(\cp_{L}(C))$.
  For any regular cone $C$ and polynomial lattice $L$,
  there exists $n \in \Zplus$ such that $T_L^n(C)$ is the least regular
  cone that contains $C$ and is closed under cutting planes
  with respect to $L$.
\end{lemma}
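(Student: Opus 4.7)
The plan is to prove both termination of the iteration and that the resulting fixed point is the required least cone. First, observe that $T_L$ is monotone and extensive, being the composition of the monotone, extensive operators $R$ and $\cp_L$; hence $\{T_L^i(C)\}_{i \geq 0}$ is an ascending chain, each term of which is a regular cone (for $i \geq 1$ because $T_L^i(C)$ lies in the image of $R$, and for $i = 0$ by hypothesis on $C$).

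The termination argument rests on Noetherianity applied to the ascending chain of lineality spaces $\{\units{T_L^i(C)}\}_{i \geq 0}$; each is an ideal of $\mathbb{Q}[X]$ since each $T_L^i(C)$ is regular, and by the Hilbert basis theorem the chain stabilizes at some index $N$. The key observation is that stabilization of the unit ideals forces the iteration itself to stabilize. Writing $D = \cp_L(T_L^N(C))$, extensivity of $R$ and $\cp_L$ gives
\[
 \units{T_L^N(C)} \subseteq \units{D} \subseteq \units{R(D)} = \units{T_L^{N+1}(C)},
\]
whose endpoints already agree, so $\units{D} = \units{T_L^N(C)}$ is an ideal. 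Hence $D$ is regular, $R$ acts as the identity on it, and $T_L^{N+1}(C) = \cp_L(T_L^N(C))$. Since $\cp_L$ is an idempotent closure operator, $T_L(T_L^{N+1}(C)) = R(\cp_L(\cp_L(T_L^N(C)))) = R(\cp_L(T_L^N(C))) = T_L^{N+1}(C)$, so $n = N+1$ suffices.

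Finally, I would verify the limit has the required universal property. It contains $C$ by extensivity, is regular as the image of $R$, and is closed under cutting planes with respect to $L$ because it equals $\cp_L(T_L^N(C))$. Minimality follows from the observation that any regular cone $C' \supseteq C$ closed under cutting planes with respect to $L$ is itself a fixed point of $T_L$ (as $\cp_L(C') = C'$ and $R(C') = C'$), so monotonicity yields $T_L^{N+1}(C) \subseteq T_L^{N+1}(C') = C'$. The main obstacle is termination: since $\cp_L$ can in principle add infinitely many new positive polynomials, termination is not a priori clear, and the argument must collapse the interaction of $R$ with $\cp_L$ via the Noetherian bound on the unit ideal, at which point $\cp_L$'s own idempotence takes over.
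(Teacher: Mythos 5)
Your proof is correct and follows essentially the same route as the paper's: Noetherianity of $\mathbb{Q}[X]$ applied to the ascending chain of unit ideals, the observation that stabilization of the units makes $\cp_L(T_L^N(C))$ regular so that $R$ acts as the identity on it, idempotence of $\cp_L$ to conclude the iteration stabilizes at $N+1$, and monotonicity for minimality. The only (harmless) elision is that regularity of $D=\cp_L(T_L^N(C))$ also requires $1\in D$, which the paper checks explicitly and which follows immediately from extensivity of $\cp_L$.
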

\begin{proof}
  Let $C_0 = C$ and $C_{i+1} = R(\cp_L(C_i)) = T_L(C_i)$.
  Since each $C_i$ is a regular polynomial cone,
  and since $\mathbb{Q}[X]$ is Noetherian,
  the sequence $(\units{C_i})_i$ stabilizes at some $n$.

  Note that $\units{C_n} \subseteq \units{\cp_L(C_n)} \subseteq \units{R(\cp_L(C_n))} = \units{C_{n+1}} = \units{C_n}$,
  so $\units{\cp_L(C_n)} = \units{C_n}$.
  Hence, $\units{\cp_L(C_n)}$ is an ideal.
  Since $1 \in C$, $1 \in \cp_L(C_n)$ for all $n$,
  so $\cp_L(C_n)$ is regular and $R(\cp_L(C_n)) = \cp_L(C_n)$.
  Then $C_{n+2} = R(\cp_L(C_{n+1})) = R(\cp_L(R(\cp_L(C_n)))) = R(\cp_L(\cp_L(C_n))) = R(\cp_L(C_n)) = C_{n+1}$.
  Thus, $C_0, C_1, \ldots$ stabilizes at $n+1$.

  It is clear that $C = C_0 \subseteq C_1 \subseteq \ldots \subseteq C_{n+1}$,
  so $C_{n+1}$ contains $C$.
  Since $\units{C_0}, \units{C_1}, \ldots$ stabilizes at $n$, $C_{n+1}$ is regular.
  Since $C_0, C_1, \ldots$ stabilizes at $n+1$, $C_{n+1}$ is closed under
  cutting planes with respect to $L$.

  By induction on $i$, each $C_i$ is contained in any cone satisfying
  these properties, so $C_{n+1}$ is the least such cone.
\end{proof}

With these results, $\textit{rcp}(Z, P, B)$
(Algorithm~\ref{alg:rcp-closure}) computes the least
regular cone $C$ that contains $\regcone(Z, P)$ and is closed under
cutting planes with respect to $\units{C} + \product{\mathbb{Z}}{B}$.

\begin{algorithm}[t]
  \SetAlgoLined
  \SetKwFunction{FRcp}{rcp}

  \SetKwFunction{FSaturate}{saturate}
  \SetKwFunction{FCut}{cut}

  \Fn{\FRcp{$Z, P, B$}}{
    \Input{$\tuple{Z, P, B}$, with $\tuple{Z, P}$ reduced.}
  \Output{$\tuple{Z', P'}$ such that $\tuple{Z', P'}$ is reduced and $\regcone(Z', P')$
    is the least regular cone $C$ that contains $\regcone(Z, P)$ and is
    closed under cutting planes with respect to $\units{C} + \product{\mathbb{Z}}{B}$.}
  $\tuple{Z', P'} \gets \tuple{Z, P}$\;
  $B' \gets$ a basis for $\product{\mathbb{Z}}{B}$\;
  \Repeat{$\regcone(Z', P') = \regcone(Z_0, P_0)$}{ \label{alg-line:while-loop}
    $\tuple{Z_0, P_0} \gets \tuple{Z', P'}$\;
    $P_1 \gets \FCut(Z_0, P_0, B')$\;
    $\tuple{Z', P'} \gets \FSaturate( Z_0 \cup -Z_0 \cup P_0 \cup P_1)$\;
    $B' \gets$ basis for $\product{\mathbb{Z}}{\set{\red_{Z'}(b): b \in B'}}$\;
  }
  \Return $\tuple{Z', P'}$\;
  }
  \caption{Regular cutting plane closure of a cone \label{alg:rcp-closure}}
\end{algorithm}

\begin{theoremEnd}[restate]{theorem}[Regular cutting plane closure]
  \label{thm:rcp}
  Given $Z, P, B$ with $\tuple{Z, P}$ reduced,
  $\textit{rcp}(Z, P, B)$ computes the least
  regular cone $C$ that contains $\regcone(Z, P)$ and is closed under
  cutting planes with respect to $\units{C} + \product{\mathbb{Z}}{B}$.
\end{theoremEnd}
\begin{proofEnd}
  Let $B_0^{(i)}$ be $B'$ at the start of the $i$-th iteration.
  Let $Z_0^{(i)}$ and $C^{(i)}$ be
  $Z_0$ and $\regcone(Z_0, P_0)$ in the $i$-th iteration
  respectively.
  By Theorem~\ref{thm:saturation},
  Lemma~\ref{lem:cp-ineq-closure} and Lemma~\ref{lem:cut-correct},
  $C^{(i+1)} = R(\cp_{\, \product{\mathbb{Z}}{B_0^{(i)}}}(C^{(i)}))$,
  where $R(C)$ is the least regular cone containing $C$.
  We first claim that for all $0 \leq i \leq j$,
  \begin{align*}
    \cp_{\, \product{\mathbb{Z}}{B_0^{(i)}}}(C^{(j)}) = \cp_{\, \product{\mathbb{Z}}{B}}(C^{(j)}).
  \end{align*}
  Then in particular, $C^{(i+1)} = R(\cp_{\, \product{\mathbb{Z}}{B}}(C^{(i)}))$.
  By Lemma~\ref{lem:saturate-cp-closure},
  the algorithm terminates with $\regcone(Z', P')$ as the least regular cone
  $C$ that contains $\regcone(Z, P)$
  and is closed under cutting planes with respect to $\product{\mathbb{Z}}{B}$.
  By Lemma~\ref{lem:cp-ineq-invariance},
  it is closed under cutting planes with respect to
  $\units{\regcone(Z', P')} + \product{\mathbb{Z}}{B}$.
  If $C'$ is any regular cone containing $\regcone(Z, P)$ and is closed
  under cutting planes with respect to $\units{C'} + \product{\mathbb{Z}}{B}$,
  it is in particular closed under cutting planes with respect to $\product{\mathbb{Z}}{B}$,
  so $\regcone(Z', P') \subseteq C'$. This proves the theorem.

  We prove the claim by induction on $i$.
  The base case follows by noting that
  $\product{\mathbb{Z}}{B} = \product{\mathbb{Z}}{B_0^{(0)}}$.
  Now assume the induction hypothesis, i.e.,
  for all $j$ such that $0 \leq i \leq j$,
  $\cp_{\, \product{\mathbb{Z}}{B_0^{(i)}}}(C^{(j)}) = \cp_{\, \product{\mathbb{Z}}{B}}(C^{(j)})$.
  We wish to show that if $i + 1 \leq j$,
  $\cp_{\, \product{\mathbb{Z}}{B_0^{(i+1)}}}(C^{(j)}) = \cp_{\, \product{\mathbb{Z}}{B}}(C^{(j)})$.
  Each $b^{(i+1)} \in B_0^{(i+1)}$ is obtained by reducing $B_0^{(i)}$ using
  $Z_0^{(i+1)}$ and computing a basis, so
  $b^{(i+1)} = \sum_k a_k (b_k^{(i)} - z_k^{(i+1)})$, where each
  $a_k \in \mathbb{Z}$, $b_k^{(i)} \in B_0^{(i)}$, and $z_{k}^{(i+1)} \in \qideal{Z_0^{(i+1)}}$.
  Since $i + 1 \leq j$ and $\units{C^{(i+1)}} \subseteq \units{C^{(j)}}$,
  $\product{\mathbb{Z}}{B_0^{(i+1)}} \subseteq \units{C^{(i+1)}} + \product{\mathbb{Z}}{B_0^{(i)}} \subseteq \units{C^{(j)}} + \product{\mathbb{Z}}{B_0^{(i)}}$.
  Symmetrically,
  each $b^{(i)} = z^{(i+1)} + \sum_k a_k b_k^{(i+1)}$,
  where each $a_k \in \mathbb{Z}$, $b_k^{(i+1)} \in B_0^{(i+1)}$, and $z^{(i+1)} \in Z_0^{(i+1)}$.
  Hence, $\product{\mathbb{Z}}{B_0^{(i)}} \subseteq \units{C^{(j)}} + \product{\mathbb{Z}}{B_0^{(i+1)}}$.
  By Lemma~\ref{lem:cp-ineq-invariance},
  $\cp_{\, \product{\mathbb{Z}}{B_0^{(i+1)}}}(C^{(j)}) = \cp_{\, \product{\mathbb{Z}}{B_0^{(i)}}}(C^{(j)})$.
  By the induction hypothesis, the claim follows.  
\end{proofEnd}

\paragraph{Satisfiability and consequence-finding modulo $\ThZ$}

Summarizing, we have the following decision procedure for satisfiability of
conjunctive $\sigma_{or}^Z(X)$-formulas modulo $\ThZ$: given a formula
  \[
  F = \left(\bigwedge_{p \in P} 0 \leq p \right) \land
  \left( \bigwedge_{q \in Q} \lnot (0 \leq q) \right) \land
  \left( \bigwedge_{r \in R} \lnot(0 = r) \right) \land
  \left( \bigwedge_{s \in S} \Int(s) \right) \land
  \left( \bigwedge_{t \in T} \lnot \Int(t) \right)
  \]
  First compute a representation $\tuple{Z',P'}$ of the
  least regular cone containing $P$ and closed under cutting planes with
  respect to $Z' + \product{\mathbb{Z}}{S \cup \set{1}}$ using
  Algorithm~\ref{alg:rcp-closure}, and compute a basis $B'$ for the point
  lattice $\product{\mathbb{Z}}{\red_{Z'}(S) \cup \set{1}}$.  If
  $\red_{Z'}(1) = 0$, then $F$ is unsatisfiable
  ($\regcone(Z',P')$ is inconsistent).  Otherwise, check
  whether $\mathfrak{A} = \ConeModel{\regcone(Z',P'),\qideal{Z'}+\product{\mathbb{Z}}{B'}}$ satisfies $F$ by
  testing whether there is some $q \in Q$ with $\red_{Z'}(q) \in
  \product{\Qplus}{P'}$ (Lemma~\ref{lem:membership}), or some $r \in
  R$ with $\red_{Z'}(r) = 0$,
  or some $t \in T$ with $\red_{Z'}(r) \in \product{\mathbb{Z}}{B'}$ (Lemma~\ref{lem:lattice-membership});
  if such a $q$, $r$, or $t$ exists, then $F$
  is unsatisfiable (Theorem~\ref{thm:sat-correctness}), otherwise,
  $\mathfrak{A}$ satisfies $F$.

%% Combined with Theorem~\ref{thm:ThZ-min-model-for-conjunction},
%% we have a satisfiability algorithm for $\ThZ$
%% like the one for $\ThQ$.
%% Given $F$ of the form in Theorem~\ref{thm:ThZ-min-model-for-conjunction},
%% we call \textit{saturate} and \textit{rcp} to
%% compute the least regular cone $C$ that contains $P$
%% and is closed under cutting planes with respect to
%% $\units{C} + \product{\mathbb{Z}}{S \cup \set{1}}$.
%% Then we test if it is consistent and all negative literals are
%% satisfied (Algorithm~\ref{alg:min-model-for-conjunction-Z} in the Appendix).

  Consequence-finding modulo $\ThZ$ operates in the same way as Algorithm~\ref{alg:consequence}.  
  The only difference is that
  $\texttt{get-model}$
returns a triple $\tuple{Z', P', B'}$
instead of pair $\tuple{Z', P'}$---but the point lattice basis $B'$ is simply ignored by the rest of the algorithm.

%%% Local Variables:
%%% TeX-master: "main"
%%% End:

\newcommand{\lininv}{\textit{LinInv}}

\section{Invariant Generation modulo $\ThZ$} \label{sec:invgen}

In this section, we give an application of consequence-finding modulo $\ThZ$ to the problem of
computing loop invariants. Intuitively, we want to compute bounds on linear terms whose increment per iteration of the loop is bounded by a (possibly nonlinear) quantity that remains fixed throughout the execution of the loop. The key property of our loop-invariant generation procedure is that
it is \textit{monotone}: if more information about the program's behavior is given to the procedure, then it may only compute invariants that are more precise (modulo $\ThZ$).  Monotonicity is achieved
due to the fact that we can find and manipulate the set of \textit{all} implied inequalities of a formula modulo $\ThZ$.  Section~\ref{sec:linear-inv} defines and motivates the particular variant of loop invariant generation that we consider, and the invariant generation procedure is presented in Section~\ref{sec:lirr-star}.

\subsection{Approximate reflexive transitive closure} 
\label{sec:linear-inv}

In this section, we will address the problem of over-approximating the reflexive transitive closure of a transition formula (a notion that we will make precise in the following).  Conceptually, we might think of a transition formula as a logical summary of the action of a program through one iteration of the body of some loop, and an approximation of its reflexive transitive closure as a summary for (an unbounded number of iterations of) the loop.  Using the algebraic program analysis framework, a procedure for over-approximating the transitive closure of a transition formula can be ``lifted'' to a whole-program analysis \cite{FMCAD:FK2015,kincaid2021algebraic,zhu2021termination}, allowing us to focus our attention on this relatively simple logical sub-problem.

Fix a finite set of symbols $X$ (corresponding to the variables in a program of interest)
and a set of ``primed copies'' $X' = \set{ x' : x \in X}$.
Define a \textbf{transition formula} $F$ to be a $\sigma_{or}^Z(X \cup X')$ formula, which represents
a relation over program states, where the unprimed variables correspond to the pre-state and the primed variables correspond to the post-state.  For any pair of transition
formulas $F$ and $G$, define their sequential composition as
$ F \circ G \defeq \exists X''. F[X' \mapsto X''] \land G[X \mapsto X''] $
where $X'' = \set{ x'' : x \in X }$ is a set of variable symbols disjoint from $X$ and $X'$
(representing the intermediate state between a computation of $F$ and a computation of $G$).
Define the $t$-fold composition of a transition formula $F$ as $ F^t\defeq F \circ \dotsi \circ F$ ($t$ times).
The problem of over-approximating reflexive transitive closure is, \textit{given a transition formula $F$, find a transition formula $F^\star$ such that $F^t \models_{\ThZ} F^\star$ for all $t \in \mathbb{N}$}.

Our approach to this problem is based on the one in \cite{TCS:ACI2010,FMCAD:FK2015}.  Given a transition formula $F$, this approach computes an over-approximation of its reflexive transitive closure $F^\star$ in two steps.
In the first step, it extracts a system of recurrences $F \models_{\LRA} \bigwedge_{i=1}^n r_i' \leq r_i + a_i$
where each $r_i$ denotes a linear term over $X$, each $r_i'$ denotes a corresponding linear term over $X'$ and each $a_i$ a rational
(for instance $(x'+y') \leq (x+y) + 2$, indicating that the sum of $x$ and $y$ increase by at most two).
In the second step, it computes
their closed forms $F^\star \defeq \exists t. t \geq 0 \land \bigwedge_{i=1}^n r_i' \leq r_i + ta_i$.
 In the following, we generalize this strategy to the non-linear setting by considering recurrences where $a_i$ is not a rational number but rather an \textit{invariant} of the loop -- a polynomial in $X$ that does not change under the action of $F$.

%We can extract an over-approximation of the
%transition formula that takes the form $A\vec{x}' \leq A\vec{x} + \vec{b}$ where
%$\vec{b} \in \mathbb{Q}^n$ is a vector of constants
%and then over-approximates $\exists t. F^t$ using $A\vec{x}' \leq A\vec{x} + t\vec{b}$.
%This idea can be generalized to produce nonlinear invariants. Rather than extracting an over-approximation of the
%transition formula that takes the form $A\vec{x}' \leq A\vec{x} + \vec{b}$, we want to extract
%$A\vec{x}' \leq A\vec{x} + f(\vec{x})$ where $f(\vec{x})$ is a vector-valued function
%with each coordinate being a polynomials of program variables. 
%If $f(\vec{x})$ remain constant at
%each iteration, we can then obtain invariants of the form $A\vec{x}' \leq A\vec{x} + tf(\vec{x})$.
%Specifically, we
%\begin{enumerate}
%  \item extract an over-approximation of the form $A\vec{x}' =
%        A\vec{x}$.  The rows of $A$ correspond to invariant linear
%        functionals, which are just linear combinations of program variables that do not change over the course of a computation.
%  \item extract an over-approximation of the form
%        $B\vec{x}' \leq B\vec{x} + f(A\vec{x})$, where $f(A\vec{x})$ is a vector of
%        polynomials over the invariant functionals.
%\end{enumerate}
%Finally, we can take $F^\star \defeq B\vec{x}' \leq B\vec{x} + kf(A\vec{x})$
%as an iteration operator that produces nonlinear invariants.
%We give detailed descriptions on how these steps are done making use
%of the primitives we developed so far.

\subsection{An Operator for Approximate Transitive Closure} \label{sec:lirr-star}

This section describes the class of recurrences that we wish to compute
(Section~\ref{sec:recurrent-diff}), a method to compute them
(Section~\ref{sec:compute-rec-diff}), and then defines an approximate
transitive closure operator and shows that it is monotone and (in a sense)
complete modulo $\ThZ$ (Section~\ref{sec:tc}).

\subsubsection{Recurrent differences} \label{sec:recurrent-diff}
Lift the mapping from symbols in $X$ to their primed counterpart in $X'$ into a ring homomorphism of polynomial terms $(-)' : \mathbb{Q}[X] \rightarrow \mathbb{Q}[X']$ (so, e.g., $(xy + 3z)'$ denotes the polynomial $(x'y' + 3z')$).
Define the space of \textit{linear invariant functionals} $\lininv(F)$ of
a formula $F$ to be those linear terms over $X$ that are invariant under the action of $F$:
\[ \lininv(F) \defeq \set{ k \in \product{\mathbb{Q}}{X} : F \models_{\ThZ} k' = k }\ . \]
Define the space of \textit{invariant polynomials}
$\textit{Inv}(F)$ to be the subring of $\mathbb{Q}[X]$ generated by $\lininv(F)$.  Clearly, for any $p \in \textit{Inv}(F)$ we have 
$F \models_{\ThZ} p' = p$ (but the reverse does not hold).
Finally, we define the class of recurrences of interest, the \textit{recurrent differences} of $F$, to be
\[ \textit{rec}(F) \defeq \set{ \tuple{r,a} \in \product{\mathbb{Q}}{X} \times \textit{Inv}(F) : F \models_{\ThZ} r' \leq r + a}\ . \]
Note that for any $\tuple{r,a} \in \textit{rec}(F)$ and any $t \in \mathbb{N}$, we have $F^t \models_{\ThZ} r' \leq r + ta$.

\begin{mexample} \label{ex:star-running}
  Consider the program \cinline{while(*) \{ P \}}, where $P$ is the program
  \begin{lstlisting}[style=base,mathescape,belowskip=-0.8\baselineskip]
    if (*) { w = w + 1; } else { w = w + 2; }
    x = x + z ; y = y + z ; z = (x - y)(x - y);
  \end{lstlisting}
  The transition formula for $P$ is
  \[
  F = (w' = w + 1 \lor w' = w + 2) \land x' = x + z \land y' = y + z \land z' = z + (x' - y')^2.
  \]
  Notice that the linear polynomial $x - y$ is an invariant of the loop;
  $\lininv(F)$ is the linear space spanned by $(x-y)$.
  The polynomial ${(x - y)}^2$ is thus an invariant polynomial in the subring
  $\textit{Inv}(F)$.
  The recurrent differences
  $\textit{rec}(F)$ includes (but is not limited to) $\tuple{-w, -1}$ ($w$ increases by at least 1), $\tuple{w,2}$ ($w$ increases by at most 2), as well as $\tuple{z, (x-y)^2}$
  and $\tuple{-z, -(x-y)^2}$ ($z$ is always set to $z + (x-y)^2$ at each iteration).
\end{mexample}

\begin{mexample}
  Consider the program \cinline{while (*) \{ w = -w; \}}.
  The transition formula $F$ of the loop body is $w' = -w$.
  The polynomial $w^2$ is an invariant polynomial of the loop:
  $F \models_{\ThZ} {(w')}^2 = w^2$.
  However, $F$ has no \textit{linear} invariants,
  so $\lininv(F) = \set{0}$ and $\textit{Inv}(F) = \set{0}$.
  Such invariants are not considered in this paper.
\end{mexample}

\subsubsection{Computing recurrent differences} \label{sec:compute-rec-diff}

We proceed in three steps, showing how to represent and compute $\lininv(F)$, $\textit{Inv}(F)$, and finally $\textit{rec}(F)$.

Since $\lininv(F)$ is a linear space it can be represented by a basis, which we compute as follows.
Let $D \defeq \set{ d_x : x \in X }$ denote a set of variables distinct from those in $X,X'$.
Define a ring homomorphism $\delta : \mathbb{Q}[D] \rightarrow \mathbb{Q}[X \cup X']$
by $\delta(d_x) \defeq x - x'$ and a
second ring homomorphism $\textit{pre} : \mathbb{Q}[D] \rightarrow \mathbb{Q}[X]$
by $\textit{pre}(d_x) = x$.
Then $\lininv(F) = \textit{pre}(\delta^{-1}(\units{\Cn{F}}) \cap \product{\mathbb{Q}}{D})$:
$F \models_{\ThZ} k' = k$ iff $F \models_{\ThZ} k' - k = 0$
iff $k' - k$ is a unit in the consequence cone
of $F$;
since $k$ is linear, $k' - k$ is a linear combination of the difference variables $D$.
A basis for $\lininv(F)$ can be computed using the primitives we have developed in the preceding
(consequence-finding, cone intersection, inverse image, and image).

Let $\set{a_1,\dots,a_n}$ be a basis for $\lininv(F)$.  The subring $\textit{Inv}(F)$ of $\mathbb{Q}[X]$ can be represented as the elements of the polynomial ring $\mathbb{Q}[K]$, where $K = \set{ k_1, \dots, k_n }$ is a set of fresh variables, one for each basis element of $\lininv(F)$.  Let $\textit{inv}$ denote the (injective) ring homomorphism $\mathbb{Q}[K] \rightarrow \mathbb{Q}[X]$ that sends $k_i$ to $a_i$ for each $i$.  Then the image of $\textit{inv}$ is precisely $\textit{Inv}(F)$ (i.e., $\mathbb{Q}[K]$ and $\textit{Inv}(F)$ are isomorphic).

% Specifically, we discuss how we establish isomorphisms between the following
% three sets
% \[
% \tuple{r, a} \in \textit{rec}(F) \xrightleftharpoons[rep(p)]{(r-r'+a)} 
% p \in \Cn{F} \cap (\product{\mathbb{Q}}{D} + \mathbb{Q}[K])
% \xrightleftharpoons[k_2]{\textit{split}} 
%       \set{ \tuple{s, b} :  s + b \in \widetilde{\textit{rec}}(F) }
% \]

Finally, we show how to represent and compute the set of recurrences $\textit{rec}(F)$.  Each element of $\textit{rec}(F)$ is a pair $\tuple{r,a}$ consisting of a linear term $r \in \product{\mathbb{Q}}{X}$ and a polynomial $a \in \textit{Inv}(F)$.  As a technical convenience, we can represent such a pair as a polynomial in $\product{\mathbb{Q}}{D} + \mathbb{Q}[K]$.  Since $D$ and $K$ are disjoint, there exist (unique) linear maps $\pi_D : \product{\mathbb{Q}}{D} + \mathbb{Q}[K] \rightarrow \product{\mathbb{Q}}{D}$
and $\pi_K : \product{\mathbb{Q}}{D} + \mathbb{Q}[K] \rightarrow \mathbb{Q}[K]$ such that for all $p \in \product{\mathbb{Q}}{D} + \mathbb{Q}[K]$ we have $p = \pi_D(p) + \pi_K(p)$.  Then we may define a bijection
$\textit{rep} : \product{\mathbb{Q}}{D} + \mathbb{Q}[K] \rightarrow \product{\mathbb{Q}}{X} \times \textit{Inv}(F)$ by $\textit{rep}(p) = (\textit{pre}(\pi_D(p)), \textit{inv}(\pi_K(p)))$.
Define $\widetilde{\textit{rec}}(F)$ to be the inverse image of $\textit{rec}(F)$ under $\textit{rep}$ (i.e., an exact representation of $\textit{rec}(F)$ in the space $\product{\mathbb{Q}}{D} + \mathbb{Q}[K]$).

% We now show how to compute $\widetilde{\textit{rec}}(F)$.
Thus, it suffices to show how to compute $\widetilde{\textit{rec}}(F)$.
Define a ring homomorphism $\delta_{\textit{inv}} : \mathbb{Q}[D, K] \rightarrow \mathbb{Q}[X \cup X']$ by
$\delta_{\textit{inv}}(d_x) = x - x'$ and $\delta_{\textit{inv}}(k_i) = a_i$ (i.e., the common extension of $\delta$ and $\textit{inv}$).
Then we see that
$\widetilde{\textit{rec}}(F) \defeq \delta_{\textit{inv}}^{-1}( \Cn{F} ) \cap (\product{\mathbb{Q}}{D} + \mathbb{Q}[K])$;
that is,
the recurrences of $F$ correspond exactly to the members of $\delta_{\textit{inv}}^{-1}( \Cn{F} )$ that are linear in the $D$ variables.
We illustrate this with an example.

\begin{mexample}\label{ex:star-running-running}
  We continue Example~\ref{ex:star-running}.
  The consequence cone of $F$ is $\Cn{F} = \regcone(Z, P)$, where $Z$ and $P$ are computed as:
  \begin{align*}
    Z &= \set{(w - w' + 1)(w - w' + 2), x - x' + z, y - y' + z, z - z' + {(x- y)}^2 } \\
    P &= \set{-w + w' - 1, w -w' + 2}
  \end{align*}
  Applying $\delta^{-1}$ to the units of this cone gives
  $\product{\mathbb{Q}[D]}{\set{(\delta_w + 1)(\delta_w + 2), \delta_x - \delta_y}}$
  (since $\delta(\delta_x - \delta_y) = (x - x' + z) - (y - y' + z)$).
  Intersecting this with
  $\product{\mathbb{Q}}{D}$ yields $\product{\mathbb{Q}}{\set{\delta_x - \delta_y}}$.
  Hence, $\lininv(F) = \product{\mathbb{Q}}{\set{x - y}}$.

  The subring $\textit{Inv}(F)$ generated by these linear invariants is
  represented by $\mathbb{Q}[k_1]$ (with $\textit{inv}(k_1) = x-y$).
  We have the following corresponding elements in
  $\textit{rec}(F)$ and $\widetilde{\textit{rec}}(F)$:
  % \[
  % \tuple{-w,-1} \sim -\delta_w -1, \tuple{w,2} \sim \delta_w + 2, \tuple{z,(x-y)^2} \sim \delta_z + k_1^2, \text{ and } \tuple{z,(x-y)^2} \sim -\delta_z - k_1^2
  % \]
  \[
  \tuple{-w,-1} \sim -\delta_w -1, \tuple{w,2} \sim \delta_w + 2, \tuple{z,(x-y)^2} \sim \delta_z + k_1^2, \text{ and } \tuple{-z,-(x-y)^2} \sim -\delta_z - k_1^2
  \]
  Notice how $-\delta_w -1$ and $\delta_w + 2$ correspond to $P$,
  and how $\delta_z + k_1^2$ and $-\delta_z - k_1^2$
  correspond to the last polynomial $(z - z') + {(x- y)}^2 \in Z$.
  %To see the correspondence with the inverse image of $\Cn{F}$ under $\delta_{\textit{inv}}$,
  %observe for example that 
  %$ F \models_{\ThZ} z' \leq z + (x-y)^2 \iff z-z'+(x-y)^2 \in \Cn{F} \iff \delta_z + k_1^2 \in \delta_{\textit{inv}}(\Cn{F}) $.
\end{mexample}
We now continue with showing how to compute $\widetilde{\textit{rec}}(F)$.
Since we already saw how to compute inverse images of algebraic cones, it remains only to show that we can compute the intersection of an algebraic cone over $\mathbb{Q}[D,K]$ with $\product{\mathbb{Q}}{D} + \mathbb{Q}[K]$; i.e., the set of polynomials in cone that are linear in the set of variables $D$, but may contain arbitrary monomials in $K$.  This is \textit{nearly} solved by the intersection algorithm in Section~\ref{sec:intersection}, since
$\product{\mathbb{Q}}{D} + \mathbb{Q}[K]$ is the sum of a finitely-generated cone $\product{\mathbb{Q}}{D} = \cone{D \cup -D}$ and an ideal $\mathbb{Q}[K]$; however, $\product{\mathbb{Q}}{D} + \mathbb{Q}[K]$ is not an algebraic cone because $\mathbb{Q}[K]$ is not an ideal \textit{in $\mathbb{Q}[D,K]$}.  However, the essential process behind cone intersection carries over, which yields Algorithm~\ref{alg:lin-restrict}.

\begin{algorithm}[t]
  \SetKwFunction{Fexp}{exp}
  \SetKwFunction{Flin}{lin}
  \Fn{\Flin{$Z, P, D, K$}}{
    \Input{Finite sets of polynomials $Z,P \subseteq \mathbb{Q}[D,K]$ over disjoint variables $D$ and $K$}
    \Output{A pair $\tuple{V,R}$ with
    $\product{\mathbb{Q}[K]}{V} + \cone{R} = \regcone(Z,P) \cap (\mathbb{Q}[K] + \product{\mathbb{Q}}{D})$}
    $\tuple{Z,P} \gets \reducecone{Z,P}{\preceq_{K}}$\;
    \uIf{$\red_Z(1) = 0$}{
        \Return{$\tuple{\set{1}, D \cup -D}$} \tcc*{$\regcone(Z,P) = \mathbb{Q}[D,K]$, so return $\mathbb{Q}[K] + \product{\mathbb{Q}}{D})$}
    }
    $R \gets \polyhedralproject{P}{D \cup [K]}$ \tcc*{Project onto the set of variables $D$ and monomials over $K$}
    $V \gets \emptyset$\;
    \tcc{Add polynomials in $Z \cap \mathbb{Q}[K]$ to $V$; if a polynomial is linear in $d$, add it (and its negation) to $R$}
    \ForEach{$z \in Z$}{
      \uIf{$\leadmon(z) \in \mathbb{Q}[K]$}{
        $V \gets V \cup \set{z}$
      }
      \uElseIf{$\leadmon(z) \in \product{\mathbb{Q}}{D}$}{
        $R \gets R \cup \set{ z, -z }$
      }
    }
    \Return{$\tuple{V, R}$}
  }
  \caption{Linear restriction \label{alg:lin-restrict}}
\end{algorithm}

\begin{theoremEnd}[restate]{lemma}
 \label{lem:lin-restrict}
  Let $D,K$ be disjoint finite sets of variables and $Z,P \subseteq \mathbb{Q}[D,K]$ be finite sets of polynomials over $D$ and $K$.  Let $\tuple{V,R} = \Flin(Z,P,D,K)$.  Then
  \[
    \product{\mathbb{Q}[K]}{V}  + \cone{R} = \regcone(Z,P) \cap (\mathbb{Q}[K] + \product{\mathbb{Q}}{D})
  \]
\end{theoremEnd}
\begin{proofEnd}
  Without loss of generality we may assume that $\tuple{Z, P}$ is reduced w.r.t. $\preceq_K$.
  The $\subseteq$ inclusion is obvious, since by construction 
  $V$ is a subset of  
  $\units{\regcone(Z,P)}$ and $\mathbb{Q}[K]$, and $R$ is a subset of $\regcone(Z,P)$ and $(\mathbb{Q}[K] + \product{\mathbb{Q}}{D})$. 
  
  For the $\supseteq$ direction, suppose $q \in \regcone(Z,P) \cap (\mathbb{Q}[K] + \product{\mathbb{Q}}{D})$. 
  Thus $q \in (\mathbb{Q}[K] + \product{\mathbb{Q}}{D})$ can be written as
 $q = z + p$ for $z \in \product{\mathbb{Q}[D,K]}{Z}$, $p \in \cone{P}$.
 According to \autoref{lem:decomp-to-smaller-monos}, $\leadmon(z) \preceq_K \leadmon(q)$ and
 $\leadmon(p) \preceq_K \leadmon(q)$.
 Thus $p \in \cone{P} \cap (\mathbb{Q}[K] + \product{\mathbb{Q}}{D}) \subseteq \cone{R}$,
 since $\cone{R} = \cone{P} \cap (\mathbb{Q}[K] + \product{\mathbb{Q}}{D})$ by properties
 of cone projection in Section~\ref{sec:subsub-proj-alg-cones}. We thus only need 
 to show that $z \in \cone{R} + \product{\mathbb{Q}[K]}{V}$.
 
 Suppose $z = q_1z_1 + \dots + q_nz_n$ for $q_1, \dots, q_n \in \mathbb{Q}[D, K]$ and 
 $z_1, \dots, z_n \in Z$.
 Then $\leadmon(q_iz_i) \preceq_K \leadmon(z)$ for all $i$.
 Next we proceed based on what $\leadmon(z_i)$ looks like for each $i$.
 
 If $\leadmon(z_i) \in \mathbb{Q}[K]$. In this case $q_i$ must not involve any variable $D$, 
 since otherwise $\leadmon(q_i z_i)$ will be ranked higher than anything in $(\mathbb{Q}[K] + \product{\mathbb{Q}}{D})$ including $\leadmon(z)$ w.r.t. the monomial ordering $\preceq_K$.
 In other words $q_i \in \mathbb{Q}[K]$. 
 This $z_i$ is added to $V$, thus $q_i z_i \in \product{\mathbb{Q}[K]}{V}$.

 If $\leadmon(z_i)$ contains any variable in $D$, then $q_i \in \mathbb{Q}$ for the same
 reason as above. In this case $q_i z_i \in \cone{R}$ since $z_i$ and $-z_i$ are added to $R$.
\end{proofEnd}

\subsubsection{Approximate Transitive Closure} \label{sec:tc}

The core logic of our approximate transitive closure operator appears in
Algorithm~\ref{alg:exp}, which
%composes
computes an overapproximation \Fexp($F$,$t$) of
the $t$-fold composition of $F$ (symbolic in $t$) using the methodology
described in the last two sections.  The approximate transitive closure
operator is defined as $F^\star \defeq \exists t. \Int(t) \land t \geq 0 \land
\Fexp(F, t)$.

\begin{mexample}
  Returning to Example~\ref{ex:star-running-running},
  the recurrent differences computed by $\Flin$ are
  $V = \emptyset$
  and $R = \set{-\delta_w - 1, \delta_w + 2, \pm(\delta_z + k_1^2), \pm(\delta_x - \delta_y)}$.
  Hence, Algorithm~\ref{alg:exp} computes
  \[
  \Fexp(F, t) =
  -(w - w') - t \geq 0 \land w - w' + 2t \geq 0
  \land z - z' + t {(x - y)}^2 = 0
  \land (x - x') - (y - y') = 0.
  \]
  That is, the loop may be summarized as having the effect
  \[
  w + t \leq w' \leq w + 2t \land z' = z + t{(x - y)}^2 \land x - y = x' - y'.
  \]
  % If we eliminate $t$ from $F^{\star}$, we obtain
  % \[
  % w' \geq w \land x - y = x' - y'.
  % \]
  % This tells us that the loop in Example~\ref{ex:star-running} increases $w$
  % and keeps $x - y$ invariant.
\end{mexample}

In the following, we prove that Algorithm~\ref{alg:exp}
is sound (Theorem~\ref{thm:star-soundness}),
complete (Lemma~\ref{lem:star-completeness}), and monotone
(Theorem~\ref{thm:star-monotone}).

% such that $\Cn{\texttt{exp}(F,t)}$ is the least regular cone containing $r - r' + ta$ for all $\tuple{r,a} \in \textit{rec}(F)$ -- i.e.,
% that could be regarded as a representation of the closed form of 
% the recurrence inequality entailed by \textit{all} recurrence differences of $F$.

\begin{algorithm}[t]
  \SetKwFunction{Fsymbinv}{inv-linear-funcs}
  \SetKwFunction{Frecurrence}{recurrence}
  \SetKwFunction{Fexp}{exp}
  \Fn{\Fexp{$F,t$}}{
    $\set{a_1,\dots,a_n} \gets$ basis for $\lininv(F)$\;
    $\tuple{Z,P} \gets \textit{consequence}(F)$\;
    $D = \set{ d_x : x \in X} \gets $ set of $|X|$ fresh variables\;
    $K = \set{k_1,\dots,k_n} \gets$ set of $n$ fresh variables\;
    Let $\delta_{\textit{inv}} : \mathbb{Q}[K \cup D] \rightarrow \mathbb{Q}[X \cup X']$ be the homomorphism mapping $d_x \mapsto x$ and $k_i \mapsto a_i$\;
    $\tuple{Z',P'} \gets \textit{inverse-hom}(Z,P,\delta_{inv},D \cup K)$\;
    $\tuple{V,R} \gets \textit{lin}(Z',P',D,K)$\;
    Define $\textit{cf}$ to be the map that sends $p \mapsto \delta(\pi_{D}(p)) + t \cdot \textit{inv}(\pi_K(p))$\;
    \Return{$\left(\bigwedge_{v \in V} 0 = \textit{cf}(v)\right) \land \left( \bigwedge_{r \in R} 0 \leq \textit{cf}(r)\right)$}
  }
  \caption{Approximation of $t$-fold composition of $F$  \label{alg:exp}}
\end{algorithm}

\begin{theorem}[Soundness] \label{thm:star-soundness}
  Given a transition formula $F$,  $F^t \models_{\ThZ} F^\star$ for all $t \in \mathbb{N}$.
\end{theorem}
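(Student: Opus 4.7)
The plan is to fix an arbitrary $n \in \mathbb{N}$ and to establish $F^n \models_{\ThZ} F^\star$ by witnessing the existential in $F^\star = \exists t.\, \Int(t) \land t \geq 0 \land \Fexp(F, t)$ with $t := n$. Since $n$ is a non-negative integer, the guard $\Int(t) \land t \geq 0$ is automatically discharged, so it suffices to show that $F^n$ entails each conjunct produced by Algorithm~\ref{alg:exp} after the substitution $t := n$: namely, $F^n \models_{\ThZ} 0 = \textit{cf}(v)|_{t := n}$ for each $v \in V$ and $F^n \models_{\ThZ} 0 \leq \textit{cf}(r)|_{t := n}$ for each $r \in R$.

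The first step is to unpack each $p \in V \cup R$. By Lemma~\ref{lem:lin-restrict} and Theorem~\ref{thm:inverse-ring-groebner}, $p$ lies both in $\regcone(Z', P') = \delta_{\textit{inv}}^{-1}(\Cn{F})$ and in $\mathbb{Q}[K] + \product{\mathbb{Q}}{D}$, so it decomposes uniquely as $p = \pi_{D}(p) + \pi_{K}(p)$. Setting $r_p \defeq \textit{pre}(\pi_{D}(p))$ and $a_p \defeq \textit{inv}(\pi_{K}(p))$, linearity of $\delta_{\textit{inv}}$ gives $\delta_{\textit{inv}}(p) = (r_p - r_p') + a_p$, and the definition of $\textit{cf}$ gives $\textit{cf}(p) = (r_p - r_p') + t \cdot a_p$. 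Because $\delta_{\textit{inv}}(p) \in \Cn{F}$, the one-step recurrence $F \models_{\ThZ} r_p' \leq r_p + a_p$ holds; for $p \in V$, both $p$ and $-p$ lie in $\product{\mathbb{Q}[K]}{V} \subseteq \units{\regcone(Z', P')}$, so antisymmetry upgrades this to the equality $F \models_{\ThZ} r_p' = r_p + a_p$. Furthermore, $a_p \in \textit{Inv}(F)$ by construction, whence $F \models_{\ThZ} a_p' = a_p$.

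The main step is an induction on $n$ lifting the one-step bound to an $n$-step bound: $F^n \models_{\ThZ} r_p' \leq r_p + n \cdot a_p$ (with equality for $p \in V$). The base case $n = 0$ is immediate from $F^0 \equiv \bigwedge_{x \in X} x = x'$. For the inductive step, decompose $F^{n+1} = F \circ F^n$ through an intermediate state $X''$: the $F$-factor contributes $r_p' \leq r_p'' + a_p''$, the induction hypothesis applied to the $F^n$-factor contributes $r_p'' \leq r_p + n \cdot a_p$, and a secondary induction using $F \models_{\ThZ} a_p' = a_p$ (which is preserved under sequential composition) gives $a_p'' = a_p$. Chaining these via compatibility, commutativity, and transitivity of $\leq$---exactly the $\ThZ$-axioms exploited in Theorem~\ref{thm:lrr-lra-complete}---produces $r_p' \leq r_p + (n+1) \cdot a_p$. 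Specializing $t := n$ in $\textit{cf}(p)$ recovers $(r_p - r_p') + n \cdot a_p$, matching the desired conjunct; the equality case for $p \in V$ is symmetric.

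I expect the induction step to be the main obstacle. The arithmetic itself is routine, but the manipulations must be justified using only $\ThZ$-axioms; in particular, the term $n \cdot a_p$ stands for $n$-fold iterated addition of the invariant $a_p$, so the chaining argument must be phrased as a finite sequence of compatibility and transitivity steps rather than appealing to any primitive scalar multiplication. Once this induction is in hand, soundness assembles immediately by conjoining the per-conjunct entailments.
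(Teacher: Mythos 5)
Your proof is correct and takes essentially the same approach as the paper's: both reduce soundness to showing that $F^m$ entails each conjunct of $\textit{exp}(F,m)$, each of which arises from a one-step recurrence $r' \leq r + a$ whose increment $a$ is an invariant polynomial of $F$. The only difference is that you make explicit the induction on $m$ (chaining through intermediate states and using $F \models_{\ThZ} a' = a$) that lifts the one-step recurrence to the $m$-step bound, a step the paper's proof asserts without elaboration.
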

\begin{proof}
  Fix any integer $m$. We just need to show that $F^{m} \models_{\ThZ}
  \Fexp(F, m)$.  $\Fexp(F, m)$ is a conjunction of predicates of the form $0 =
  \textit{cf}(v)$ for $v \in V$, and $0 \leq \textit{cf}(r)$ for $r \in R$
  (with $V, R$ as in \autoref{alg:exp}).  It is sufficient to show that $F^m$
  entails each such predicate.

  Suppose $v \in V$.  Then $F \models_{\ThZ} 0 = \textit{inv}(v)$, and so $F^m \models_{\ThZ}  \textit{inv}(v)$ and therefore
  $F^m \models_{\ThZ}  \textit{cf}(v)$.

  Suppose $r \in R$.  Then $F \models_{\ThZ} 0 \leq \delta(\pi_D(r)) + \textit{inv}(\pi_K(r))$.  Since we also have $F \models_{\ThZ} \textit{inv}(\pi_K(r)) = \textit{inv}(\pi_K(r))'$, we have $F^m \models_{\ThZ} 0 \leq \delta(\pi_D(r)) + m\textit{inv}(\pi_K(r))$, and therefore $F^m \models_{\ThZ} \textit{cf}(r)$.
  %% We know that for all $\tuple{r, a} \in \textit{rec}(F)$, $F^{m} \models_{\ThZ} r' \leq r + ma$,
  %% % Consider any algebraic cone $M$ that contains $r - r' + ta$ for all $\tuple{r,a} \in \textit{rec}(F)$.
  %% Thus $\textit{cf}(p)[t \mapsto m] \in \Cn{F^{m}}$ for all $p \in \product{\mathbb{Q}}{D} + \mathbb{Q}[K]$ such that $F \models_{\ThZ} 0 \leq p$.
  %% Invoking again \autoref{lem:lin-restrict}, we know that
  %% $\textit{consequence}(F) \cap (\product{\mathbb{Q}}{D} + \mathbb{Q}[K]) =  \product{\mathbb{Q}[K]}{V} + \cone{R}$ that contains
  %% all $v \in V$ and $r \in R$.
  %% Thus $\textit{cf}(-v)[t \mapsto m], \textit{cf}(v)[t \mapsto m] \in \Cn{F^{m}}$
  %% for all $v \in V$, and $\textit{cf}(r)[t \mapsto m] \in \Cn{F^{m}}$ for all $r \in R$.
  %% Thus $\Cn{F^{m}} \models_{\ThZ} \textit{exp}(F, m)$.
\end{proof}

% \zak{comment on this being basically soundness + completeness?  Or maybe break into two.  Might be easier to phrase as "for all $\tuple{r,a} \in \textit{rec}(F)$, we have $\textit{exp}(F,t) \models r' \leq r + ta$"}
\begin{lemma}[Completeness] \label{lem:star-completeness}
  Given any transition formula $F$,
  we have $\textit{exp}(F,t) \models_{\ThZ} r' \leq r + ta$ 
  for all $\tuple{r,a} \in \textit{rec}(F)$.
%   Also, if there is another $\ThZ$-formula $R$ such that
%   $R \models_{\ThZ} r' \leq r + ta$ 
%   for all $\tuple{r,a} \in \textit{rec}(F)$, then $\Cn{\textit{exp}(F, t)} \subseteq \Cn{R}$.
\end{lemma}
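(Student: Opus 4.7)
The plan is to chase the pipeline in Algorithm~\ref{alg:exp} and reconstruct any recurrent difference $\tuple{r,a} \in \textit{rec}(F)$ from the generators $\tuple{V, R}$ returned by the call to $\textit{lin}$. First I would let $p \in \widetilde{\textit{rec}}(F)$ be the representative of $\tuple{r, a}$ under $\textit{rep}$, so that $r = \textit{pre}(\pi_D(p))$ and $a = \textit{inv}(\pi_K(p))$. By the derivation in Section~\ref{sec:compute-rec-diff},
\[
\widetilde{\textit{rec}}(F) = \delta_{\textit{inv}}^{-1}(\Cn{F}) \cap (\mathbb{Q}[K] + \product{\mathbb{Q}}{D}).
\]
Theorem~\ref{thm:inverse-ring-groebner} applied to the $\textit{inverse-hom}$ call in Algorithm~\ref{alg:exp} gives $\regcone(Z', P') = \delta_{\textit{inv}}^{-1}(\Cn F)$, and Lemma~\ref{lem:lin-restrict} applied to $\tuple{V, R} = \textit{lin}(Z', P', D, K)$ then yields $p \in \product{\mathbb{Q}[K]}{V} + \cone{R}$. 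So I may write $p = v + \rho$ with $v = \sum_i q_i v_i$ ($q_i \in \mathbb{Q}[K]$, $v_i \in V$) and $\rho = \sum_j \lambda_j \rho_j$ ($\lambda_j \in \Qplus$, $\rho_j \in R$).

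Second, I would observe that $\textit{cf}$ is $\mathbb{Q}$-linear on $\mathbb{Q}[K] + \product{\mathbb{Q}}{D}$ (because $\pi_D, \pi_K$ are linear projections and $\delta, \textit{inv}$ are ring homomorphisms), and that unpacking definitions gives
\[
\textit{cf}(p) = \delta(\pi_D(p)) + t \cdot \textit{inv}(\pi_K(p)) = (r - r') + t a.
\]
Hence the target inequality $r' \leq r + ta$ is precisely $0 \leq \textit{cf}(p) = \textit{cf}(v) + \textit{cf}(\rho)$, and it suffices to show $\textit{exp}(F,t) \models_{\ThZ} 0 \leq \textit{cf}(\rho)$ and $\textit{exp}(F,t) \models_{\ThZ} \textit{cf}(v) = 0$.

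For the conic part, linearity gives $\textit{cf}(\rho) = \sum_j \lambda_j \textit{cf}(\rho_j)$, and each $0 \leq \textit{cf}(\rho_j)$ is an explicit conjunct of $\textit{exp}(F,t)$. For the unit part, the elimination ordering $\preceq_K$ used when $\textit{lin}$ reduces $\tuple{Z', P'}$ ensures that $V \subseteq \mathbb{Q}[K]$ (if $\leadmon(z) \in \mathbb{Q}[K]$ under $\preceq_K$, every monomial of $z$ has trivial $D$-part), so each $v_i$ satisfies $\pi_D(v_i) = 0$ and $\textit{cf}(v_i) = t \cdot \textit{inv}(v_i)$. The conjunct $0 = \textit{cf}(v_i)$ of $\textit{exp}(F,t)$ thus forces $t \cdot \textit{inv}(v_i) = 0$ modulo $\ThZ$. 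Since $v = \sum_i q_i v_i \in \mathbb{Q}[K]$, the ring-homomorphism property of $\textit{inv}$ and the commutative ring axioms of $\ThZ$ give
\[
\textit{cf}(v) = t \cdot \textit{inv}(v) = \sum_i \textit{inv}(q_i) \cdot \bigl(t \cdot \textit{inv}(v_i)\bigr) = 0
\]
modulo $\textit{exp}(F,t)$. Combining the two parts yields $\textit{exp}(F,t) \models_{\ThZ} 0 \leq \textit{cf}(p)$, as required.

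The main obstacle is largely bookkeeping: confirming that $V \subseteq \mathbb{Q}[K]$ (where the elimination ordering $\preceq_K$ pays off) so that the $\mathbb{Q}[K]$-module decomposition of $v$ passes cleanly through $\textit{cf}$, and invoking $\mathbb{Q}$-linearity of $\textit{cf}$ on the conic combination of $R$-generators. There is no new semantic content beyond the completeness of consequence-finding modulo $\ThZ$ (Algorithm~\ref{alg:consequence}) and the inverse-image/linear-restriction results already established.
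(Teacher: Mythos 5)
Your argument is correct and follows essentially the same route as the paper's proof: take the representative $p$ of $\tuple{r,a}$ in $\widetilde{\textit{rec}}(F)$, decompose it via $\product{\mathbb{Q}[K]}{V} + \cone{R}$ using the inverse-image and linear-restriction results, and push through $\textit{cf}$ using its linearity and the ring-homomorphism property of $\textit{inv}$ to land in the consequence cone of $\textit{exp}(F,t)$. You make one useful point slightly more explicit than the paper does (that the elimination ordering $\preceq_K$ guarantees $V \subseteq \mathbb{Q}[K]$, so $\pi_D(v_i) = 0$), but the structure of the argument is the same.
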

\begin{proof}
  % We prove that $\Cn{\textit{exp}(F,t)}$ is the least regular cone that contains $r - r' + ta$ for all $\tuple{r,a} \in \textit{rec}(F)$.
  We have established a bijection between $\textit{rec}(F)$ and $\widetilde{\textit{rec}}(F)$, and that
  $\widetilde{\textit{rec}}(F) = \product{\mathbb{Q}[K]}{V} + \cone{R}$, where $\tuple{V,R}$ is defined on line 8 of \autoref{alg:exp}.
  Suppose $\tuple{r,a} \in \textit{rec}(F)$.  Then there is some $p \in \widetilde{\textit{rec}}(F)$ such that 
  $\textit{rep}(p) = \tuple{r,a}$.
  Let $\textit{cf}$ be the map $p \mapsto \delta(\pi_D(p)) + t \cdot \textit{inv}(\pi_K(p))$, where 
  $t$ is the symbolic loop counter.
  Then $r - r' + ta = \textit{cf}(p)$.
  
  Since
  $p \in \product{\mathbb{Q}[K]}{V} + \cone{R} $, we have
  \[
  p = \sum q_i v_i + \sum \lambda_j r_j \ , \quad q_i \in \mathbb{Q}[K], v_i \in V, \lambda_j \in \Qplus, r_j \in R
  \]
  Since $\textit{cf}$ is linear, we can write 
   \[
   \textit{cf}(p) = \sum \textit{cf}(q_i v_i) + \sum \lambda_j \textit{cf}(r_j) \ , \quad q_i \in \mathbb{Q}[K], v_i \in V, \lambda_j \in \Qplus, r_j \in R
  \]
  
  We have $\textit{cf}(q_iv_i) = t \cdot \textit{inv}(q_iv_i)
  = t \cdot \textit{inv}(q_i)\textit{inv}(v_i) = \textit{inv}(q_i)\textit{cf}(v_i) \in \units{\Cn{\textit{exp}(F,t)}}$,
  since $\textit{inv}$ is a ring homomorphism and
  $\textit{cf}(v_i) \in \units{\Cn{\textit{exp}(F,t)}}$ by construction.
  Also for each $r_i \in R$, we have $\textit{cf}(r_i) \in \Cn{\textit{exp}(F,t)}$.
  Hence $r - r' + ta = \textit{cf}(p) \in \Cn{\textit{exp}(F,t)}$.
\end{proof}

\begin{theorem}[Monotonicity]
  If $F \models_{\ThZ} G$, then $F^\star \models_{\ThZ} G^\star$.
  \label{thm:star-monotone}
\end{theorem}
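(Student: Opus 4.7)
The plan is to prove $\textit{exp}(F,t) \models_{\ThZ} \textit{exp}(G,t)$ for a fresh symbol $t$; taking the same witness for the existential quantifier in $F^\star$ then yields $G^\star$. First I would observe that $F \models_{\ThZ} G$ gives $\lininv(G) \subseteq \lininv(F)$, since any linear $k$ with $G \models_{\ThZ} k' = k$ also has $F \models_{\ThZ} k' = k$. Consequently the subring $\textit{Inv}(G) \subseteq \mathbb{Q}[X]$ generated by $\lininv(G)$ is contained in $\textit{Inv}(F)$, and therefore $\textit{rec}(G) \subseteq \textit{rec}(F)$: for any $\tuple{r,a} \in \textit{rec}(G)$ we have $F \models_{\ThZ} r' \leq r + a$ (since $F$ entails $G$) and $a \in \textit{Inv}(G) \subseteq \textit{Inv}(F)$.

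Next I would unpack $\textit{exp}(G,t)$ using Algorithm~\ref{alg:exp} applied to $G$: it is the conjunction of equations $0 = \textit{cf}_G(v)$ for $v \in V_G$ and inequalities $0 \leq \textit{cf}_G(r)$ for $r \in R_G$, where $\tuple{V_G, R_G}$ represents $\widetilde{\textit{rec}}(G)$ by Lemma~\ref{lem:lin-restrict}. Via the bijection $\textit{rep}$, every $r \in R_G$ corresponds to a pair $\tuple{r_0, a_0} \in \textit{rec}(G)$ with $\textit{cf}_G(r) = (r_0 - r_0') + t \cdot a_0$, and every $v \in V_G$, lying in the unit component of $\widetilde{\textit{rec}}(G)$, corresponds to a pair such that both $\tuple{r_0, a_0}$ and $\tuple{-r_0, -a_0}$ lie in $\textit{rec}(G)$. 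The key step then invokes Lemma~\ref{lem:star-completeness} applied to $F$: because $\textit{rec}(G) \subseteq \textit{rec}(F)$, for each such $\tuple{r_0, a_0}$ we have $\textit{exp}(F,t) \models_{\ThZ} r_0' \leq r_0 + t a_0$. Applying this once for each $r \in R_G$ and twice (in both signs) for each $v \in V_G$ yields the corresponding conjunct of $\textit{exp}(G,t)$, so $\textit{exp}(F,t) \models_{\ThZ} \textit{exp}(G,t)$, from which $F^\star \models_{\ThZ} G^\star$ follows.

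The main obstacle is not deep: it is the bookkeeping around the bijection $\textit{rep}$, and noting that although the fresh $K$- and $D$-variables introduced internally by Algorithm~\ref{alg:exp} differ between runs on $F$ and on $G$, the outputs $\textit{exp}(F,t)$ and $\textit{exp}(G,t)$ are formulas over the common vocabulary $X \cup X' \cup \set{t}$, so pointwise comparison of conjuncts is unambiguous. All substantive mathematical content is already supplied by Lemma~\ref{lem:star-completeness} combined with the soundness of the $\textit{cf}_G$ translation.
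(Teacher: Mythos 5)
Your proposal is correct and follows essentially the same route as the paper's proof: establish $\textit{rec}(G) \subseteq \textit{rec}(F)$ from $F \models_{\ThZ} G$, apply Lemma~\ref{lem:star-completeness} to $F$ to obtain each conjunct of $\textit{exp}(G,t)$, and pass through the existential quantifier. You supply somewhat more detail than the paper does (in particular the observation that $\textit{Inv}(G) \subseteq \textit{Inv}(F)$ is needed for the inclusion of recurrent differences, and the explicit unpacking of the $\textit{rep}$ bijection for the $V_G$ and $R_G$ components), but the substance is the same.
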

\begin{proof}
  Since if $F \models_{\ThZ} G$ we must have $\textit{rec}(F) \supseteq \textit{rec}(G)$,
  so by Lemma~\ref{lem:star-completeness} we have
  $\Cn{\textit{exp}(F,t)} \supseteq \Cn{\textit{exp}(G,t)}$.  Since $\textit{exp}(G,t)$ is conjunction of inequalities, this means that  $\textit{exp}(F,t)$ entails each conjunct,
  and thus $\textit{exp}(F,t) \models \textit{exp}(G,t)$.  It follows that
  \[ F^\star = \exists t. \Int(t) \land t \geq 0 \land \textit{exp}(F,t) \models_{\ThZ} \exists t. \Int(t) \land t \geq 0 \land \textit{exp}(G,t) = G^\star\ . \qedhere\]
\end{proof}

%%% Local Variables:
%%% TeX-master: "main"
%%% End:

\section{Evaluation} \label{sec:evaluation}

We implemented the SMT solvers for the theory of linear real rings (Section~\ref{sec:rational}) and linear integer rings (Section~\ref{sec:integer}),
and the monotone procedure for loop invariant generation
as detailed in Section~\ref{sec:invgen}.
This section evaluates the practical performance of these procedures.

\nic{One might be curious how much precision we have lost when
using the weak theories as compared to the standard theories of real
and integer arithmetic for SMT tasks (Subsection~\ref{sec:smt-benchmark-perf}).}
% \begin{tcolorbox}
% RQ1: How does the solvers for $\ThZ$ and $\ThQ$ perform on SMT
% benchmarks designed for the standard theories of reals and integers?
% \end{tcolorbox}

% \zak{RQ2: I don't think that this experiment really addresses the question of whether $\ThZ$/$\ThQ$ retains the desirable properties of linear arithmetic.  This is a theoretical question, and we have theorems that address the claim.  We want to evaluate whether consequence-finding modulo $\ThZ$ is practical (i.e., can it derive interesting consequences, and does it perform well enough to enable interesting client applications).  We address these questions indirectly via one client application, which is the loop invariant generation procedure from Section~\ref{sec:invgen}, and take the ability to verify assertions as a proxy for deriving interesting consequences.  From this perspective, maybe the really interesting questions are how does it compare with non-linear CRA and linear CRA (using -monotone), since these are also recurrence based invariant generation schemes based on heuristic consequence-finding modulo non-linear arithmetic and complete consequence-finding modulo linear arithmetic (respectively).  UAutomizer/VeriAbs are a baseline comparison, not relevant to our research question, but important to give the reader context. }

\nic{Consequence-finding certainly is
interesting from a theoretical point of view, since it computes
a representation of everything that is implied by the theory $\ThZ$.
We are also interested in finding out about its practical value --- whether it indeed extracts
useful consequences that would be useful for client applications
like invariant generation (Subsection~\ref{sec:invgen-benchmark-perf}).}

We consider two questions.
Firstly, $\ThQ$ and $\ThZ$ are weaker than standard arithmetic and may admit
non-standard models; thus, how do they compare against standard theories for
SMT tasks (Subsection~\ref{sec:smt-benchmark-perf})?
Secondly, finding strongest consequences has great theoretical value,
but does that translate into practice for client applications like invariant
generation (Subsection~\ref{sec:invgen-benchmark-perf})?

% We have claimed that the weak theory retains many of the desirable
% properties of linear arithmetic. It would thus be interesting to see 
% how does weak theory invariant generation perform on verification
% benchmarks that require linear loop invariants, by comparing
% it against other invariant generation tools that are known to be competitive
% on the benchmark suite.
% \begin{tcolorbox}
% RQ2: Does consequence-finding extract useful consequences that benefit loop invariant generation or other client applications?
% \end{tcolorbox}

% The weak theory gives up on the precise semantics of multiplication
% in exchange for the capability to do consequence-finding, which
% is potentially beneficial for nonlinear reasoning.
% Now it is time to check if the weak theory really ``achieve more with less'',
% by examining the performance of weak theory invariant generation
% on verification tasks that require nonlinear reasoning.
% \begin{tcolorbox}
% RQ3: Is weak theory invariant generation competitive on verification
% benchmarks that require nonlinear loop invariants?
% \end{tcolorbox}

\subsection{Experimental Setup}

% \zak{We don't really want to \textit{stress} that the solver is weak}
\newcommand{\solvername}{Chilon}

\paragraph{Implementation}
We implemented an SMT solver for $\ThQ$ and $\ThZ$, which we call \solvername{}\footnote{The famous ancient Greek proverb ``less is more'', is attributed to Chilon of Sparta, one of the Seven Sages of Greece.}.
Our implementation relies on \texttt{Z3} (as a SAT solver)~\cite{de-moura-bjorner-2008}
%\footnote{https://github.com/Z3Prover/z3},
\texttt{Apron}/\texttt{NewPolka}~\cite{jeannet-antoine-2009}
%\footnote{https://github.com/antoinemine/apron}
%\footnote{http://pop-art.inrialpes.fr/people/bjeannet/newpolka/index.html}
for polyhedral operations,
\texttt{FLINT}~\cite{flint}
%\footnote{https://www.flintlib.org}
for lattice operations,
and \texttt{Normaliz}~\cite{Normaliz}
%\footnote{https://github.com/Normaliz/Normaliz}
for integer hull computation\nic{Technically it's now only Hilbert basis computation, with GC done in Duet; should we say this?}\zak{Really, I'd like an experiment that demonstrates the improvement  We should include it in the final version, but let's leave it for now}.
Based on \solvername{}, we implemented the consequence-finding
procedure as well as the approximate transitive closure operator for invariant
generation. We have also integrated the invariant
generation procedure into a static analysis framework that facilitates
evaluation on software verification benchmarks, which is
used as a proxy to evaluate the quality of the generated invariants.

\paragraph{Environment} 
We ran all experiments on an Oracle VirtualBox virtual machine with 
Lubuntu 22.04 LTS (Linux kernel version 5.15 LTS), with
a two-core Intel Core i5-5575R CPU @ 2.80 GHz and 4 GB
of RAM. All tools were run with the \texttt{benchexec} tool under 
a time limit of 2 minutes on all benchmarks.

\paragraph{SMT benchmarks and solvers} 
SMT tasks are taken from the 
quantifier-free nonlinear real arithmetic and nonlinear integer arithmetic
(\texttt{QF_NRA} and \texttt{QF_NIA}) divisions of SMT-COMP 2021\footnote{https://smt-comp.github.io/2021/benchmarks.html}.
For each division, we randomly draw the same number of tasks
from each benchmark suite, resulting in about $100$ tasks.
We compare the performance of \solvername{} on these tasks against
other solvers for standard theories of arithmetic, including
Z3 4.8.14, MathSAT 5.6.7, CVC4 1.8, and Yices 2.6.4. 

\paragraph{Invariant generation benchmarks and tools}
Invariant generation tasks are the safe, integer-only\footnote{That is, the error location is unreachable, and all variables are integer-typed.} benchmarks from
the \texttt{c/ReachSafety-Loops} subcategory of
SV-COMP 2022\footnote{https://sv-comp.sosy-lab.org/2022/}. 
Selected invariant generation tasks from folder \texttt{nla-digbench} and \texttt{nla-digbench-scaling}
constitute the \texttt{nonlinear} benchmark suite, while all other tasks
form the \texttt{linear} suite in our evaluation.
We compare against CRA 
(or Compositional Recurrence Analysis \cite{FMCAD:FK2015}), another invariant
generation tool based on analyzing recurrence relations;
VeriAbs 1.4.2, the winner for the
\texttt{ReachSafety} category of SV-COMP;
and Ultimate Automizer 0.2.2, which performed best on the \texttt{nla-digbench} suite.

\subsection{RQ1: How does \solvername{} perform on standard SMT benchmarks?}
\label{sec:smt-benchmark-perf}

Table~\ref{tab:weak-solver-smt} records the results of running the five
solvers including the weak theory solver on nonlinear SMT benchmarks.
Since the reals are a model of $\ThQ$, if a formula is unsatisfiable
modulo $\ThQ$ then it is unsatisfiable modulo \NRA{}, but the converse does not hold
i.e., using a $\ThQ$-solver on NRA tasks can give false SAT results,
but not false UNSAT results).  The same holds for $\ThZ$ and \NIA{}.

% \zak{What we're interested in is (1) the false positive rate (how often do we say SAT when the answer is UNSAT), (2) the true negative rate (how often do we say UNSAT when the answer is UNSAT), and (3) overall performance.  It's hard to read this information off of table 2.  The weak solver should have a columns for \#Correct, \#Incorrect, and \#Timeout and all other should have just \#Correct/\#Timeout.  We shouldn't exclude timeouts from the total time.  }

\solvername{} does not appear to be competitive for either
\texttt{QF_NRA} or \texttt{QF_NIA}.
It proves UNSAT for
$6$ out of $40$ and $15$ out of $41$ tasks in the \texttt{QF_NRA} and
\texttt{QF_NIA} suites, respectively, much lower than other SMT solvers we
compared against.
\solvername{} reports false positives on
$23$ out of $40$ \textit{UNSAT} tasks 
in \texttt{QF_NRA} and $15$ out of $41$ \textit{UNSAT} tasks in \texttt{QF_NIA}.
It performs particularly poorly on crafted benchmark suites, in which
there is often a need to reason about the interaction between multiplication
and the order relation, which is not axiomatized by our theories.  It performs particularly well on
verification tasks (e.g., the \texttt{hycomp} suite in \texttt{QF_NRA}, and the
\texttt{LassoRanker} and \texttt{UltimateLassoRanker} suites in \texttt{QF_NIA}),
but all other solvers we tested also performed well on these tasks.
In terms of running time and number of timeouts,
\solvername{} is competitive with other solvers.
We note that there is substantial room for improving the performance of \solvername{};
in particular, it does not tightly integrate the theory solver with the underlying SAT solver as
do most modern SMT solvers.

The experimental results suggest that the theories of linear integer / real rings is
not generically suitable as an alternative to the theory of integers / reals for applications
that only require a SAT or UNSAT answer.

\begin{table}[ht]
\centering
\caption{\label{tab:weak-solver-smt} 
Comparison of \solvername{} with other SMT solvers on SMT-COMP benchmarks.  
The ``\#P'' column denotes the number of proved tasks, and the ``\#E''
column denotes the number of tasks on which a solver times out / runs out of
memory. 
(For UNSAT tasks, \solvername{} may find a $\ThQ{}/\ThZ{}$ model when no NRA/NIA model exists -- these account for the discprency between the number of tasks and \#P+\#E).
The reported time is total aggregate, measured in seconds and averaged across 3 runs.  The maximum standard deviation in runtime across all benchmarks is Chilon: 7.64, Z3: 26.04, MathSat: 0.58, CVC4: 3.89, Yices 2.11.}
\tiny
 \resizebox{\columnwidth}{!}{

\begin{tabular}{@{}ccc|c@{\hspace*{2pt}}c@{\hspace*{2pt}}r|c@{\hspace*{2pt}}c@{\hspace*{2pt}}r|c@{\hspace*{2pt}}c@{\hspace*{2pt}}r|c@{\hspace*{2pt}}c@{\hspace*{2pt}}r|c@{\hspace*{2pt}}c@{\hspace*{2pt}}r@{}}
\toprule
 & & & \multicolumn{3}{c|}{Chilon} & \multicolumn{3}{c|}{Z3} & \multicolumn{3}{c|}{MathSAT} & \multicolumn{3}{c|}{CVC4} & \multicolumn{3}{c}{Yices}\\
suite & label &\#task & \#P  & \#E & time   & \#P & \#E & time &  \#P & \#E & time  & \#P & \#E & time  & \#P & \#E & time \\\midrule
\multirow{3}{*}{QFNRA} & SAT & 38 & - & 10/0 & 1215.9  & 36 & 2/0 & 382.5  & 15 & 21/0 & 2603.8 &  9 & 28/0 & 3406.3  & 29 & 9/0 & 1105.5 \\
& UNSAT & 40 & 6 & 11/0 & 1369.1 & 30 & 9/1 & 1193.7 & 29 & 10/0 & 1297.9 &  31 & 9/0 & 1196.2 & 32 & 8/0 & 1073.5 \\
& ? & 27 & - & 14/2 & 1787.8 & - & 12/0 & 1495.7 &  - & 18/0 & 2225.7  & - & 12/0 & 1838.2  & - & 13/0 & 1645.1 \\
\midrule
\multirow{3}{*}{QFNIA} & SAT & 29 & - & 13/0 & 1595.9  & 26 & 3/0 & 539.7  & 21 & 8/0 & 1067.0 &  17 & 7/0 & 907.3  & 19 & 10/0 & 1211.6 \\
& UNSAT & 41 & 14 & 11/0 & 1347.3  & 36 & 5/0 & 662.9 & 37 & 4/0 & 594.5 & 36 & 1/0 & 252.7 & 36 & 5/0 & 617.2 \\
& ? & 28 & - & 17/0 & 2260.5  & - & 11/0 & 1361.5 &  - & 12/0 & 1550.0  & - & 12/0 & 1467.4 & - & 12/0 & 1635.3 \\
\bottomrule
\end{tabular}

%\begin{tabular}{@{\hspace*{2pt}}ccc|cccr|ccr|ccr|ccr|ccr}
%\toprule
%\multirow{2}{*}{suite} & \multirow{2}{*}{answer} & \multirow{2}{*}{\#} & \multicolumn{4}{c|}{\solvername{}} & \multicolumn{3}{c|}{Z3} & \multicolumn{3}{c|}{MathSAT} & \multicolumn{3}{c|}{CVC4} & \multicolumn{3}{c}{Yices} \\
 %&  &  & \#P & \#W & \#E & time & \#P & \#E & time & \#P & \#E & time & \#P & \#E & time & \#P & \#E & time \\ \midrule
%\multirow{4}{*}{NRA} & SAT & 38 & - & 0 & 9/0 & 1090 & 36 & 2/0 & 411 & 15 & 21/0 & 3510 & 8 & 29/0 & 3510 & 29 & 9/0 & 1120 \\
% & UNSAT & 40 & 6 & 23 & 11/0 & 1390 & 30 & 4/6 & 869 & 29 & 10/0 & 1290 & 31 & 9/0 & 1200 & 32 & 8/0 & 1090 \\
% & ? & 27 & - & - & 10/0 & 1300 & - & 12/0 & 1500 & - & 18/0 & 2220 & - & 12/0 & 1830 & - & 13/0 & 1660 \\ 
% & Total & 105 & 6 & 23 & 30/0 & 3780 & 66 & 18/6 & 2780 & 44 & 49/0 & 6110 & 39 & 50/0 & 6540 & 61 & 30/0 & 3870 \\ \midrule
%\multirow{4}{*}{NIA} & SAT & 29 & - & 0 & 12/0 & 1510 & 27 & 2/0 & 486 & 21 & 8/0 & 1060 & 17 & 7/0 & 909 & 19 & 10/0 & 1210 \\
% & UNSAT & 41 & 15 & 15 & 4/0 & 589 & 36 & 5/0 & 667 & 37 & 4/0 & 594 & 36 & 1/0 & 265 & 36 & 5/0 & 628 \\
% & ? & 28 & - & - & 9/0 & 1091 & - & 11/0 & 1357 & - & 12/0 & 1546 & - & 12/0 & 1466 & - & 13/0 & 1782 \\ 
% & Total & 98 & 15 & 15 & 25/0 & 3190 & 63 & 18/0 & 2510 & 58 & 24/0 & 3200 & 53 & 20/0 & 2640 & 55 & 28/0 & 3620 \\ 
%\bottomrule
%\end{tabular}
 }
\end{table}

\begin{table}[]
    \centering
    \small
    \caption{
   Comparison of recurrence-based invariant generation schemes.
   The reported time is total aggregate, measured in seconds and averaged across 3 runs,
with the number of timeouts enclosed in parentheses.  The maximum standard deviation in runtime across all benchmarks is Chilon: 2.43, CRA-lin: 0.08, CRA: 0.71.
}
    \label{tab:inv-gen-different-conseq}
\begin{tabular}{@{}lc|c@{}r|c@{}r|c@{}r@{}}
\toprule
 & & \multicolumn{2}{c|}{Chilon-inv} & \multicolumn{2}{c|}{CRA-lin} & \multicolumn{2}{c}{CRA}\\
 & \#tasks & \#correct & time & \#correct & time & \#correct & time\\\midrule
\texttt{linear} & 178 & 117 & 979.4 (5) & 109 & \textbf{807.9} (4) & \textbf{140} & 1258.4 (6)\\
\texttt{nonlinear} & 290 & \textbf{232} & \textbf{2265.6} (15) & 27 & 2621.7 (19) & 90 & 11372.4 (63)\\
\bottomrule
\end{tabular}

\end{table}

\begin{table}[ht]
    \centering
    % \scriptsize
    \caption{
    Improvement by path refinement that make use of invariants generated
    using \solvername{}-enabled consequence-finding, along with
    comparisons with other state-of-the-art tools.
    The reported time is total aggregate, measured in seconds and averaged across 3 runs. The maximum standard deviation in runtime across all benchmarks is Chilon: 2.43, Chilon-inv + Refine: 0.15, UAutomizer: 5.00, VeriAbs: 0.00.
}
    \label{tab:inv-gen-path-ref}
    \resizebox{\columnwidth}{!}{
\begin{tabular}{@{}lc|c@{}r|c@{}r|c@{}r|c@{}r@{}}
\toprule
 & & \multicolumn{2}{c|}{Chilon-inv} & \multicolumn{2}{c|}{Chilon-inv + Refine} & \multicolumn{2}{c|}{UAutomizer} & \multicolumn{2}{c}{VeriAbs}\\
 & \#tasks & \#correct & time & \#correct & time & \#correct & time & \#correct & time\\\midrule
\texttt{linear} & 178 & 117 & \textbf{979.4} (5) & 140 & 1295.1 (5) & 121 & 8873.1 (56) & \textbf{141} & 6479.7 (36)\\
\texttt{nonlinear} & 290 & 232 & \textbf{2265.6} (15) & \textbf{233} & 2352.2 (16) & 183 & 16611.6 (107) & 172 & 17375.4 (118)\\
\bottomrule
\end{tabular}

% with CRA
% \begin{tabular}{@{}lc|c@{}r|c@{}r|c@{}r|c@{}r|c@{}r@{}}
% \toprule
% & & \multicolumn{2}{c|}{\solvername{}-inv} & \multicolumn{2}{c|}{\solvername{}-inv + Refine} & \multicolumn{2}{c|}{CRA} & \multicolumn{2}{c|}{UAutomizer} & \multicolumn{2}{c}{VeriAbs}\\
% & \#tasks & \#solved & time & \#solved & time & \#solved & time & \#solved & time & \#solved & time\\\midrule
% \texttt{linear} & 178 & 119 & 821.8 (3) & 142 & \textbf{838.0} (4) & \textbf{156} & 1059.4 (4) & 122 & 8475.2 (56) & 141 & 6681.1 (36)\\ \midrule
% \texttt{nonlinear} & 290 & 230 & \textbf{1437.0} (8) & \textbf{233} & 2470.4 (16) & 91 & 11836.0 (66) & 183 & 16368.7 (107) & 172 & 17526.8 (118)\\
% \bottomrule
% \end{tabular}
}
\end{table}

\subsection{RQ2: Does Consequence-Finding Extract Useful Consequences That Benefit Client Applications?}
\label{sec:invgen-benchmark-perf}
In this subsection, we evaluate the ability of our consequence-finding procedure
to find ``useful'' consequences.
We evaluate this via the invariant generation scheme described in
Section~\ref{sec:invgen}, implemented in the tool
\solvername{}-inv.
We compare with two similar
recurrence-based invariant generation techniques, both of which
rely on consequence-finding:
CRA-lin \cite{FMCAD:FK2015} uses a complete consequence-finding
procedure modulo linear arithmetic; and CRA \cite{PACMPL:KCBR18} uses
a heuristic consequence-finding procedure modulo non-linear arithmetic.
We also evaluate a second configuration of
\solvername{}-inv, which uses a refinement algorithm from
\cite{POPL:KBCR2019Ref} to improve analysis precision; this is guaranteed
because the analysis that \solvername{}-inv implements is monotone.
%analysis precision---the
%fact it is guaranteed to improve precision relies on
%the analysis that \solvername{}-inv implements being monotone.

%% invariant generation scheme based on
%% extracting recurrence relations \cite{PLDI:KBFR17}, and 
%% a scheme that refines path expressions based on invariants 
%% so that algebraic program analysis yields better results
%% \cite{POPL:KBCR2019Ref,kincaid2021algebraic}.

%% These tools follow a similar scheme to generate loop invariants,
%% in that they all extract consequences of the transition formula
%% representing the loop body, and derive invariant equations and inequations
%% based on the consequences. The difference is that
%% \solvername{}-inv implements the approximate transitive closure operator 
%% in \autoref{sec:invgen}, which uses the complete consequence-finding procedure
%% modulo theory $\ThZ$; CRA-lin \cite{FMCAD:FK2015} uses a complete consequence-finding
%% procedure modulo linear arithmetic; and CRA \cite{PACMPL:KCBR18} uses
%% a heuristic consequence-finding procedure modulo non-linear arithmetic.

\autoref{tab:inv-gen-different-conseq} compares the performance of
invariant generation using \solvername{} with other methods that utilize
consequence-finding.
Results show that \solvername{}-inv indeed performs strictly better 
than CRA-lin on both suites, as expected, since the consequence-finding
in \solvername{}-inv should be more powerful than that of CRA-lin,
and the class of recurrences it considers is more general.
The difference in capability is quite small for the \texttt{linear} suite
but substantial for the \texttt{nonlinear} suite.
The heuristics implemented in CRA are effective for linear
invariant generation, making it the most powerful tool among the three.
The heuristics are less effective for \texttt{nonlinear} suite,
and \solvername{}-inv performs the best.
This experiment demonstrates that the complete consequence-finding procedure
modulo theory $\ThZ$ indeed yields useful consequences for an
invariant generation scheme, especially in contexts where 
non-linear invariants are required.

\autoref{tab:inv-gen-path-ref} provides context by comparing with state-of-the-art solvers.
VeriAbs is a portfolio verifier that employs a variety of techniques,
such as bounded model checking, $k$-induction, and loop summarization~\cite{VERIABS-SVCOMP21}.
It can in particular summarize a numeric loop by accelerating
linear recurrences involving only constants or variables that are unmodified within the
loop~\cite{Darke2015}.
As mentioned in Section~\ref{sec:linear-inv},
our scheme generalizes this (and CRA) to polynomial invariants.
Ultimate Automizer implements a counterexample-guided abstraction refinement
algorithm following the trace abstraction paradigm~\cite{Heizmann2009}.
The invariants it finds depend on the properties of interest
(the assertions to be proved) and the algorithm used to compute interpolants~\cite{UAUTOMIZER-COMP18}.
\solvername{}-inv performs 
well on both suites compared to these %baseline 
tools. 
With path refinement,
\solvername{}-inv performs even better (e.g., by enabling it to find disjunctive invariants), at a cost of time overhead.

\nic{Maybe a bit more detail about \solvername{}-inv? In particular,
  does this mean using path refinement fed back into \solvername{} for invariant generation,
  which then generate better invariants for path refinement again?
}
\zak{Refinement is a domain construction, which takes a star operator as input and returns another as output.  Iterating the construction is a no-op.}

\section{Related Work} \label{sec:relwork}

\paragraph{Real arithmetic}
The decidability of the theory of real closed fields is a classical result due to \citet{Tarski1949} and \citet{Seidenberg1954}.  Practical (complete) algorithms for this theory are based on cylindrical algebraic decomposition (CAD) \cite{Collins1975, IJCAR:JdeM12, JSC:KA2020}.  Due to the high computational complexity of decision procedures for the reals, a number of (incomplete) heuristic techniques have been devised
\cite{CSL:Tiwari2005,CAV:TL2014,LPAR:ZM2010}.  Similar to our approach,  \citet{CSL:Tiwari2005}'s method combines techniques from Gr\"{o}bner bases and linear programming; however, the result of the combination is a semi-decision procedure for the existential theory of the reals, whereas we achieve a decision procedure for a weaker theory.

The \textit{$\delta$-complete} decision procedure for the existential theory of the reals presented in \cite{IJCAR:GAC2012} is similar in spirit to our work, in that the method gives up completeness in the classical sense while retaining a weaker version of it.  Rather than using a standard model of the reals and relaxing the definition of satisfaction (each constraint is ``within $\delta$'' of being satisfied), we take the approach of using classical first-order logic, but admit non-standard models.

Positivestellens\"{a}tze are a class of theorems that characterize sets of positive polynomial consequences of a system of inequalities over the reals (or a real closed field)~\cite{krivine-positivstellensatz}.
Lemma~\ref{lem:consequence-cone} might be thought of as an analogue of a Positivestellensatz for
$\ThQ$.
Putinar's Positivestellensatz \cite{Putinar1993} bears particular resemblance to our results: it asserts that the entailment
$\bigwedge_{z \in Z} z = 0 \land \bigwedge_{p \in P} p \geq 0 \models_{\NRA} q \geq 0$ holds exactly when $q$ is in the \textit{quadratic module} generated by $Z$ and $P$ (provided that certain technical restrictions implying compactness are satisfied).
The quadratic module generated by $Z$ and $P$ is
$\qideal{Z} + \product{\Sigma^2[X]}{P \cup \set{1}}$ (where $\Sigma^2[X]$ is the set of sum-of-squares polynomials over $X$), mirroring the structure of algebraic cones, but with sum-of-squares polynomials in place of non-negative rationals.
Every quadratic module over $\mathbb{R}[X]$ is also regular algebraic cone (following from the fact that its additive units form an ideal \cite[Prop 2.1.2]{Book:Marshall2008}) (but not vice versa).

%Since $\mathbb{Q}[X]$ is both a linear space over $\mathbb{Q}$ and a ring, it inherits two
%different meanings of the term \textit{cone}. 
%A cone of a ring (sometimes called a preordering) is a subset that is closed under addition, multiplication, and contains all squares
%\cite{}.  A prominent use of cones (in the ring-theoretic sense) is in various

\paragraph{Integer arithmetic}
Unlike the case of the reals, the theory consisting of $\sigma_{or}$-sentences that hold over the integers is undecidable (in fact, not even recursively axiomatizable).
However, a number of effective heuristics have been proposed, including combining real relaxation with branch-and-bound \cite{CASC:KCA2016,VMCAI:Jovanovic2017},
bit-blasting \cite{SAT:FGMSTZ2007}, and linearization \cite{CADE:BLNRR2009,TCL:BLROR2019}.
Linearization shares the idea of using linear arithmetic to reason about non-linear arithmetic.  However, our approaches are quite different: \citet{CADE:BLNRR2009,TCL:BLROR2019} essentially restrict the domain of constant symbols to finite ranges (making the approach sound but incomplete for satisfiability), whereas our approach is sound but incomplete for validity.

\paragraph{Non-linear invariant generation}
There are several abstract domains that are capable of representing conjunction of polynomial inequalities \cite{PACMPL:KCBR18,SAS:Colon2004,CAV:GG2008,SAS:BRZ2005}. Such domains incorporate ``best effort'' techniques for reasoning about non-linear arithmetic.  Notably, \citet{PACMPL:KCBR18} and \citet{SAS:BRZ2005} combine techniques from commutative algebra and polyhedral theory.  Our approach differs in that we designed \textit{complete} inference, albeit modulo a weak theory of arithmetic.

There is a line of work on \textit{complete} algorithms for finding invariant polynomial equations of (restricted) loops \cite{ISAAC:RCK2004,TACAS:Kovacs2008,VMCAI:HJK2018,LICS:HOPW2018}.
Another approach is to reduce polynomial invariant generation to linear invariant generation by introducing new dimensions, which provides completeness up to a degree-bound
\cite{POPL:MOS04,ATVA:OBP2016}.
    \citet{PLDI:CFGG2020} obtains a completeness (up to technical parameters) result for template-based generation of invariant polynomial inequalities, based on a ``bounded'' version of Putinar's Positivestellensatz.
Lemma~\ref{lem:star-completeness} is a kind of completeness result, which is relative to a class of recurrences, rather than an invariant ``shape.''

% \citet{ATVA:OBP2016} \nic{does not handle integers soundly; and only polynomial equalities,}

% ZK: Omit since (counter to my memory) it doesn't seem to do non-linear invariants
%% \citet{POPL:JSS2014} approximates the transitive closure of a linear loop by
%% computing the Jordan normal form of its body and abstracting its symbolic powers
%% into an abstract matrix, which represents a set of matrices whose
%% entries satisfy affine inequalities of a chosen form,
%% e.g., those of a fixed polyhedron.
%% \nic{Going by Section 5.2 in the paper, I think the shape of the polyhedron
%%   is fixed, and it is the bounds (its size) that are discovered; the polyhedron
%%   covers the powers of the transition matrices viewed as a vector/point.}.
%% In particular, choosing an octagon allows differences of the form
%% $r' \leq r + a$, for variable $r$ and $a \in \mathbb{Q}$ (depending on the
%% eigenvalues of the Jordan form), to be discovered.
%% Our work generalizes $a$ to symbolic polynomials, and also provides a
%% completeness guarantee (Lemma~\ref{lem:completeness}).
%% \nic{Their method is mathematically complete in that $a$ exists as a supremum
%%   (Theorem 2), but it's not effectively complete (``reasonable bounds'' in
%%   paragraph just below).}

\paragraph{Consequence-finding}
A key feature of the arithmetic theories introduced in this paper is that they enable complete methods for finding and manipulating the set of consequences of a formula (of a particular form).  In the setting of abstract interpretation, this problem is known as \textit{symbolic abstraction} \cite{VMCAI:RSY2004,Thesis:Thakur14}, and is phrased as the problem of computing the (ideally, best) approximation of a formula within some abstract domain.  Symbolic abstraction algorithms are known for
predicate abstraction \cite{CAV:GS1997},
equations \cite{JAR:BB2014},
affine equations \cite{VMCAI:RSY2004, ENTCS:TLLR13},
template constraint domains (intervals, octagons, etc) \cite{POPL:LAKGC2014},
and convex polyhedra \cite{FMCAD:FK2015}.
\citet{PACMPL:KCBR18} gives a symbolic abstraction procedure for the wedge domain (which can express polynomial inequalities); this procedure is ``best effort,'' whereas our consequence-finding algorithm offers completeness guarantees.

It is interesting to note that our lazy consequence-finding algorithm (Algorithm~\ref{alg:consequence}) can be seen as an instantiation of \citet{VMCAI:RSY2004}'s symbolic abstraction algorithm.  The termination argument of this algorithm relies on the abstract domain satisfying the ascending chain condition, which algebraic cones do \textit{not}; instead, we exploit the fact that we can compute \textit{minimal} models of $\ThQ$ / $\ThZ$, and each formula has only finitely many.

\bibliography{references}

\clearpage
\section{Appendix}

This appendix supplements proofs.

\printProofs

\begin{lemma}
  \label{lem:eval-ring-hom}
  Let $X, Y$ be disjoint sets of variables,
  $f: \mathbb{Q}[X, Y] \to \mathbb{Q}[X]$ be a ring homomorphism
  such that $f(x) = x$ for all $x \in X$ and
  $f(y) \in \mathbb{Q}[X]$ for all $y \in Y$.
  Let $T = \set{y - f(y) : y \in Y}$.
  Then for all $p \in \mathbb{Q}[X, Y]$,
  $p - f(p) \in \product{\mathbb{Q}[X, Y]}{T}$.
\end{lemma}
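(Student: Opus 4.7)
The plan is to prove the statement by induction on the structure of $p$, exploiting the fact that $\product{\mathbb{Q}[X,Y]}{T}$ is an ideal (closed under sums and under multiplication by arbitrary elements of $\mathbb{Q}[X,Y]$), and that the map $p \mapsto p - f(p)$ behaves well with respect to the ring operations.

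First I would handle the base cases. For any rational constant $c$, since $f$ is a ring homomorphism we have $f(c) = c$, so $c - f(c) = 0 \in \product{\mathbb{Q}[X,Y]}{T}$. For any $x \in X$, by hypothesis $f(x) = x$, so $x - f(x) = 0$. For any $y \in Y$, we have $y - f(y) \in T \subseteq \product{\mathbb{Q}[X,Y]}{T}$ directly by the definition of $T$.

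For the inductive step, I would establish closure under sums and products. Closure under sums is immediate from linearity: if $p - f(p), q - f(q) \in \product{\mathbb{Q}[X,Y]}{T}$, then $(p+q) - f(p+q) = (p - f(p)) + (q - f(q))$ lies in the ideal. For products, the key algebraic identity is
\[
pq - f(pq) = pq - f(p)f(q) = p\bigl(q - f(q)\bigr) + f(q)\bigl(p - f(p)\bigr),
\]
which is an $\mathbb{Q}[X,Y]$-linear combination of elements already in $\product{\mathbb{Q}[X,Y]}{T}$ by the inductive hypothesis, and hence belongs to the ideal.

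Since every element of $\mathbb{Q}[X,Y]$ is built from rationals and variables in $X \cup Y$ using finitely many applications of addition and multiplication, the induction covers all $p \in \mathbb{Q}[X, Y]$ and the lemma follows. There is no real obstacle here; the only thing to be careful about is the product identity, but it is a standard commutator-style trick that works precisely because $f$ is a ring homomorphism.
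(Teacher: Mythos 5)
Your proof is correct and is essentially the argument the paper intends: the paper's one-line proof ("$f$ is a ring homomorphism and hence a congruence on terms") is shorthand for exactly the structural induction you carry out, with the product identity $pq - f(p)f(q) = p\,(q - f(q)) + f(q)\,(p - f(p))$ being the only nontrivial step. You have simply written out in full what the paper leaves implicit.
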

\begin{proof}
  Follows by noting that $f$ is a ring homomorphism and hence a congruence on terms.
\end{proof}

\end{document}